\newcommand{\xtl}{{X^n_{i-1}}}
\newcommand{\xtr}{X^n_i}
\newcommand{\xtlj}{{X^n_{j-1}}}
\newcommand{\xtrj}{X^n_j}
\newcommand{\thetan}{{\theta_0}}
\newcommand{\varepsilonn}{{\varepsilon_0}}
\newcommand{\alphan}{{\alpha_0}}
\newcommand{\betan}{{\beta_0}}
\newcommand{\thetak}{{\theta_k}}
\newcommand{\tminus}{{t_{i-1}^n}}
\newcommand{\tplus}{{t_i^n}}
\newcommand{\EE}{{\mathbb E}}
\newcommand{\PP}{{\mathbb P}}
\newcommand{\RR}{{\mathbb R}}
\newcommand{\NN}{{\mathbb N}}
\newcommand{\M}{{\mathbf M}}
\renewcommand{\S}{{\mathbf S}}
\newcommand{\W}{{\mathbf W}}
\newcommand{\X}{{\mathbf X}}
\newcommand{\aaa}{{\mathcal{A}}}
\newcommand{\cc}{{\mathcal{C}}} 
\newcommand{\dd}{{\mathcal{D}}}
\newcommand{\ff}{{\mathcal{F}}}  
\newcommand{\ii}{{\mathcal{I}}}  
\renewcommand{\ll}{{\mathcal{L}}} 
\newcommand{\nn}{{\mathcal{N}}}  
\newcommand{\pp}{{\mathcal{P}}}  
\newcommand{\qq}{{\mathcal{Q}}}  
\renewcommand{\ss}{{\mathcal{S}}} 
\newcommand{\xx}{{\mathcal{X}}} 
\newcommand{\ww}{{\mathcal{W}}} 
\declaretheorem[numberwithin = section]{theorem}
\declaretheorem[numberlike = theorem]{lemma}
\declaretheorem[numberlike = theorem]{corollary}
\declaretheorem[numberlike = theorem]{assumption}
\declaretheorem[numberlike = theorem]{condition}
\declaretheorem[numberlike = theorem, style = definition]{definition}
\declaretheorem[numberlike = theorem, style = remark]{example}
\declaretheorem[numberlike = theorem, style = remark]{remark}
\newcommand{\dqed}{{\leavevmode \unskip \penalty9999 \hbox{} \nobreak \hfill \quad \hbox{$\diamond$}}}
\newcommand{\cqed}{{\leavevmode \unskip \penalty9999 \hbox{} \nobreak \hfill \quad \hbox{$\circ$}}}
\numberwithin{equation}{section}
\numberwithin{theorem}{section}
\title{Estimating functions for jump-diffusions}
\author[1,2]{\small Nina Munkholt Jakobsen \thanks{nmja@dtu.dk}}
\author[2]{\small Michael S\o rensen \thanks{michael@math.ku.dk}}
\affil[1]{\footnotesize Department of Applied Mathematics and Computer
  Science, Technical University of Denmark, Richard Petersens Plads,
  Building 324, DK-2800 Kgs. Lyngby, Denmark}
\affil[2]{\footnotesize Department of Mathematical Sciences,
  University of Copenhagen, Universitetsparken 5, DK-2100 Copenhagen
  {\O}, Denmark}
\date{\today}
\begin{document} 
\maketitle
\begin{abstract}
Asymptotic theory for approximate martingale estimating functions is generalised to diffusions with finite-activity jumps, when the sampling frequency and terminal sampling time go to infinity. Rate optimality and efficiency are of particular concern. Under mild assumptions, it is shown that estimators of drift, diffusion, and jump parameters are consistent and asymptotically normal, as well as rate-optimal for the drift and jump parameters. Additional conditions are derived, which ensure rate-optimality for the diffusion parameter as well as efficiency for all parameters. The findings indicate a potentially fruitful direction for the further development of estimation for jump-diffusions.\\\\
{\bf Keywords:} Approximate martingale estimating function, diffusion with jumps, discrete-time sampling, efficiency, optimal rate, stochastic differential equation.\\\\
{\bf Running title:} Estimation for jump-diffusions.
\end{abstract}
\newpage

\section{Introduction}
When modelling phenomena in continuous time, diffusions with jumps are a natural generalisation or improvement of continuous diffusion processes driven by Wiener noise, or of pure-jump processes. Jump-diffusion models find application in, among other fields, biology \cite{golden2017}, neuroscience \cite{musila1991, giraudo1997, patel2008, jahn2011, ditlevsen2013}, finance \cite{merton1976, dejong2001, kou2002}, and engineering \cite{hermann2018}. Statistical inference for diffusions with jumps raises a broad spectrum of intriguing challenges. The models have continuous-time dynamics, but sampling in continuous time is not feasible. Furthermore, a closed-form expression for the likelihood function based on discrete-time observations is not available, rendering maximum likelihood estimation impracticable. \medskip

A number of estimation approaches based on discrete-time observations have previously been proposed in the literature. A non-exhaustive list of references includes the following. In the context of parametric estimation, pseudo-likelihood methods involving, primarily, Gaus\-sian-in\-spi\-red approximations of the log-li\-ke\-li\-hood function have been considered \cite{shimizu2006, shimizu2006-3, ogihara2011, masuda2011, masuda2013, long2017}, as well as closed-form expansion of the transition densities \cite{yu2007, filipovic2013, li2016}, and approximations to maximum likelihood estimators obtained from the continuous-time likelihood function \cite{mai2014, gloter2018}. A quadratic variation-inspired estimation method was proposed in a semiparametric setting \cite{mancini2004}, and several non-parametric procedures \cite{bandi2003, shimizu2006-2, shimizu2008, shimizu2009, mancini2009, mancini2011, schmisser2014, wang2017, zhou2017} as well as a selection of simulation-based methods \cite{stramer2010, giesecke2017, guay2017, goncalves2017} have been considered. Finally, parametric estimation for diffusions with jumps based on observations made in continuous time has been investigated too \cite{soerensen1991}. \medskip

The present paper focuses on parametric estimation for an ergodic stochastic process $\X^\theta = (X^\theta_t)_{t\geq 0}$ with finite-activity jumps, using discrete-time observations. The process is assumed to take values in the open interval $\xx\subseteq \RR$, and solve a stochastic differential equation (SDE) of the form
\begin{align}
dX^\theta_t &= a(X^\theta_t, \theta)\,dt + b(X^\theta_t, \theta)\, dW_t + \int_\RR c(X^\theta_{t-}, z, \theta)\, N^\theta(dt,dz)\,,
\label{jumpSDE}
\end{align}
for $\theta$ in an open parameter set $\Theta\subseteq \RR^d$, $d\geq 1$. The drift, diffusion, and jump coefficients, denoted $a$, $b$, and $c$, respectively, are specified deterministic functions. As usual, $\X^\theta_- = (X^\theta_{t-})_{t\geq 0}$ is defined as the process of left limits of $\X^\theta$. The standard Wiener process $(W_t)_{t\geq 0}$ is assumed to be independent of $N^\theta(dt,dz)$, a time-homogeneous, finite-activity Poisson random measure on $[0,\infty) \times \RR$. Supposing that the stochastic process is observed at times $t_i^n = i\Delta_n$, $i=0,1,\ldots,n$, $\Delta_n > 0$, we consider a high-frequency asymptotic scenario with an infinite time horizon: $\Delta_n \to 0$ and $n\Delta_n \to \infty$ as $n\to \infty$. Assuming also the existence of a true, unknown parameter $\thetan \in \Theta$, we put $\X = \X^\thetan$ in the following, and introduce the notation $X^\theta_{n,i} = X^\theta_{t_i^n}$ and $X^n_i = X_{t_i^n}$. \medskip

In \cite{shimizu2006, ogihara2011}, two very similar contrast functions were proposed for estimation in models of the type (\ref{jumpSDE}) with finite-activity jumps. Both papers assume a parameter $\theta$ separated into two components, one present in only the drift and jump terms of the SDE, the other only entering into the diffusion term. The corresponding estimators are rate optimal and efficient. An essential aspect of these contrast functions is the technique for deciding whether or not a jump is likely to have occurred in an observation interval $[\tminus,\tplus]$. This method has become standard in financial econometrics \cite{mancini2001, lee2008}. Models similar to \eqref{jumpSDE}, but allowing also infinite-activity jumps, were treated in \cite{masuda2011, masuda2013}. In these models, the parameter separates into a drift component entering into only the drift term of the SDE, and a noise component figuring in both the diffusion and jump terms. For estimation, specific choices of Gaussian quasi-likelihood functions were used, which are known to work well for diffusions without jumps. In the presence of jumps, under an asymptotic scenario very similar to the one in the present paper, these Gaussian quasi-likelihood estimators were found to be neither rate optimal for the noise parameter nor efficient for any part of the parameter. \medskip

In this paper, we study approximate martingale estimating functions which may be written on the form
\begin{align}
G_n(\theta) &= \frac{1}{n\Delta_n}\sum_{i=1}^n g(\Delta_n,\xtr,\xtl,\theta)\,.
\label{GnDef}
\end{align}
The estimating function is defined by a deterministic function $g(t,y,x, \theta)$, which satisfies an approximate martingale property. This entails that for all $\theta \in \Theta$, the conditional expectation $\EE(g(\Delta_n,X^\theta_{n,i}, X^\theta_{n,i-1}, \theta)\mid X^\theta_{n,i-1})$ is of order $\Delta_n^{\kappa_0}$ for some constant $\kappa_0\geq 2$. Estimators are obtained as solutions to the equation $G_n(\theta)=0$, and referred to as $G_n$-estimators. For example, the Gaussian quasi-likelihood estimators considered in \cite{masuda2011, masuda2013} can, under regularity assumptions, be formulated in terms of approximate martingale estimating functions. \medskip

Approximate martingale estimating functions for continuous diffusions have already been quite thoroughly studied, see e.g.\ \cite{bibby1995, kessler1999, jacobsen2001, jacobsen2002, uchida2004, efficient, jakobsen2017}. In particular, the existing theory includes high-frequency asymptotics for the estimators under an infinite time horizon \cite{efficient} as well as infill asymptotics \cite{jakobsen2017}. Both of these papers present simple conditions on the estimating functions which ensure rate optimality and efficiency. It should also be noted that a large part of the estimators proposed in the literature for continuous diffusions can be treated within the framework of approximate martingale estimating functions \cite{efficient}. \medskip

Compared to continuous diffusions, statistical theory for diffusions with jumps is still establishing itself, and many open questions are yet to be solved. Approximate martingale estimating functions constitute a tractable, rather comprehensive framework for the study of estimation for continuous diffusions. One would, therefore, expect research on approximate martingale estimating functions to provide fruitful insight into estimation theory for jump-diffusions. In this paper, we explore high-frequency asymptotics for general approximate martingale estimating functions, an area which has not previously been studied in the context of jump-diffusions. \medskip

First we establish, under mild assumptions, a general theorem on the existence and uniqueness of consistent, asymptotically normal $G_n$-estimators of the parameter $\theta$ in the SDE model (\ref{jumpSDE}). We also provide a consistent estimator of the asymptotic variance. Next, we investigate the question of rate optimality and efficiency of the estimators of $\theta$. Usually, the optimal rate of convergence and efficient asymptotic variance would be identified using results from the theory of local asymptotic normality. However, local asymptotic normality and, for infill asymptotics, local asymptotic mixed normality are ongoing areas of research for stochastic processes with jumps \cite{kawai2013-2, kawai2013, clement2015, becheri2016, kohatsuhiga2014, kohatsuhiga2017}. No results for general jump-diffusions have been established so far. Nonetheless, the optimal rates of convergence and the Fisher information matrices in the current setup are pretty clear. It can rather safely be conjectured that the optimal rate of convergence is $\sqrt{n\Delta_n}$ for drift and jump components of the parameter and $\sqrt{n}$ for diffusion components, and that the efficient asymptotic variance is as proposed in Section \ref{sec:optimal_jump}. These conjectures are motivated not only by local asymptotic normality results which cover particular submodels of (\ref{jumpSDE}) \cite{kawai2013-2, becheri2016, kohatsuhiga2014, kohatsuhiga2017}, but also by other asymptotic results \cite{soerensen1991, gobet2002, shimizu2006}. \medskip

Considering two separate cases, we give conditions on $g$ under which the corresponding approximate martingale estimating function $G_n(\theta)$ yields rate optimal and efficient estimators. First, we assume the model (\ref{jumpSDE}) with no unknown parameter in the diffusion coefficient, so that there is only a drift-jump parameter to be estimated. Next, we present the case where (\ref{jumpSDE}) has a two-dimensional drift-jump parameter and a one-dimensional diffusion parameter. The restriction on the dimension of the parameter is due to the observation that when the diffusion coefficient depends on an unknown parameter, the complexity of the conditions obtained for rate optimality and efficiency increases substantially with the dimension of the parameter. \medskip

For our jump-diffusion models, in addition to the simple rate-optimality and efficiency conditions found by \cite{efficient} in the context of continuous diffusions, several new conditions appear. An intuition for these results can be obtained by the following considerations. In the limit $\Delta_n \to 0$ (asymptotically), a full sample path of $\X$ is observed. In this hypothetical situation, all jump times may be identified as times $t$ for which $X_t \neq X_{t-}$, with jump sizes equal to $X_t-X_{t-}$. Consider, for example, (\ref{jumpSDE}) with a two-dimensional drift-jump parameter $\alpha$, and a one-dimensional diffusion parameter $\beta$. For this model, an approximate martingale estimation function is defined by a function $g = (g_\alpha, g_\beta)$, where $g_\alpha$ represents two coordinate functions related to the estimation of $\alpha$, and $g_\beta$ one coordinate function associated with $\beta$. If $g$ satisfies our conditions for rate optimal estimation of the diffusion parameter, then $g_\beta(0,X_t,X_{t-}, \theta) = 0$ when $t$ is a jump time. In other words, asymptotically, jumps in the data are not used for the estimation of $\beta$. When applied to continuous parts of the trajectory of the jump diffusion,  $g_\beta$ takes on the form used to define a rate optimal approximate martingale estimating function for the diffusion parameter of the corresponding continuous diffusion. When $g$ also satisfies the conditions ensuring efficient estimation, then, asymptotically, $g_\alpha$ too distinguishes perfectly between jumps and continuous parts of the trajectory of the process. Specifically, for a jump time $t$, $g_\alpha(0,X_t,X_{t-}, \theta)$ has the form of the jump-related term in the score function of the continuously sampled jump-diffusion process; see \cite{soerensen1991}. At non-jump times, $g$ has the structure ensuring an efficient estimating function for the drift and diffusion parameters of the corresponding continuous diffusion. \medskip

In this paper, we extend the established framework of high-frequency asymptotics for approximate martingale estimating functions for continuous diffusions to include jump-diffusion models. In particular, we maintain mathematically appealing assumptions regarding smoothness of the estimating functions. Under these assumptions, it is straightforward to obtain consistent, asymptotically normal estimators of the parameters of the jump-diffusions. The conditions for rate optimality and efficiency, however, impose further (inadvertently strict) restrictions, in terms of which models allow these conditions to be satisfied simultaneously with the smoothness assumptions on the estimating functions. Nonetheless, the conditions are enlightening. Taking into consideration the proofs used to arrive at our results, it is hard to imagine asymptotically well-performing estimators for jump-diffusions, which do not, essentially, conform to the conditions outlined above. Our findings indicate a path for future research in a framework where the estimating function is not required to be an approximate martingale estimating function satisfying the usual smoothness conditions. It seems likely that in conjunction with appropriate jump filtering, the extensive class of rate optimal and efficient approximate martingale estimating functions for continuous diffusions might be used to achieve rate optimal and efficient estimation of drift and diffusion parameters of jump-diffusions too. \medskip

The structure of the paper is as follows: Section~\ref{sec:jump:prelim} presents definitions, notation, and terminology used throughout the paper, as well as the main assumptions imposed on the jump-diffusion model and estimating functions. Section~\ref{sec:general_jump} presents the general theorem on the existence and asymptotics of consistent estimators based on approximate martingale estimating functions. Section~\ref{sec:optimal_jump} is devoted to criteria for rate optimality and efficiency of estimators of drift-jump and diffusion parameters. Section~\ref{sec:jump_proofs} contains central lemmas used to prove the main theorems, the proofs of these theorems, and the proofs of the lemmas. Appendix~\ref{app_lemmas} consists of technical auxiliary results used in the proofs of the aforementioned lemmas.

\section{Preliminaries}\label{sec:jump:prelim}
In this section we introduce basic notation, definitions, and regularity assumptions. Transposition of a matrix $M$ is denoted by $M^\star$, and $\Vert M \Vert$ represents the Euclidean norm. We denote by $I_p$ the $p\times p$ identity matrix. For any $\RR^p$-valued function $f$, let $f=(f_1, \ldots, f_p)^\star $, where $f_j$ denotes the $j$th coordinate function. For an $\RR^q$-valued argument $u$, let $\partial_{u_k} f_j$ denote the partial derivative of $f_j$ with respect to $u_k$ and be the $jk$th element of the $p\times q$ matrix $\partial_u f$. Let $f^2 = (f_1^2, \ldots, f_d^2)^\star $. For a $p\times q$ matrix-valued function $F = (F_{jk})$, we define $\partial_u F = (\partial_u F_{jk})$ for real-valued $u$ and $F^2 = (F^2_{jk})$. \medskip

Let $\Delta_0 = \max\{\Delta_n: n\in \NN \}$. Generic, strictly positive, real-valued constants are denoted by $C$. These constants may have dependencies emphasised by subscripts, and may also depend, implicitly, on e.g.\ $\theta_0$, $\Delta_0$, and $d$, but never on the sample size $n$. Choose $\varepsilon_0 > 0$ and let $(0,\Delta_0)_{\varepsilon_0} = (0-\varepsilon_0, \Delta_0+\varepsilon_0)$. A function $f: (0,\Delta_0)_{\varepsilon_0} \times \xx^2 \times \Theta \to \RR$ is said to be of polynomial growth in $x$ and $y$, uniformly for $t\in (0,\Delta_0)_\varepsilonn$ and $\theta$ in compact, convex sets if the following holds: For each compact, convex set $K \subseteq \Theta$, there exist constants $C_K>0$ such that for all $x,y \in \xx$,
\begin{align*}
  \sup_{t \in (0,\Delta_0)_{\varepsilon_0},\,\theta \in K} \left|f(t,y,x, \theta)\right|
  \leq C_K\left(1 + |x|^{C_K} + |y|^{C_K}\right)\,,
\end{align*}
or, equivalently,
\begin{align*}
  \sup_{t \in (0,\Delta_0)_{\varepsilon_0},\,\theta \in K} \left|f(t,y,x, \theta)\right|
  \leq C_K\left(1 + |x|^{C_K}\right)\left(1+|y|^{C_K}\right) \,.
\end{align*}
We use $R(t,y,x, \theta)$ to denote generic functions defined on $(0,\Delta_0)_{\varepsilon_0} \times \xx^2 \times \Theta$, which have coordinate functions of polynomial growth in $x$ and $y$, uniformly for $t\in (0,\Delta_0)_{\varepsilon_0}$ and $\theta$ in compact, convex sets. In the same manner as $C$, $R$ may have both explicit and implicit dependencies. Functions $R(t,x, \theta)$, $R(y,x, \theta)$, and $R(t,x)$ are defined correspondingly.
\begin{definition}
We denote by $\cc_\text{pol}^\infty ( (0,\Delta_0)_\varepsilonn\times \xx^2 \times \Theta)$ the class of real-valued functions $f(t,y,x, \theta) \in \cc^\infty ((0,\Delta_0)_\varepsilonn\times \xx^2 \times \Theta)$ which satisfy that $f$ and its partial derivatives of all orders are of polynomial growth in $x$ and $y$, uniformly for $t \in (0,\Delta_0)_\varepsilonn$ and $\theta$ in compact, convex sets. The classes $\cc^\infty_\text{pol} ((0,\Delta_0)_\varepsilonn \times \xx \times  \Theta)$, $\cc^\infty_\text{pol}( \xx^2 \times \Theta )$, $\cc^\infty_\text{pol}( \xx\times \RR \times \Theta)$, $\cc^\infty_\text{pol}( \xx \times \Theta)$, and $\cc^\infty_\text{pol}( \xx )$ are defined analogously for functions of the form $f(t,x, \theta)$, $f(y,x,\theta)$, $f(y,\theta)$, and $f(y)$. \dqed
\end{definition}

\subsection{Model}
\label{subsection:model}
Consider the filtered probability space $(\Omega, \ff, (\ff_t)_{t\geq 0}, \PP)$, equipped with the $(\ff_t)_{t\geq 0}$-adapted standard Wiener process $\W = (W_t)_{t\geq 0}$ and the independent, time-ho\-mo\-ge\-ne\-ous Poisson random measure $N^\theta(dt,dz)$ on $[0,\infty) \times \RR$ with intensity measure $\mu_\theta(dt,dz) =\nu_\theta(dz)\,dt$. Here, $\nu_\theta$ is a L\'evy measure on $\RR$ with $\nu_\theta(\{0\})=0$ and $\nu_\theta (\RR) < \infty$. Let $U^\theta$ be an $\ff_0$-measurable random variable which is independent of $\W$ and $N^\theta$. The process $\X^\theta$ is assumed to solve (\ref{jumpSDE}) with the initial condition $X^\theta_0=U^\theta$. The drift, diffusion, and jump coefficients of the SDE, $a, b: \xx\times \Theta \to \RR$ and $c: \xx\times \RR\times \Theta \to \RR$, respectively, are assumed to be known, deterministic functions. We make the following assumptions, among other reasons, in order to ensure that $\X$ may be viewed as a c\`{a}dl\`{a}g, $(\ff_t)$-adapted Markov process.
\begin{assumption}
Suppose that
\begin{align*}
  a(y, \theta)\,, b(y,\theta)\in \cc^\infty_{\text{pol}}( \xx\times \Theta)\quad
  \text{ and }\quad
  c(y, z, \theta) \in \cc^\infty_\text{pol}( \xx\times \RR \times \Theta)\,.
\end{align*}
Furthermore, the following holds for all $\theta \in \Theta$.
\begin{enumerate}[label=(\roman{*}), ref=(\roman{*})]
\item For all $y\in \xx$, $b^2(y, \theta) > 0$.
\item There exist real-valued constants $C_\theta>0$ such that  for all $x,y \in \xx$ and $z\in \RR$,
\begin{align*}
  |a(x, \theta) - a(y, \theta)| + |b(x, \theta) - b(y, \theta)|
  &\leq C_\theta |x-y|
\end{align*}
and
\begin{align*}
  |c(x,z, \theta) - c(y,z, \theta)|
  &\leq C_\theta |x-y|(1+|z|^{C_\theta})\,.
\end{align*}
\item \label{boundedmoments_jump} For all $m\in \NN$,
\begin{align*}
\sup_{t\in [0,\infty)} \EE ( |X^\theta_t|^m) < \infty\,.
\end{align*}
\item \label{x6} $\X^\theta$ is ergodic. That is, there exists an invariant probability measure $\pi_\theta$, such that for any $\pi_\theta$-integrable function $f$,
\begin{align}
  \frac{1}{T} \int_0^T f(X^\theta_t)\, dt
  \overset{\pp}{\longrightarrow}
  \int_\xx f(x) \,\pi_\theta(dx)
\label{ergodictheorem}
\end{align}
 as $T\to \infty$. The measure $\pi_\theta$ has moments of all orders. \item The L\'evy measure $\nu_\theta$ has density $q(z,\theta) = \xi(\theta)p(z,\theta)$ with respect to a $\sigma$-finite measure $\tilde{\nu}$, where $p(z,\theta)$ is a probability density with
respect to $\tilde{\nu}$.
\end{enumerate}
Finally, for the density of the L\'evy measure:
\begin{enumerate}[label=(\roman{*}), ref=(\roman{*}), resume]
\item\label{x3}  It holds that $(\theta \mapsto q(z, \theta)) \in \cc^2(\Theta)$. For each compact, convex set $K\subseteq \Theta$, there exists $\varphi_K: \RR \to [0,\infty)$ measurable with
\begin{align*}
\int_\RR |z|^m \,\varphi_K(z)\, \tilde{\nu}(dz) < \infty
\end{align*}
for all $m\in \NN_0$, such that for all $z \in \RR$ and $\theta \in K$, 
\begin{align*}
  q(z,\theta) + \sum_{j=1}^2 \sum_{k=1}^d \vert \partial^j_\thetak q(z, \theta)\vert
  \leq \varphi_K(z)\,.
\end{align*}
\end{enumerate}\dqed
\label{assumptions_on_X_jump}
\end{assumption}
In the following, we put $\pi_\thetan = \pi$. Note that by Assumption~\ref{assumptions_on_X_jump}.\ref{x3}, $\nu_\theta$ has moments of all orders for all $\theta \in \Theta$. Assumption~\ref{assumptions_on_X_jump} is similar to assumptions of e.g.\ \cite{shimizu2006, ogihara2011, masuda2013}. See \cite{masuda2007, masuda2008} for conditions ensuring that an ergodic theorem of the form (\ref{ergodictheorem}) holds, and under which $\X$ has bounded moments as in Assumption~\ref{assumptions_on_X_jump}.\ref{boundedmoments_jump}.
\medskip

Assuming ergodicity ensures the following lemma, which can be proved in the same way as the non-uniform part of \cite[Lemma
8]{kessler1997}, using the Cauchy-Schwarz and Jensen's inequalities, and a version of \cite[Proposition 3.1]{shimizu2006} (see also \cite[p. 1598]{masuda2013}).
\begin{lemma}
Suppose that Assumption~\ref{assumptions_on_X_jump} holds, and that
for fixed $\theta \in \Theta$, the functions $x \mapsto f(x, \theta)$
and $x \mapsto \partial_x f(x,\theta)$ are continuous and of
polynomial growth in $x$ for $x\in\xx$. Then, pointwise for $\theta \in \Theta$,
\begin{align*}
\frac{1}{n} \sum_{i=1}^{n} f(\xtl, \theta) &\overset{\pp}{\longrightarrow}
\int_\xx f(x, \theta)\, \pi(dx)\,.
\end{align*}
\dqed
\label{ucp}
\end{lemma}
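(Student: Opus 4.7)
The plan is to reduce the discrete ergodic statement to the continuous ergodic theorem guaranteed by Assumption~\ref{assumptions_on_X_jump}.\ref{x6}. Fix $\theta\in\Theta$ and abbreviate $\tilde f(x)=f(x,\theta)$. Since $\tilde f$ is of polynomial growth and $\pi$ has moments of all orders, $\tilde f$ is $\pi$-integrable, so taking $T=n\Delta_n\to\infty$ in \eqref{ergodictheorem} gives
\begin{equation*}
\frac{1}{n\Delta_n}\int_0^{n\Delta_n}\tilde f(X_s)\,ds \overset{\PP}{\longrightarrow}\int_\xx \tilde f(x)\,\pi(dx).
\end{equation*}
It therefore suffices to prove that the Riemann-sum remainder
\begin{equation*}
\frac{1}{n}\sum_{i=1}^n \tilde f(X^n_{i-1})-\frac{1}{n\Delta_n}\int_0^{n\Delta_n}\tilde f(X_s)\,ds = \frac{1}{n\Delta_n}\sum_{i=1}^n\int_{t_{i-1}^n}^{t_i^n}\bigl[\tilde f(X^n_{i-1})-\tilde f(X_s)\bigr]\,ds
\end{equation*}
vanishes in probability, which I plan to obtain via $L^1$-convergence.

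The key step is a uniform estimate of $\EE|\tilde f(X^n_{i-1})-\tilde f(X_s)|$ for $s\in[t_{i-1}^n,t_i^n]$. Since $\xx$ is an interval and $x\mapsto\partial_x\tilde f$ is continuous and of polynomial growth, the mean value theorem applied pathwise to the segment between $X^n_{i-1}$ and $X_s$ yields
\begin{equation*}
|\tilde f(X^n_{i-1})-\tilde f(X_s)| \leq C\bigl(1+|X^n_{i-1}|^C+|X_s|^C\bigr)|X^n_{i-1}-X_s|.
\end{equation*}
Cauchy--Schwarz then gives
\begin{equation*}
\EE|\tilde f(X^n_{i-1})-\tilde f(X_s)| \leq C\,\Bigl(\EE\bigl(1+|X^n_{i-1}|^C+|X_s|^C\bigr)^{2}\Bigr)^{1/2}\Bigl(\EE|X^n_{i-1}-X_s|^2\Bigr)^{1/2}.
\end{equation*}
The first factor is bounded uniformly in $i,n$ by Assumption~\ref{assumptions_on_X_jump}.\ref{boundedmoments_jump}. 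For the second, the cited version of \cite[Proposition 3.1]{shimizu2006} supplies the jump-diffusion increment bound $\EE|X^n_{i-1}-X_s|^2 \leq C(s-t_{i-1}^n)\leq C\Delta_n$, whence $\EE|\tilde f(X^n_{i-1})-\tilde f(X_s)|\leq C\sqrt{\Delta_n}$.

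Substituting back, the $L^1$-norm of the remainder is at most $\frac{1}{n\Delta_n}\cdot n\Delta_n\cdot C\sqrt{\Delta_n} = C\sqrt{\Delta_n}\to 0$ by Jensen's inequality, so the remainder converges to $0$ in $L^1$, hence in probability, completing the argument. The only substantive ingredient is the jump-adapted second-moment estimate on $X_s-X^n_{i-1}$; everything else (the mean value theorem, Cauchy--Schwarz, and uniform-in-time moment control on $\X$) is routine once Assumption~\ref{assumptions_on_X_jump} is in force. I do not anticipate any genuine obstacle, since the continuous-diffusion argument in \cite{kessler1997} proceeds along exactly these lines and the ergodic theorem is given to us as a hypothesis.
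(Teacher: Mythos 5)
Your proof is correct and follows exactly the route the paper indicates: it does not write out a proof but states that the lemma "can be proved in the same way as the non-uniform part of \cite[Lemma 8]{kessler1997}, using the Cauchy--Schwarz and Jensen's inequalities, and a version of \cite[Proposition 3.1]{shimizu2006}", which is precisely your decomposition into the continuous-time ergodic average plus a Riemann-sum remainder controlled via the mean value theorem, Cauchy--Schwarz, and the jump-adapted increment bound $\EE|X_s-X^n_{i-1}|^2\leq C\Delta_n$. No substantive difference from the intended argument.
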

Now suppose that Assumption~\ref{assumptions_on_X_jump} holds, and let $\lambda \in \Theta$. Assume that $f(t,y,x,\theta)$, $f: ((0,\Delta_0)_{\varepsilon_0}\times \xx^2 \times \Theta) \to \RR$, and its partial derivatives $\partial^i_y f$, $i=1,2$, exist, are continuous, and are of polynomial growth in $x$ and $y$, uniformly for $t\in (0,\Delta_0)_{\varepsilon_0}$ and $\theta$ in compact, convex sets. Then, the infinitesimal generator $\ll_\lambda$ is defined by
\begin{align}
\begin{split}\label{infinitesimalgenerator2} 
&\hspace{-5mm} \ll_\lambda f(t,y,x, \theta) \\
&= a(y, \lambda)\partial_y f(t,y,x, \theta)+ \tfrac{1}{2} b^2(y,
\lambda)\partial^2_y f(t,y,x, \theta) \\
&\hspace{5mm} + \int_\RR (f(t,y+ c(y,z, \lambda),x, \theta) - f(t,y,x,
\theta))\, \nu_\lambda(dz)\,.
\end{split}
\end{align}
Often, the notation $\ll_\lambda f(t,y,x,\theta)=\ll_\lambda(f(t,\theta))(y,x)$ is used, and we put $\ll_\thetan = \ll$. Since $\nu_\lambda(\RR)<\infty$, there exist constants $C_{\lambda,\theta} > 0$ such that for all $t \in (0,\Delta_0)_{\varepsilon_0}$, $x,y \in \xx$, and $\theta \in \Theta$, 
\begin{align*}
  \int_\RR |f(t,y + c(y,z, \lambda),x, \theta) - f(t,y,x, \theta))|\, \nu_\lambda(dz)
  &\leq C_{\lambda,\theta}\left(1+|x|^{C_{\lambda,\theta}}+ |y|^{C_{\lambda,\theta}}\right)\,,
\end{align*}
ensuring, in particular, that (\ref{infinitesimalgenerator2}) is well-defined. More generally, it may be verified that integrals of the form $\int f(\cdot,z)\nu_\lambda(dz)$ inherit polynomial growth properties of $f$. The operator $\ll_\lambda$, always acting on the variable $y$, is defined correspondingly for e.g.\ functions of the form $f(y)$, $f(y,x,\theta)$, and $f(t,y,x,\mathbf{z}_k, \theta)$ where $\mathbf{z}_k\in \RR^k$.
In the latter case, the notation $\ll_\lambda f(t,y,x,\mathbf{z}_k, \theta)  = \ll_\lambda(f(t,\mathbf{z}_k, \theta) )(y,x)$ is used. \medskip

We let $\ll_\lambda^k f = \ll_\lambda(\ll_\lambda^{k-1} f)$ for $k\in \NN$ with $\ll_\lambda^0 f = f$. When $f$ is $\RR^d$-valued, and the generator is well-defined for each coordinate function, then $\ll_\lambda f = (\ll_\lambda f_1,\ldots, \ll_\lambda f_d)^\star $. If $F$ is a matrix-valued function, $\ll_\lambda F$ denotes the matrix with $ij$th element $\ll_\lambda F_{ij}$. \medskip

The infinitesimal generator notation is useful for expressing the
following lemma.
\begin{lemma}
Let Assumption~\ref{assumptions_on_X_jump} hold. For some $k\in \NN_0$, suppose that $f(y,x,\theta)$, $f: \xx^2\times \Theta \to \RR$, and its partial derivatives $\partial^i_y f$, $i=1,\ldots,2(k+1)$, exist, are continuous, and are of polynomial growth in $x$ and $y$, uniformly for $\theta$ in compact, convex sets. Then, for $0\leq t < t+ \Delta$, $\Delta\leq \Delta_0$,  and $\lambda \in \Theta$, 
\[ 
\EE\left ( f(X^\lambda_{t+\Delta}, X^\lambda_t, \theta) \mid X_t^\lambda\right) = \sum_{i=0}^k \frac{\Delta^i}{i!} \ll_\lambda^i f(X_t^\lambda, X_t^\lambda, \theta) + \Delta^{k+1} R_\lambda(\Delta,X^\lambda_t, \theta)\,, 
\]
where
\begin{eqnarray*}
\lefteqn{\Delta^{k+1} R_\lambda(\Delta,X^\lambda_t, \theta) =} \\ && \\ &&\int_0^{\Delta_0} \int_0^{u_1} \cdots \int_0^{u_k} \EE\left( \ll_\lambda ^{k+1} f(X^\lambda_{t+u_{k+1}},X_t^\lambda, \theta)\mid X^\lambda_t \right)\, du_{k+1}\cdots du_2\,du_1.
\end{eqnarray*}
\dqed
\label{lemma:expansion_jump}
\end{lemma}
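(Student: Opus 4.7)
The plan is to prove the identity by induction on $k$, iterating Dynkin's formula applied to the function $y \mapsto f(y, X^\lambda_t, \theta)$ (with $X^\lambda_t$ frozen via the Markov property of $\X^\lambda$ and $\theta$ treated as a parameter). The base step $k=0$ asserts
\[
\EE(f(X^\lambda_{t+\Delta}, X^\lambda_t, \theta) \mid X^\lambda_t)
= f(X^\lambda_t, X^\lambda_t, \theta) + \int_0^{\Delta} \EE\bigl(\ll_\lambda f(X^\lambda_{t+u}, X^\lambda_t, \theta) \mid X^\lambda_t\bigr)\, du,
\]
which follows from It\^o's formula for semimartingales with jumps applied on $[t, t+\Delta]$: the bounded-variation and generator terms combine to give the above integral of $\ll_\lambda f$, while the $dW_u$ and compensated Poisson integrals are local martingales. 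I would verify that these local martingales are genuine martingales of zero conditional expectation by combining the polynomial growth of $\partial_y f$ and of $z \mapsto f(y + c(y,z,\lambda), X^\lambda_t, \theta) - f(y, X^\lambda_t, \theta)$ in $(y,z)$ (via the Lipschitz bound on $c$ in Assumption~\ref{assumptions_on_X_jump}) with the moment bound of Assumption~\ref{assumptions_on_X_jump}.\ref{boundedmoments_jump} and the polynomial integrability of $\nu_\lambda$ from Assumption~\ref{assumptions_on_X_jump}.\ref{x3}.

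For the inductive step I would apply the case $k-1$ to the function $\ll_\lambda f$ in place of $f$; this is admissible because $\ll_\lambda f$ inherits continuous $y$-derivatives of orders $1,\ldots,2k$, all of polynomial growth, from the hypothesis on $f$ together with the polynomial-growth structure of $a, b, c$ and $\nu_\lambda$. Substituting the resulting expansion of $\EE(\ll_\lambda f(X^\lambda_{t+u}, X^\lambda_t, \theta) \mid X^\lambda_t)$ into the integrand of the base-step identity and exchanging the summation with the integration in $u$ produces, via integration of $u^{i-1}/(i-1)!$ over $[0,\Delta]$, the term $\tfrac{\Delta^{i}}{i!}\ll_\lambda^{i} f(X^\lambda_t, X^\lambda_t, \theta)$ for each $1\leq i\leq k$, together with the iterated-integral remainder of order $k+1$ as displayed in the statement. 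The derivative-order bookkeeping explains the hypothesis that derivatives up to order $2(k+1)$ exist: each application of $\ll_\lambda$ consumes two $y$-derivatives and the remainder term involves $\ll_\lambda^{k+1} f$.

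The main obstacle is the careful reduction of It\^o's formula to Dynkin's formula, i.e.\ showing that the compensated jump integral is a true martingale rather than merely a local one; this is handled by verifying $L^1$-integrability of its predictable quadratic variation using the stated polynomial-growth hypotheses on $f$, together with $\EE(\sup_{s\in[t,t+\Delta]} |X^\lambda_s|^m \mid X^\lambda_t) < \infty$, which follows from the uniform moment bounds and a standard Gronwall estimate. Once this is settled, the required polynomial-growth property of $R_\lambda(\Delta, X^\lambda_t, \theta)$ in $X^\lambda_t$, uniformly for $\Delta \in (0,\Delta_0)_{\varepsilon_0}$ and $\theta$ in compact convex sets, is obtained by bounding $|\EE(\ll_\lambda^{k+1} f(X^\lambda_{t+u}, X^\lambda_t, \theta) \mid X^\lambda_t)|$ uniformly in $u$ via the Markov property and the moment bounds of Assumption~\ref{assumptions_on_X_jump}.\ref{boundedmoments_jump}, so that integration over the nested simplex of volume $\Delta^{k+1}/(k+1)!$ yields the factor $\Delta^{k+1}$ factored out of $R_\lambda$.
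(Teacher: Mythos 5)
Your proposal is correct and is essentially the argument the paper relies on: the paper does not spell the proof out but refers to the iterated Dynkin/It\^o expansion of Kessler (1997, Lemma 1) and Florens-Zmirou (1989, Lemma 1), adapted via It\^o's formula for SDEs with jumps and the generator (\ref{infinitesimalgenerator2}), which is exactly the induction you carry out, including the reduction of the compensated stochastic integrals to true martingales and the derivative bookkeeping ($2(k+1)$ $y$-derivatives for $\ll_\lambda^{k+1}f$). Note that your derivation yields $\int_0^{\Delta}$ as the outer integral of the remainder, which is what the displayed formula should read (the $\int_0^{\Delta_0}$ in the statement is a typo).
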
 
Lemma~\ref{lemma:expansion_jump} is effectively a jump-diffusion extension of the expression given by e.g.\ \cite[Lemma 1]{florenszmirou1989} for continuous diffusions. Notationally, the proof of Lemma~\ref{lemma:expansion_jump} is very similar to the proof of the continuous version, see \cite[Lemma 1.10]{soerensen2012} and \cite[Lemma 1]{kessler1997}, but it uses It\^{o}'s formula for diffusions with jumps and the infinitesimal generator (\ref{infinitesimalgenerator2}). The lemma is useful for verifying the approximate martingale property (\ref{jump_AMG}), and for creating approximate martingale estimating functions. It is also key to proving Lemma~\ref{lemma:conseq}, which emphasises two important properties of the estimating functions considered here.

\subsection{Estimating Functions}
\begin{definition}
Let $g: (0,\Delta_0)_{\varepsilon_0} \times \xx^2 \times \Theta \to \RR^d$. Suppose there exists a constant ${\kappa_0} \geq 2$, such that for all $n\in \NN$, $i=1,\ldots,n$, and $\theta \in \Theta$, 
\begin{align}
\EE\left( g(\Delta_n,X^\theta_{n,i},X^\theta_{n,i-1}, \theta) \mid   X^\theta_{n,i-1}\right) &= \Delta_n^{\kappa_0} R_\theta(\Delta_n, X^\theta_{n,i-1})\,.
\label{jump_AMG}
\end{align}
Then, (\ref{GnDef}) defines an approximate martingale estimating function. \dqed
\label{def:AMEF_jump} 
\end{definition}
A $G_n$-estimator $\hat{\theta}_n$ is obtained as a solution to the estimating equation $G_n(\theta) = 0$, see also \cite[Definition 2.3]{jakobsen2017}. When (\ref{jump_AMG}) is satisfied with $R_\theta(t,x)$ identically equal to zero, (\ref{GnDef}) is referred to as a martingale estimating function. \medskip

Let $M_n$ be any invertible $d\times d$ matrix with real entries which may depend on, e.g., $\Delta_n$ and $\thetan$. Then, $G_n(\theta)$ and $M_nG_n(\theta)$ produce identical estimators of $\theta$. These estimating functions are considered versions of each other. It is enough that one version satisfies the assumptions set forth in this paper. \medskip

We make the following assumptions about the function $g$, which defines our estimating function (\ref{GnDef}).
\begin{assumption}
Choose some $\varepsilon_0 > 0$.
\begin{enumerate}[label=(\roman{*}),ref=(\roman{*})] 
\item The function $g(t,y,x, \theta)$ satisfies Definition~\ref{def:AMEF_jump} for some ${\kappa_0} \geq 2$. 
\item \label{g_poly} For $j=1,\ldots,d$, it holds that
\begin{align*}
g_j(t,y,x, \theta) \in \cc^\infty_\text{pol}((0,\Delta_0)_{\varepsilon_0}\times \xx^2\times\Theta)\,.
\end{align*}
\item  For $k = 0,1$, and all $t \in (0,\Delta_0)_{\varepsilon_0}$, $x,y \in \xx$, and $\theta \in \Theta$, the expansion
\begin{align}
&\hspace{-5mm} \partial_\theta^kg(t, y, x, \theta) \nonumber\\
&= \partial_\theta^kg(0,y,x, \theta) + t \partial_\theta^kg^{(1)}(y,x, \theta) + \tfrac{1}{2}t^2\partial_\theta^kg^{(2)}(y,x, \theta) + t^3 R(t,y ,x, \theta)
\label{g2}
\end{align}
holds, where $g_j^{(i)}(y,x, \theta) = \partial^i_t g_j(0,y,x,\theta)$. \dqed
\end{enumerate}
\label{assumptions_on_g_jump}
\end{assumption}
In this paper, the assumptions of $\cc^\infty$-smoothness and polynomial growth, together with assumptions on the moments of, e.g., $X^\theta_t$, $\nu_\theta$, and $\pi_\theta$, serve to simplify the exposition and proofs, and could be relaxed; see also \cite[Chapter 3]{phdthesis}.
\medskip

For use in the proofs, we note the following lemma.
\begin{lemma}
Suppose that Assumptions~\ref{assumptions_on_X_jump} and \ref{assumptions_on_g_jump} hold. Then, for all $x\in \xx$ and $\theta \in \Theta$, 
\begin{align*}
g(0,x,x, \theta) = 0 \quad\text{ and }\quad g^{(1)}(x,x, \theta) = -\ll_\theta\left( g(0,\theta)\right)(x,x)\,.
\end{align*}\dqed
\label{lemma:conseq}
\end{lemma}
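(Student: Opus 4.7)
The strategy is to substitute the conditional expectation expansion from Lemma~\ref{lemma:expansion_jump} into the approximate martingale property (\ref{jump_AMG}), then Taylor-expand in $\Delta_n$ using (\ref{g2}), and finally match the $\Delta_n^0$ and $\Delta_n^1$ coefficients by letting $n\to\infty$.

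First, fix $n$ and apply Lemma~\ref{lemma:expansion_jump} with $k=1$, $\lambda=\theta$, and the function $(y,x,\theta)\mapsto g(\Delta_n,y,x,\theta)$, treating $\Delta_n$ as a frozen parameter (Assumption~\ref{assumptions_on_g_jump}.\ref{g_poly} supplies the required $y$-smoothness and polynomial growth). Conditioning on $X^\theta_{n,i-1}=x$ turns both sides of (\ref{jump_AMG}) into deterministic functions of the starting point $x$, so combining the lemma with (\ref{jump_AMG}) and using $\kappa_0\geq 2$ to absorb the AMG remainder yields the pointwise identity
\begin{align*}
g(\Delta_n,x,x,\theta) + \Delta_n\,\ll_\theta\!\bigl(g(\Delta_n,\theta)\bigr)(x,x)
= \Delta_n^2\, R_\theta(\Delta_n,x),
\end{align*}
valid for every $x\in\xx$ and every $n$.

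Next, I would expand each term in $\Delta_n$. Applying (\ref{g2}) (with $k=0$) gives $g(\Delta_n,x,x,\theta) = g(0,x,x,\theta) + \Delta_n g^{(1)}(x,x,\theta) + \Delta_n^2 R(\Delta_n,x,x,\theta)$. Because $\ll_\theta$ is linear in its function argument and acts only in $y$, the same Taylor structure transfers through to give $\ll_\theta(g(\Delta_n,\theta))(x,x) = \ll_\theta(g(0,\theta))(x,x) + \Delta_n R(\Delta_n,x)$; the remark following (\ref{infinitesimalgenerator2}) guarantees that the Poisson integral appearing in $\ll_\theta$ preserves the polynomial-growth character of the remainders. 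Substituting these expansions into the display above produces
\begin{align*}
g(0,x,x,\theta) + \Delta_n\bigl[g^{(1)}(x,x,\theta) + \ll_\theta\!\bigl(g(0,\theta)\bigr)(x,x)\bigr]
= \Delta_n^2\, R_\theta(\Delta_n,x).
\end{align*}

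Finally, I would extract the two claims by letting $n\to\infty$ so that $\Delta_n\to 0$. With $x$ and $\theta$ fixed, the right-hand side tends to $0$ and the bracketed term is bounded independently of $n$, so the $\Delta_n^0$ coefficient must vanish: $g(0,x,x,\theta)=0$. Removing this term, dividing by $\Delta_n$, and passing to the limit a second time yields $g^{(1)}(x,x,\theta) = -\ll_\theta(g(0,\theta))(x,x)$. The only mildly delicate point is the pointwise-in-$x$ reading of the identity derived from (\ref{jump_AMG}); this is justified because the conditional expectation is realized by the Markov transition kernel of $\X^\theta$ applied to the starting point $x\in\xx$, and the regularity supplied by Assumptions~\ref{assumptions_on_X_jump} and~\ref{assumptions_on_g_jump} makes both sides continuous in $x$.
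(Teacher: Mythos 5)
Your proof is correct and follows essentially the same route as the paper, which does not spell out the argument but refers to the analogous \cite[Lemma 2.3]{efficient}: combine the approximate martingale property (\ref{jump_AMG}) with the generator expansion of Lemma~\ref{lemma:expansion_jump} (here with the jump-diffusion generator (\ref{infinitesimalgenerator2})) and the Taylor expansion (\ref{g2}), then identify the coefficients of $\Delta_n^0$ and $\Delta_n^1$ as $\Delta_n\to 0$. Your handling of the two delicate points — the pointwise-in-$x$ reading of (\ref{jump_AMG}) and the uniform-in-$t$ boundedness of the remainders needed to pass to the limit — is appropriate.
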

Lemma
\ref{lemma:conseq} is similar to \cite[Lemma 2.3]{efficient}, to which we refer for a proof.

\section{General asymptotic theory}\label{sec:general_jump}
This section presents the general theorem on the existence and properties of consistent estimators of $\theta$ in the model (\ref{jumpSDE}), based on approximate martingale estimating functions.
\begin{assumption}
Let 
\begin{align*}
A(\lambda, \theta) &= \int_\xx \big( \ll_\theta(g(0, \lambda)) (x,x) -\ll_\lambda(g(0, \lambda)) (x,x)\big)\, \pi_\theta(dx)\\
B(\theta, \theta) &= \int_\xx\big(  \ll_\theta(\partial_\theta g(0, \theta))(x,x) - \partial_\theta \ll_\theta(g(0,\theta))(x,x) \big)\, \pi_\theta(dx)\\
C(\theta, \theta) &= \int_\xx \ll_\theta(gg^\star (0,\theta))(x,x)\,\pi_\theta(dx)\,.
\end{align*}
The following holds for all $\theta \in \Theta$.
\begin{enumerate}[label=(\roman{*}), ref=(\roman{*})]
\item \label{LdiffNZ} The $\RR^d$-vector $A(\lambda, \theta)$ is non-zero whenever $\lambda \neq \theta$. 
\item \label{dthetalimitNS} The $d\times d$ matrix $B(\theta, \theta)$ is non-singular.
\item \label{Cposdef} The symmetric $d\times d$ matrix $C(\theta,\theta)$ is positive definite.
\end{enumerate}\dqed
\label{assumptions_for_estimators}
\end{assumption}
\begin{theorem}
Consider the model given by (\ref{jumpSDE}). Suppose that Assumptions~\ref{assumptions_on_X_jump}, \ref{assumptions_on_g_jump}, and \ref{assumptions_for_estimators} hold. If (\ref{jump_AMG}) holds with $R_\theta(t,x)$ not identically equal to $0$, suppose also that $n\Delta_n^{2{\kappa_0}-1} \to 0$ as $n\to \infty$. Then,
\begin{enumerate}[label=(\roman{*}), ref=(\roman{*})]
\item \label{main2} there exists a consistent $G_n$-estimator $\hat{\theta}_n$. Choose any compact, convex set $K\subseteq \Theta$ with $\theta_0 \in \text{int}\, K$, where $\text{int}\, K$ denotes the interior of $K$. Then, $\hat{\theta}_n$ is eventually unique in $K$, in the sense that for any $G_n$-estimator $\tilde{\theta}_n$ with $\PP(\tilde{\theta}_n \in K) \to 1$ as $n\to \infty$, it holds that $\PP(\hat{\theta}_n\neq\tilde{\theta}_n) \to 0$ as $n\to \infty$.
\item \label{main3} for any consistent $G_n$-estimator $\hat{\theta}_n$, it holds that
\begin{align*}
\sqrt{n\Delta_n}( \hat{\theta}_n - \theta_0) \overset{\dd}{\longrightarrow} \nn_d(0,V(\theta_0))\,,
\end{align*}
where $V(\theta_0) = B(\thetan, \thetan)^{-1}C(\thetan,\thetan)(B(\thetan,\thetan)^\star )^{-1}$ is positive definite.
\item \label{main4} for any consistent $G_n$-estimator $\hat{\theta}_n$,
\begin{align*}
\begin{split}
\widehat{V}_n &= n\Delta_n\left( \sum_{i=1}^n \partial_\theta g(\Delta_n, \xtr,\xtl, \hat{\theta}_n)\right)^{-1}\left(\sum_{i=1}^n gg^\star (\Delta_n, \xtr,\xtl,\hat{\theta}_n)\right)\\
&\hspace{5mm} \times \left( \sum_{i=1}^n \partial_\theta g^\star  (\Delta_n, \xtr,\xtl,\hat{\theta}_n)\right)^{-1}
\end{split}
\end{align*}
is a consistent estimator of $V(\theta_0)$, so
\begin{align*}
\sqrt{n\Delta_n} \,\widehat{V}_n^{-1/2} ( \hat{\theta}_n - \theta_0) \overset{\dd}{\longrightarrow} \nn_d(0,\text{I}_d)\,,
\end{align*}
where $\widehat{V}_n^{1/2}$ is the unique, positive semidefinite square root of $\widehat{V}_n$.
\end{enumerate}\dqed
\label{maintheo}
\end{theorem}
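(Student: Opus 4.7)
The plan is the classical three-step estimating-function argument---identify the pointwise limit of $G_n$, control its fluctuations via a martingale CLT, then Taylor-linearise---adapted to the jump-diffusion setting via the single workhorse Lemma~\ref{lemma:expansion_jump}. For fixed $\theta\in\Theta$, apply that lemma with $k=1$ and $\lambda=\thetan$ to $y\mapsto g(\Delta_n,y,\xtl,\theta)$; combine with the $t$-expansion~(\ref{g2}) and the identities in Lemma~\ref{lemma:conseq} to cancel the constant term and to rewrite the order-$\Delta_n$ contribution, obtaining
\begin{equation*}
\EE\bigl(g(\Delta_n,\xtr,\xtl,\theta)\mid \ff_\tminus\bigr)=\Delta_n\bigl(\ll(g(0,\theta))(\xtl,\xtl)-\ll_\theta(g(0,\theta))(\xtl,\xtl)\bigr)+\Delta_n^{2}R(\Delta_n,\xtl,\theta).
\end{equation*}
Summing, dividing by $n\Delta_n$, and invoking Lemma~\ref{ucp} shows that the conditional-expectation part of $G_n(\theta)$ converges to $A(\theta,\thetan)$ in probability. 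The martingale-difference residual is $O_P((n\Delta_n)^{-1/2})$ by an $L^2$ estimate of the summands, whose second moments are of order $\Delta_n$ because $g(0,x,x,\thetan)=0$ forces the zeroth-order term of $\EE(gg^\star\mid\ff_\tminus)$ to vanish. Assumption~\ref{assumptions_for_estimators}(\ref{LdiffNZ}) then identifies $\thetan$ as the unique zero of the limit.

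Applying the same expansion coordinate-wise to $\partial_\theta g$ and $\partial_\theta^2 g$ yields $\partial_\theta G_n(\theta)\to B(\theta,\thetan)$ pointwise and local uniform control of $\partial_\theta^2 G_n$. Since Assumption~\ref{assumptions_for_estimators}(\ref{dthetalimitNS}) makes $B(\thetan,\thetan)$ non-singular, a standard inverse-function argument on a compact convex neighbourhood of $\thetan$ then produces a consistent, eventually unique zero $\hat\theta_n$, establishing~(\ref{main2}). For~(\ref{main3}) I would Taylor-expand $0=G_n(\hat\theta_n)=G_n(\thetan)+\bar\partial_\theta G_n\,(\hat\theta_n-\thetan)$. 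The linchpin is a martingale CLT for the triangular array $(n\Delta_n)^{-1/2}\bigl(g(\Delta_n,\xtr,\xtl,\thetan)-\EE(g\mid\ff_\tminus)\bigr)$: the conditional covariance, via the generator expansion applied to $gg^\star$ (whose value on the diagonal at $t=0$ vanishes) and Lemma~\ref{ucp}, converges to $C(\thetan,\thetan)$, positive definite by Assumption~\ref{assumptions_for_estimators}(\ref{Cposdef}); a Lyapunov condition is verified through fourth-order moment bounds supplied by the same machinery together with Assumption~\ref{assumptions_on_X_jump}(\ref{boundedmoments_jump}) and~(\ref{x3}). The non-martingale contribution $(n\Delta_n)^{-1/2}\sum_i\EE(g\mid\ff_\tminus)$ is $O_P((n\Delta_n^{2\kappa_0-1})^{1/2})$ by the approximate-martingale property~(\ref{jump_AMG}), negligible precisely under the stated rate condition. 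Slutsky then delivers $\sqrt{n\Delta_n}(\hat\theta_n-\thetan)\overset{\dd}{\longrightarrow}\nn_d(0,V(\thetan))$.

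For~(\ref{main4}), consistency of $\hat\theta_n$ combined with local Lipschitz continuity in $\theta$ of $\partial_\theta g$ and $gg^\star$ (from Assumption~\ref{assumptions_on_g_jump}(\ref{g_poly})) permits replacing $\hat\theta_n$ by $\thetan$ in the two block sums at negligible cost; the ergodic-plus-generator argument then gives $(n\Delta_n)^{-1}\sum_i\partial_\theta g(\Delta_n,\xtr,\xtl,\thetan)\to B(\thetan,\thetan)$ and $(n\Delta_n)^{-1}\sum_i gg^\star(\Delta_n,\xtr,\xtl,\thetan)\to C(\thetan,\thetan)$ in probability, whence $\widehat V_n\to V(\thetan)$ by the continuous mapping theorem. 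Combined with~(\ref{main3}), this produces the studentised CLT.

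The main technical obstacle, compared with the continuous-diffusion case of~\cite{efficient}, is the bookkeeping required to handle the non-local jump term in the generator~(\ref{infinitesimalgenerator2}): every iterated application of $\ll$ inside a conditional expectation must preserve polynomial growth in $x$ and $y$, uniformly for $\theta$ in compact convex sets. Assumption~\ref{assumptions_on_X_jump}(\ref{x3}) (through the integrable envelope $\varphi_K$) together with the closure of $\cc^\infty_{\text{pol}}$ under $\ll$ furnishes exactly the estimates needed; verifying that every expansion remainder, conditional covariance and Lyapunov moment survives this non-local step, uniformly in $n$, is where the actual effort lies.
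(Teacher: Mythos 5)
Your overall architecture is sound and, apart from packaging, matches the paper's: the paper delegates the Taylor-linearisation/inverse-function step to the general estimating-function theorems of \cite{jacod2018}, and concentrates all the work in Lemma~\ref{uconvP}, whose content (generator expansion of conditional moments via Lemmas~\ref{lemma:expansion_jump} and \ref{lemma:conseq}, ergodic averaging via Lemma~\ref{ucp}, martingale CLT with a Lyapunov condition, and the $(n\Delta_n^{2\kappa_0-1})^{1/2}$ bound on the non-martingale bias) is exactly what you reconstruct. Your pointwise limits, variance orders, and the role of the rate condition are all correct.

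The genuine gap is the uniformity in $\theta$ over compact sets, which is needed three times: for existence/uniqueness/consistency in \ref{main2} (one needs $\sup_{\theta\in K}\Vert \partial_\theta G_n(\theta)-B(\theta,\thetan)\Vert\overset{\pp}{\to}0$ and $\sup_{\theta\in K}\Vert G_n(\theta)-A(\theta,\thetan)\Vert\overset{\pp}{\to}0$, not just pointwise limits), and again in \ref{main4} to evaluate the block sums at the random point $\hat{\theta}_n$. You assert ``local uniform control of $\partial^2_\theta G_n$'' and, for \ref{main4}, propose a local Lipschitz argument in $\theta$; the latter quantitatively fails. By the mean value theorem and Assumption~\ref{assumptions_on_g_jump}.\ref{g_poly}, the $\theta$-Lipschitz constant of an individual summand such as $g_jg_k(\Delta_n,y,x,\theta)$ is of polynomial growth in $(x,y)$ but carries no factor $\Delta_n$, so after the $1/(n\Delta_n)$ normalisation the modulus of continuity of the sum is only $O_\PP(\Delta_n^{-1}\Vert\theta-\theta'\Vert)$; with $\Vert\hat{\theta}_n-\thetan\Vert=O_\PP((n\Delta_n)^{-1/2})$ the replacement cost is $O_\PP(n^{-1/2}\Delta_n^{-3/2})$, which vanishes only if $n\Delta_n^3\to\infty$ --- not assumed. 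The same obstruction blocks any naive bound on $\sup_{\theta\in K}\Vert\partial_\theta^2 G_n\Vert$. The paper's resolution is to prove increment bounds on the \emph{normalised sums themselves}, $\EE\vert\zeta_n(\theta)-\zeta_n(\theta')\vert^p\leq C_{K,p}\Vert\theta-\theta'\Vert^p$ with $p>d$ (Lemma~\ref{for_uni_P}), which exploits $f(0,x,x,\theta)=0$ through an It\^{o} decomposition of each summand into a $du$-integral, a $dW$-integral, and a compensated Poisson integral, each estimated by Jensen and iterated Burkholder--Davis--Gundy inequalities (Lemma~\ref{ito_ineq_jumps_sum}); a Kolmogorov-type tightness criterion (\cite[Chapter 14]{kallenberg1997}) then upgrades pointwise to uniform convergence. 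This machinery --- in particular the handling of the compensated Poisson integral, for which BDG must be iterated down the dyadic powers $2^l$ --- is the substantive technical content your sketch is missing.
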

The proof of Theorem~\ref{maintheo} is given in Section \ref{section_proofofmaintheorem_jump}. 
The condition $n\Delta_n^{2{\kappa_0}-1} \to 0$ is necessary for the asymptotic normality in \ref{main3} and \ref{main4}, when the estimating function is not an exact martingale. These results follow from the asymptotic normality of the normalized estimating function, see Lemma \ref{uconvP}, which is proved using a central limit theorem for martingales. The normalized estimating function differs from a martingale by a term of order $(n\Delta_n^{2\kappa_0-1})^{1/2}$, which must necessarily vanish. In case $n\Delta_n^{2\kappa_0-1}$ converges to a constant different from zero, the asymptotic normal distribution has a mean different from zero.\medskip
\begin{example}
Consider the SDE
\begin{align*}
dX^\theta_t &= \tilde{a}(X^\theta_t, \alpha)\, dt + \tilde{b}(X^\theta_t, \beta)\sigma\, dW_t + \int_\RR \tilde{b}(X^\theta_{t-}, \beta)z\, (N-\mu)(dt,dz)\,.
\end{align*}
The drift parameter $\alpha$, and the diffusion-jump parameter $\beta$ are the unknown parameters to be estimated. Note that the Poisson random measure $N$ and its intensity measure $\mu$ do not depend on the parameters. For simplicity, let $\alpha \in A \subseteq \RR$ and $\beta \in B \subseteq \RR$ so that $d=2$ (the results generalise to larger $d$ as well). Put $\theta = (\alpha,\beta)^\star $ and $\Theta = A \times B$, and suppose that Assumption~\ref{assumptions_on_X_jump} holds. Furthermore, suppose that $\sigma^2 + \gamma_2 = 1$, where $\gamma_k$ denotes the $k$th moment of the L\'{e}vy measure $\nu$. \medskip

By Lemma~\ref{lemma:expansion_jump}, for $\theta \in \Theta$ and $0\leq t< t+\Delta$ with $\Delta \leq \Delta_0$,
\begin{align*}
\EE(X^\theta_{t+\Delta}\mid X^\theta_t) &= X^\theta_t + \Delta\,\tilde{a}(X^\theta_t, \alpha) +\Delta^2 R_\theta(\Delta, X^\theta_t)\\
\EE( (X^\theta_{t+\Delta}-X^\theta_t)^2\mid X^\theta_t) &= \Delta\,\tilde{b}^2(X^\theta_t,\beta) + \Delta^2 R_\theta(\Delta,X^\theta_t)
\end{align*}
so, under weak conditions on the functions $m_1(x, \theta)$ and $m_2(x,\theta)$, 
\begin{align*}
g(t,y,x, \theta) &= \begin{pmatrix} \displaystyle m_1(x,\theta)\left(y-x-t\tilde{a}(x, \alpha)\right) \\ \displaystyle m_2(x, \theta)\left(\left(y-x-t\tilde{a}(x, \alpha)\right)^2 - t\tilde{b}^2(x, \beta)\right)
\end{pmatrix}
\end{align*}
satisfies Assumption~\ref{assumptions_on_g_jump} with ${\kappa_0} = 2$. \medskip

Suppose also that Assumption~\ref{assumptions_for_estimators} holds, and that $n\Delta_n^3\to 0$ as $n\to \infty$. Then, by Theorem~\ref{maintheo}.\ref{main3}, for any consistent $G_n$-estimator $\hat{\theta}_n$,
\begin{align}
\sqrt{n\Delta_n} (\hat{\theta}_n-\thetan) &\overset{\dd}{\longrightarrow} \nn_2(0,V(\thetan))\,,
\label{example_limit}
\end{align}
where $V(\thetan) = B(\thetan, \thetan)^{-1}C(\thetan,\thetan)(B(\thetan,\thetan)^\star )^{-1}$ with
\begin{align*}
B(\thetan, \thetan) &= -\int_\xx \begin{pmatrix} m_1(x, \thetan)\partial_\alpha \tilde{a}(x, \alphan) & 0 \\ 0 & m_2(x, \thetan) \partial_\beta \tilde{b}^2(x, \betan) \end{pmatrix}\, \pi(dx)
\end{align*}
and
\begin{align*}
C(\thetan, \thetan) &= \int_\xx \begin{pmatrix} m_1^2(x, \thetan)\tilde{b}^2(x, \betan) & m_1 m_2(x, \thetan)\tilde{b}^3(x, \betan)\gamma_3 \\ m_2 m_1(x, \thetan)\tilde{b}^3(x, \betan)\gamma_3  & m_2^2(x, \thetan) \tilde{b}^4(x, \betan)\gamma_4 \end{pmatrix}\, \pi(dx)\,.
\end{align*}
\label{example:masuda2011}\cqed
\end{example}
The SDE and the estimating function used in Example~\ref{example:masuda2011} correspond to those considered in \cite{masuda2011}, incorporated into the current parametric framework. The result (\ref{example_limit}) is in accordance with \cite[Theorem 3.4]{masuda2011}. Similarly, in the particular case of 
quadratic approximate martingale estimating functions, the result in Theorem~\ref{maintheo}.\ref{main3} essentially follows from \cite[Theorem 2.9]{masuda2013} by interpreting the quasi-likelihood estimator proposed in \cite{masuda2013} as a quadratic approximate martingale estimating function.

\section{Rate optimality and efficiency}\label{sec:optimal_jump}
Here, we investigate conditions ensuring rate-optimal and efficient estimators. In Section~ \ref{condefs}, we discuss the optimal rate and Fisher information for jump-diffusion models. In Sections~\ref{generaldrift} and \ref{onedimdiff}, conditions are given, which are designed to ensure rate optimality and efficiency of $G_n$-estimators in two specific types of submodels of (\ref{jumpSDE}). The interpretation and implications of these conditions are discussed. \medskip

Suppose in the following that $A \subseteq \RR^{d_1}$ and $B\subseteq \RR^{d_2}$ with $d = d_1+d_2$, and put $\Theta = A\times B$. Consider, for $\alpha \in A$, $\beta \in B$, and $\theta = (\alpha,\beta)$, the SDE
\begin{align}
dX^\theta_t &= a(X^\theta_t, \alpha)\, dt + b(X^\theta_t, \beta)\, dW_t + \int_\RR c(X^\theta_{t-},z, \alpha) N^\alpha(dt, dz)\,,
\label{eqn:SDEoptimal2}
\end{align}
where $N^\alpha$ has intensity measure $\mu_\alpha(dt,dz) = \nu_\alpha(dz)\,dt$.
The parameters $\alpha$ and $\beta$ are referred to as the drift-jump and diffusion parameters, respectively. For simplicity, the following assumption is introduced.
\begin{assumption}
Let $c_{x, \alpha}(z) = c(x,z, \alpha)$. One of the following cases \ref{condanew} or \ref{condbnew} applies for all $x\in\xx$ and $\theta \in \Theta$.
\begin{enumerate}[label=(\alph{*}), ref=(\alph{*})]
\item\label{condanew} The dominating measure $\tilde{\nu}$ of the family of L\'evy   measures is Lebesgue measure. The set $\ww(x)  =  c_{x,\alpha}(\RR)$ is open and does not depend on $\alpha$. The mapping $z \mapsto c_{x, \alpha}(z)$ is bijective with a continuously differentiable inverse $w \mapsto c_{x, \alpha}^{-1}(w)$. Let
  \begin{align*} 
\varphi(x,w, \alpha) &= q(c_{x,\alpha}^{-1}(w), \alpha) |\partial_w c_{x,\alpha}^{-1}(w)|\,,\quad w \in \ww(x) \,,
\end{align*}
be the transformation of the L\'{e}vy density $q(\,\cdot\,, \alpha)$ by $z \mapsto  c_{x, \alpha}(z)$, and let $\eta_x$ denote Lebesgue measure on $\ww(x)$. 
\item\label{condbnew} The dominating measure $\tilde{\nu}$ is counting measure on an at most countable set $\qq \subset \RR$, and $c_{x,\alpha}(z) = c_x(z)$ for all $z\in \qq$. Define $\ww(x) = c_x(\qq)$ and
\begin{align*}
\varphi(x,w, \alpha) &= \sum_{z \in c_x^{-1}(\{w\})} q(z, \alpha)\,,
\end{align*}
and let $\eta_x$ denote counting measure on $\ww(x)$.
\end{enumerate}
In both cases, for all $x\in \xx$,
\[
\partial_\theta \hspace{-1mm}\int_{\ww(x)} \hspace{-5mm} g(0,x+w,x, \theta)\varphi(x,w,  \alpha)\,\eta_x(dw) = \int_{\ww(x)} \hspace{-5mm} \partial_\theta\big(                              g(0,x+w,x, \theta)\varphi(x,w, \alpha)\big)\, \eta_x(dw).
\]
\dqed
\label{assumptions_on_c} 
\end{assumption}

\subsection{Conjecture on rate optimality and efficiency}\label{condefs}
Local asymptotic normality has not yet been established for the general model (\ref{eqn:SDEoptimal2}) under the present observation scheme and asymptotic scenario. However, when $\X$ is ergodic, and under Assumption \ref{assumptions_on_c} and suitable regularity conditions, we conjecture the following: The jump-diffusion model is locally asymptotically normal with rate $\sqrt{n \Delta_n}$ for the drift-jump parameter $\alpha$, rate $\sqrt{n}$ for the diffusion parameter $\beta$, and Fisher information
\begin{align}
\ii(\theta) &= \begin{pmatrix} \ii_{1}(\theta) & 0 \\ 0 & \ii_{2}(\theta) \end{pmatrix}\,,
\label{eqn:Fisherinf} 
\end{align}
where
\begin{align}
\ii_{1}(\theta) &= \int_\xx\left( \frac{\partial_\alpha a(x, \alpha)^\star \partial_\alpha a(x,\alpha)}{b^2(x, \beta)} ) + \int_{\ww(x)}\hspace{-5mm} \frac{\partial_\alpha \varphi(x,w,\alpha)^\star \partial_\alpha \varphi(x,w, \alpha)}{\varphi(x,w,\alpha)}\,\eta_x(dw)\right)\, \pi_\theta(dx)
\label{I1}
\end{align}
and
\begin{align*}
  \mathcal{I}_{2}(\theta) &=  \int_\xx\frac{\partial_\beta b^2(x, \beta)^\star \partial_\beta b^2(x, \beta)}{2b^4(x, \beta)}\, \pi_\theta(dx)\,.
\end{align*}
The measure $\eta_x$ is either Lebesgue measure or counting measure on $\ww(x)$, see Assumption~\ref{assumptions_on_c}. In particular, we conjecture that a consistent estimator $\hat{\theta}^\star_n = (\hat{\alpha}^\star_n,\hat{\beta}^\star_n)$ of $\theta^\star = (\alpha^\star,\beta^\star)$ is rate optimal if
\begin{align}
\begin{pmatrix} \sqrt{n\Delta_n} (\hat{\alpha}_n -\alphan) \\
  \sqrt{n}(\hat{\beta}_n-\betan) \end{pmatrix}
&\overset{\dd}{\longrightarrow} Z \,,
\label{eqn:rateoptimality}
\end{align}
where $Z$ is a zero-mean, $d$-dimensional random vector with a positive definite covariance matrix, and that $\hat{\theta}_n$ is efficient if 
\begin{align}
Z \sim \nn_d(0,\ii(\thetan)^{-1} )\,.
\label{eqn:efficient}
\end{align}
The conjecture is motivated by the following observations. Here, the results quoted from the literature are given in a generality suitable for our framework and, to keep the discussion concise, regularity conditions are not included. Let $\widehat{\X}_T = (X_t)_{0 \leq t \leq T}$ denote continuous-time observations of the sample path of $\X$ over the interval $[0,T]$ for $T>0$, and let $\widehat{X}_n = (X_0^n, X_1^n, \ldots, X_n^n)$ denote discrete-time observations of $\X$ sampled as in this paper. \medskip

For continuous diffusions, local asymptotic normality for discrete-time observations $\widehat{X}_n$ with rate $\sqrt{n\Delta_n}$ for $\alpha$, $\sqrt{n}$ for $\beta$, and Fisher information (\ref{eqn:Fisherinf}) was shown in \cite[Theorem 4.1]{gobet2002}. There is no reason to expect it to be possible to estimate the diffusion parameter at a faster rate or more accurately in the jump-diffusion framework considered here. \medskip
 
In \cite{soerensen1991}, likelihood theory was developed for continuous-time data $\widehat{\X}_T$ for models of the type (\ref{eqn:SDEoptimal2}) with only a drift-jump parameter $\alpha$, under the assumption that the diffusion coefficient $b(x)$ is known. (This assumption is necessary for continuous time data.) The rate of convergence is $\sqrt{T}$ and, in case \ref{condanew} of Assumption \ref{assumptions_on_c}, it is seen from formulas (3.4), (3.6), and Corollary 3.3 of \cite{soerensen1991} that the Fisher information is given by (\ref{I1}). There is no reason to believe that the drift-jump parameter can be estimated at a faster rate or more accurately from discrete-time data. In \cite{gloter2018}, the local asymptotic normality property was established for continuous-time data when $c(x,z,\alpha) = \gamma(x) z$ and the Poisson random measure does not depend on $\alpha$. \medskip

For the general model (\ref{eqn:SDEoptimal2}), an estimator $\hat{\theta}_n$ based on $\widehat{X}_n$ was derived in \cite{shimizu2006}, which, in case \ref{condanew} of Assumption~\ref{assumptions_on_c}, satisfies (\ref{eqn:rateoptimality}) and (\ref{eqn:efficient}), provided that $n\Delta_n^2 \to \infty$. Thus, the conjectured rate and Fisher information can be achieved. Comparison to the result in \cite{gobet2002} indicates that the rate and Fisher information for the diffusion parameter must necessarily be as conjectured, while comparison to the result in \cite{soerensen1991} indicates that this is true of the drift-jump parameter too.\medskip

For certain submodels of (\ref{eqn:SDEoptimal2}), the following papers showed results on local asymptotic normality, which are in accordance with our conjecture. All results are for discrete-time data with infinite-horizon, high-frequency asymptotics. In \cite[Propositions 1 \& 2]{becheri2016}, the model (\ref{eqn:SDEoptimal2}) is investigated with $b(x, \beta) \equiv b(x)$, $c(x,z,\alpha) \equiv z$ in case \ref{condanew} of Assumption \ref{assumptions_on_c}. It is assumed that  $n\Delta_n^2 \to 0$ as $n\to \infty$. The model studied in \cite{becheri2016}\ is more general than our model in one respect: the jump intensity is allowed to be state-dependent. In \cite[Theorem 2.2]{kohatsuhiga2017}, the model (\ref{eqn:SDEoptimal2}) is studied with only a one-dimensional drift parameter $\alpha$ in either of the cases \ref{condanew} or \ref{condbnew} of Assumption~\ref{assumptions_on_c}. The diffusion coefficient and jump mechanism are assumed to be known. In \cite{kawai2013-2}, the Ornstein-Uhlenbeck process
\[
dX_t = -\alpha_1(X_t - \alpha_2)dt + \beta dW_t + \int_\RR z \, (N - \mu)(dt,dz)
\]
is considered, where the random measure $N$ and its intensity measure $\mu$ do not depend on $\alpha$. Both \ref{condanew} and \ref{condbnew} of Assumption~\ref{assumptions_on_c} are covered. Furthermore, local asymptotic normality is shown in \cite{tran2017}, when the random measure is given by a Poisson process with the intensity $\lambda$ a parameter to be estimated. Finally, the model 
\begin{align*}
dX_t &= (\alpha -\gamma)\, dt + \beta\, dW_t + \int_\RR z \, N(dt,dz)
\end{align*}
is investigated in \cite{kohatsuhiga2014} with $\nu_\gamma(dz) = \gamma \varepsilon_1(dz)$, where $\varepsilon_1$ is the degenerate probability measure with point mass in $1$, and the unknown parameter $\theta = (\alpha,\gamma,\beta)$ is three-dimensional.

\subsection{Drift-jump parameter}\label{generaldrift}
In this section, we consider the submodel of (\ref{eqn:SDEoptimal2}) given by
\begin{align}
dX^\alpha_t &= a(X^\alpha_t, \alpha)\, dt + b(X^\alpha_t)\, dW_t + \int_\RR
c(X^\alpha_{t-},z, \alpha) N^\alpha(dt, dz)\,, 
\label{eqn:SDE_diffusion}
\end{align}
where $\alpha \in A$, and $\Theta=A$ is a non-empty, open subset of $\RR^d$. Theorem~\ref{maintheo} already yields the conjectured optimal rate for estimators of the parameter $\alpha$. In order to ensure efficiency, we impose the following sufficient condition, which presupposes Assumption \ref{assumptions_on_c}.
\begin{condition}
For each $\alpha \in A$, there exists an invertible $d\times d$ matrix $K_\alpha$ such that for all $x\in\xx$, and $\eta_x$-almost all $w\in \ww(x)$,
\begin{align*}
  \partial_y g(0,x,x, \alpha) = K_\alpha \,\frac{\partial_\alpha a(x, \alpha)^\star }{b^2(x)}
  \quad \text{and}
  \quad g(0,x+w,x, \alpha) = K_\alpha\, \frac{\partial_\alpha \varphi(x,w, \alpha)^\star }{\varphi(x,w, \alpha)}\,.
\end{align*}
\dqed
\label{cond:eff_driftjump}
\end{condition}
Here, $\partial_y g(0,x,x, \alpha)$ denotes $\partial_y g(0,y,x, \alpha)$ evaluated in $y=x$.
Using Remark~\ref{re:BC_expr}, the following Corollary~\ref{cor:jump_eff_1} may easily be verified.
\begin{corollary}
Consider the model given by (\ref{eqn:SDE_diffusion}). Suppose that the assumptions of Theorem~\ref{maintheo}, as well as Assumption
\ref{assumptions_on_c} and Condition~\ref{cond:eff_driftjump} hold. Then,
any consistent $G_n$-estimator $\hat{\alpha}_n$ has asymptotic variance $\ii_1^{-1}(\alphan)$ given by (\ref{I1}).\dqed
\label{cor:jump_eff_1}
\end{corollary}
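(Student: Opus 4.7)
The plan is to invoke Theorem~\ref{maintheo}.\ref{main3}, so that $\sqrt{n\Delta_n}(\hat{\alpha}_n - \alphan) \overset{\dd}{\longrightarrow} \nn_d(0, V(\alphan))$ with $V(\alphan) = B(\alphan,\alphan)^{-1} C(\alphan,\alphan) (B(\alphan,\alphan)^\star)^{-1}$, and then show that $V(\alphan) = \ii_1^{-1}(\alphan)$ by explicit computation of $B$ and $C$ under Condition~\ref{cond:eff_driftjump}. The point is that the assumed parametric form of $\partial_y g(0,x,x,\alpha)$ and $g(0,x+w,x,\alpha)$ exactly matches the drift and jump scores that appear in the generator expressions for $B$ and $C$, so that after integrating against $\pi$ we recover $\ii_1(\alphan)$ up to the matrix $K_{\alphan}$, which then cancels out of $V$.

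First, apply the expressions for $B$ and $C$ provided by the aforementioned Remark~\ref{re:BC_expr}. For the model~(\ref{eqn:SDE_diffusion}), the diffusion coefficient $b(x)$ does not depend on $\alpha$, so the $\partial^2_y$-term in $\mathcal{L}_\alpha\partial_\alpha g - \partial_\alpha \mathcal{L}_\alpha g$ drops out. The drift term contributes $-\partial_\alpha a(x,\alpha)\,\partial_y g(0,x,x,\alpha)$ when evaluated at $y=x$, and the jump term, after using Lemma~\ref{lemma:conseq} (so that $g(0,x,x,\alpha)=0$ and likewise $\partial_\alpha g(0,x,x,\alpha)=0$), the change of variables $w=c(x,z,\alpha)$ in Assumption~\ref{assumptions_on_c}, and the interchange of $\partial_\alpha$ and $\int$ from that same assumption, contributes $-\int_{\ww(x)} g(0,x+w,x,\alpha)\,\partial_\alpha \varphi(x,w,\alpha)\,\eta_x(dw)$. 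Thus, after stacking coordinates into a $d\times d$ matrix,
\begin{align*}
B(\alphan,\alphan) &= -\int_\xx \partial_y g(0,x,x,\alphan)\,\partial_\alpha a(x,\alphan)\,\pi(dx)\\
&\quad - \int_\xx\!\int_{\ww(x)} g(0,x+w,x,\alphan)\,\partial_\alpha \varphi(x,w,\alphan)\,\eta_x(dw)\,\pi(dx).
\end{align*}
For $C$, using $g(0,x,x,\alpha)=0$, the generator applied to $gg^\star$ at $y=x$ reduces to $b^2(x)\,\partial_y g(0,x,x,\alpha)\partial_y g(0,x,x,\alpha)^\star$ plus the jump contribution $\int g(0,x+w,x,\alpha)\,g(0,x+w,x,\alpha)^\star \varphi(x,w,\alpha)\,\eta_x(dw)$.

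Next, substitute Condition~\ref{cond:eff_driftjump}, namely $\partial_y g(0,x,x,\alphan) = K_{\alphan}\, \partial_\alpha a(x,\alphan)^\star / b^2(x)$ and $g(0,x+w,x,\alphan) = K_{\alphan}\, \partial_\alpha \varphi(x,w,\alphan)^\star / \varphi(x,w,\alphan)$, into these expressions. The drift term in $B$ becomes $-K_{\alphan} \int_\xx \partial_\alpha a(x,\alphan)^\star \partial_\alpha a(x,\alphan)/b^2(x)\,\pi(dx)$, and the jump term in $B$ becomes $-K_{\alphan}\int_\xx\int_{\ww(x)} \partial_\alpha\varphi^\star \partial_\alpha\varphi/\varphi\,\eta_x(dw)\,\pi(dx)$. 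By the definition~(\ref{I1}) of $\ii_1$, this sums to $B(\alphan,\alphan) = -K_{\alphan}\,\ii_1(\alphan)$. The same substitution in $C$ produces $C(\alphan,\alphan) = K_{\alphan}\,\ii_1(\alphan)\,K_{\alphan}^\star$, since the factors of $b^2(x)$ and $\varphi(x,w,\alphan)$ combine with the squared fractions to leave the Fisher-information integrand.

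Finally, using that $\ii_1(\alphan)$ is symmetric (visible from its definition) and that $K_{\alphan}$ is invertible,
\begin{align*}
V(\alphan) &= \bigl(-K_{\alphan}\,\ii_1(\alphan)\bigr)^{-1}\, K_{\alphan}\,\ii_1(\alphan)\,K_{\alphan}^\star\, \bigl(-\ii_1(\alphan)\,K_{\alphan}^\star\bigr)^{-1} = \ii_1^{-1}(\alphan),
\end{align*}
which is the claim. The only genuinely delicate step is the derivation of the explicit formulas for $B$ and $C$ in step one: one must carefully exploit $g(0,x,x,\alpha)=0$, the interchange of $\partial_\alpha$ with the $\nu_\alpha$-integral (which is precisely what Assumption~\ref{assumptions_on_c} is tailored for), and the change of variables $w = c(x,z,\alpha)$ to convert the $\nu_\alpha$-expressions into $\varphi$-expressions. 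Once these formulas are in hand, substituting Condition~\ref{cond:eff_driftjump} and simplifying is purely algebraic, and the matrix $K_{\alphan}$ disappears from $V$ as shown.
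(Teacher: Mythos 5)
Your proposal is correct and follows exactly the route the paper intends: the paper itself only says that the corollary "may easily be verified" using Remark~\ref{re:BC_expr}, and your argument is precisely that verification — specialise the expressions for $B(\thetan,\thetan)$ and $C(\thetan,\thetan)$ to the model (\ref{eqn:SDE_diffusion}), substitute Condition~\ref{cond:eff_driftjump} to get $B=-K_{\alphan}\ii_1(\alphan)$ and $C=K_{\alphan}\ii_1(\alphan)K_{\alphan}^\star$, and cancel $K_{\alphan}$ in $V=B^{-1}C(B^\star)^{-1}$ using the symmetry of $\ii_1(\alphan)$.
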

Thus $\hat{\alpha}_n$ is efficient, provided our conjecture is correct.
The first equation in Condition~\ref{cond:eff_driftjump} corresponds
to the condition given in \cite[Condition 1.2]{efficient} for
efficiency of estimators of the drift parameter of a continuous
diffusion. The second equation is of a type not seen in \cite{efficient}.
It is a jump-related condition on the
off-diagonal $y\neq x$ of $g(t,y,x, \theta)$ when $t=0$. In the limit $\Delta_n \to
0$, the full sample path of $\X$ is observed and $g(0,y,x, \alpha)$
and its derivatives are evaluated in $y=X_t$
and $x=X_{t-}$. For continuous diffusions, $X_t = X_{t-}$ for all $t$,
so conditions for $y\neq x$ are not relevant. For 
jump-diffusions, however, $X_t \neq X_{t-}$ whenever $t$ is a jump
time so, in this case, off-diagonal
conditions are expected.
\medskip

It is evident that an estimating function which satisfies Condition
\ref{cond:eff_driftjump} discriminates, asymptotically, between
situations where $X_t = X_{t-}$ and $X_t \neq X_{t-}$. When $X_t =
X_{t-}$, the function must behave like an efficient estimating function for the
drift parameter of the corresponding continuous diffusion. However, when $X_t \neq X_{t-}$, asymptotically, the function takes
on the form of the term associated with jumps in the score function of the
continuously sampled jump-diffusion process \cite{soerensen1991}.
In essence, the estimating function must, in the limit
$\Delta_n\to 0$, equal the  score function of the jumps at $(y,x)=(X_t,X_{t-})$ when $X_t \neq X_{t-}$. This  severely restricts the class of efficient estimators for
jump-diffusions in contrast to continuous diffusions.

\subsection{Drift-jump and diffusion parameter}\label{onedimdiff}
In this section, we consider the model (\ref{eqn:SDEoptimal2}) where $A \subseteq \RR^2$ and $B
\subseteq \RR$ are non-empty, open sets. Let  $g_\alpha = (g_1, g_2)^\star $ and $g_\beta = g_3$. For convenience, we introduce the following notation. Let $\ss_\alpha$ denote the support of the L\'evy-measure $\nu_\alpha$, and put $\tau_\alpha(y,z) = y + c(y,z,\alpha)$. Define the sets ${\cal M}_k(y,\alpha)$ iteratively by ${\cal M}_0(y,\alpha) = \{ y \}$, and ${\cal  M}_k(y,\alpha) = \tau_\alpha({\cal M}_{k-1} (y,\alpha), \ss_\alpha)$ for $k\in \NN$. The set ${\cal M}_k(y,\alpha)$ is the subset of $\xx$ that can be reached from the point $y$ by $k$ pure jumps, when $\alpha$ is the true drift-jump parameter.
The following Condition~\ref{assumption_final} is an additional condition for use in Theorem~\ref{maintheo_final}. The theorem improves on Theorem~\ref{maintheo}.\ref{main3}, in that it yields the conjectured optimal rate of convergence for consistent $G_n$-estimators $\hat{\theta}_n^\star = (\hat{\alpha}_n^\star,\hat{\beta}_n)$.
\begin{condition}
Suppose that for all $\tilde{\alpha} \in A$, $\theta \in \Theta$, and $x \in \xx$
\begin{align*} 
g_\beta(0,y,x, \theta) &= 0\,,\quad y \in {\cal M}_k(x,\tilde \alpha), \quad k=1,2,3,4\\
\partial_y g_\beta(0, y,x, \theta) &= 0\,,\quad y \in {\cal M}_k(x,\tilde \alpha), \quad k=0,1,2,3 \\
\partial^2_y \partial_\alpha g_\beta(0, y, x, \theta)&=0\,,\quad y \in {\cal M}_k(x,\tilde \alpha), \quad k=0,1\\
\partial_\alpha g_\beta^{(1)}(y,x, \theta) &=0, \quad y \in {\cal M}_1(x,\tilde \alpha).
\end{align*}
\dqed
\label{assumption_final}
\end{condition}
\begin{theorem}
Consider the model given by (\ref{eqn:SDEoptimal2}) with $A \subseteq \RR^2$ and $B
\subseteq \RR$. Suppose that all the assumptions of Theorem~\ref{maintheo} except Assumption~\ref{assumptions_for_estimators}.\ref{Cposdef} hold, and that Condition~\ref{assumption_final} holds. If (\ref{jump_AMG}) holds with $R_\theta(t,x)$ not identically equal to $0$, suppose also that $n\Delta_n^{2({\kappa_0}-1)} \to 0$ as $n\to \infty$. Let 
\begin{align*}
\begin{split}
B_1(\thetan, \thetan) &= - \int_\xx \partial_y g_\alpha(0,x,x, \thetan)\partial_\alpha a(x,\alphan) \,\pi(dx)\\
&\hspace{5mm} - \int_\xx \int_\RR\partial_y g_\alpha(0,x+c(x,z,\alphan),x,\thetan)\partial_\alpha c(x,z, \alphan) \,\nu_\alphan(dz)\,\pi(dx)\\
&\hspace{5mm} - \int_\xx \int_\RR g_\alpha(0,x+c(x,z, \alphan),x, \thetan)\partial_\alpha   q(z,\alphan)\,\tilde{\nu}(dz)\,\pi(dx)\,,
\end{split}\\[0.5em]
\begin{split}
B_2(\thetan, \thetan) &= -\int_\xx \tfrac{1}{2}\partial^2_y g_\beta(0,x,x, \thetan) \partial_\beta b^2(x, \betan) \, \pi(dx)\,,
\end{split}\\[0.5em]
\begin{split}
D_1(\theta, \theta)
&= \int_\xx b^2(x, \beta)\partial_y g_\alpha (\partial_y g_\alpha)^\star (0,x,x,\theta) \,\pi_\theta(dx)\\
&\hspace{5mm} + \int_\xx \int_\RR g_\alpha g_\alpha^\star (0,x+c(x,z, \alpha),x,\theta)\,\nu_\alpha(dz)\, \pi_\theta(dx)\,,
\end{split}\\[0.5em]
\begin{split}
D_2(\theta, \theta) 
&= \int_\xx \tfrac{1}{2}  b^4(x, \beta)  \left(\partial^2_y g_\beta(0,x,x, \theta)\right)^2 \,\pi_\theta(dx)\,,
\end{split}
\end{align*}
and assume that $D_1(\theta, \theta)$ is invertible, and $D_2(\theta, \theta)\neq 0$ for all $\theta \in \Theta$. Then, for any consistent $G_n$-estimator $\hat{\theta}_n$, it holds that
\begin{align}
\begin{pmatrix} \sqrt{n\Delta_n} (\hat{\alpha}_n - \alphan) \\  \sqrt{n}(\hat{\beta}_n-\betan) \end{pmatrix}
  &\overset{\dd}{\longrightarrow} \nn_3(0,V(\thetan))
\label{eqn:convD_final}
\end{align}
where
\begin{align*}
V(\thetan) = \begin{pmatrix}B_1(\thetan, \thetan)^{-1}D_1(\thetan, \thetan)(B_1(\thetan, \thetan)^\star )^{-1} & 0\\ 0 & B_2(\thetan, \thetan)^{-2}D_2(\thetan, \thetan)
\end{pmatrix}
\end{align*}
is positive definite. Furthermore, $\widehat{V}_n = \text{diag}( \widehat{V}_{n,1} , \widehat{V}_{n,2} )$ given by
\begin{align*}
\begin{split}
\widehat{V}_{n,1} &= n\Delta_n\left( \sum_{i=1}^n \partial_\alpha g_\alpha(\Delta_n,\xtr,\xtl,\hat{\theta}_n)\right)^{-1} \left( \sum_{i=1}^n g_\alpha g_\alpha^\star (\Delta_n,\xtr,\xtl, \hat{\theta}_n)\right)\\
&\hspace{5mm}\times\left( \sum_{i=1}^n \partial_\alpha g_\alpha^\star (\Delta_n,\xtr,\xtl,\hat{\theta}_n)\right)^{-1}
\end{split}\\[0.5em]
  \widehat{V}_{n,2} &= n\left( \sum_{i=1}^n \partial_\beta g_\beta(\Delta_n,\xtr,\xtl,\hat{\theta}_n)\right)^{-2} \sum_{i=1}^n g_\beta^2 (\Delta_n,\xtr,\xtl,\hat{\theta}_n)
\end{align*}
is a consistent estimator of $V(\thetan)$, so
\begin{align}
  \widehat{V}_n^{-1/2} \begin{pmatrix} \sqrt{n\Delta_n} (\hat{\alpha}_n - \alphan) \\  \sqrt{n}(\hat{\beta}_n-\betan) \end{pmatrix} &\overset{\dd}{\longrightarrow} \nn_3(0,\text{I}_3)\,,
                                                                                                                                      \label{eqn:convd_final2}
\end{align}
where $\widehat{V}_n^{1/2}$ is the unique, positive semidefinite square root of  $\widehat{V}_n$.\dqed
\label{maintheo_final}
\end{theorem}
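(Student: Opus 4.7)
The plan is to apply the standard $M$-estimator scheme, but with different scalings for the two blocks of the parameter. Starting from $G_n(\hat\theta_n)=0$, a mean-value Taylor expansion yields $G_n(\thetan) + \bar S_n(\hat\theta_n-\thetan)=0$, where $\bar S_n$ is the Jacobian $\partial_\theta G_n$ evaluated rowwise at points on the segment from $\thetan$ to $\hat\theta_n$. Setting the rate matrix $D_n = \mathrm{diag}(\sqrt{n\Delta_n},\sqrt{n\Delta_n},\sqrt{n})$, this rearranges to
\[
D_n(\hat\theta_n-\thetan) = -\bigl(D_n^{-1}\bar S_n D_n^{-1}\bigr)^{-1} D_n^{-1} G_n(\thetan),
\]
so the task reduces to showing (a) $D_n^{-1}\bar S_n D_n^{-1}\overset{\PP}{\longrightarrow}\mathrm{diag}(B_1(\thetan,\thetan),B_2(\thetan,\thetan))$ and (b) $D_n^{-1}G_n(\thetan)\overset{\dd}{\longrightarrow}\nn_3(0,\mathrm{diag}(D_1(\thetan,\thetan),D_2(\thetan,\thetan)))$. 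The final conclusion follows by an application of Slutsky's theorem and the continuous mapping theorem.

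For (a), each entry of $\partial_\theta g$ is expanded by Lemma~\ref{lemma:expansion_jump} and an ergodic limit is taken using Lemma~\ref{ucp}. The $(\beta,\beta)$-block, scaled by $1/n$, produces $B_2$ because the first-order Taylor terms vanish on the diagonal (Condition~\ref{assumption_final} with $k=0$ forces $g_\beta(0,x,x,\theta)=\partial_y g_\beta(0,x,x,\theta)=0$), leaving the $\tfrac12 \partial_y^2 g_\beta(0,x,x,\thetan)\partial_\beta b^2(x,\betan)$ term; the $(\alpha,\alpha)$-block, scaled by $1/(n\Delta_n)$, yields $B_1$ by the analogous calculation applied to the drift and jump contributions. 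The off-diagonal entries $(\alpha,\beta)$ and $(\beta,\alpha)$ vanish in the limit: on the $(\beta,\alpha)$-side the rescaling carries an extra factor $\Delta_n^{-1/2}$, so Condition~\ref{assumption_final} on $\partial^2_y\partial_\alpha g_\beta$ (for $k=0,1$) and on $\partial_\alpha g_\beta^{(1)}$ (for $k=1$) are precisely what is needed to neutralize the leading terms appearing in the Taylor expansion of $\partial_\alpha g_\beta$.

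The main obstacle is the $\sqrt{n}$-CLT for $G_{n,\beta}(\thetan)$, as Theorem~\ref{maintheo} gives only the slower $\sqrt{n\Delta_n}$-rate. Decomposing $\sqrt{n}\,G_{n,\beta}(\thetan) = n^{-1/2}\sum_{i=1}^n Y_{n,i} + R_n$, where $Y_{n,i}=\Delta_n^{-1}(g_\beta-\EE(g_\beta\mid\xtl))$ and $R_n$ is the approximate-martingale bias, the hypothesis $n\Delta_n^{2({\kappa_0}-1)}\to 0$ is precisely the bound required to make $R_n\to 0$ in probability after the $\sqrt{n}$-rescaling. The variance of the martingale array is computed by splitting each $g_\beta(\Delta_n,\xtr,\xtl,\thetan)$ according to whether a jump occurred in $(\tminus,\tplus]$: on no-jump intervals, an Itô/Lemma~\ref{lemma:expansion_jump} expansion combined with $g_\beta(0,x,x,\theta)=\partial_y g_\beta(0,x,x,\theta)=0$ produces the $\tfrac12 b^4(\partial_y^2 g_\beta)^2$ contribution leading to $D_2$; on jump intervals, the summands get evaluated at points of $\mathcal{M}_k(\xtl,\alphan)$ for $k\ge1$, and the vanishing of $g_\beta$ on $\mathcal{M}_1,\ldots,\mathcal{M}_4$ and of $\partial_y g_\beta$ on $\mathcal{M}_0,\ldots,\mathcal{M}_3$ is exactly the strength needed to drive those residuals down to $o(\Delta_n)$ per summand. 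A standard martingale CLT then yields joint asymptotic normality of $D_n^{-1}G_n(\thetan)$, with the block-diagonal covariance arising because the rescaled cross-covariance between the $\alpha$- and $\beta$-components reduces to an integral of $g_\alpha g_\beta(0,x+c(x,z,\alphan),x,\thetan)$ against $\nu_{\alphan}$, which vanishes by Condition~\ref{assumption_final} since $x+c(x,z,\alphan)\in\mathcal{M}_1(x,\alphan)$.

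Positive definiteness of $V(\thetan)$ follows from the assumed invertibility of $D_1(\thetan,\thetan)$, the condition $D_2(\thetan,\thetan)\neq 0$, and the invertibility of $B_1,B_2$ (the latter being a consequence of Assumption~\ref{assumptions_for_estimators}.\ref{dthetalimitNS} once the simplifications entailed by Condition~\ref{assumption_final} are applied). Consistency of $\widehat V_n$ is obtained by substituting $\hat\theta_n$ into the defining sums and applying Lemma~\ref{ucp} along with continuity and the polynomial growth of the integrands, which also justifies moving derivatives inside the ergodic limit. Combining these convergences through Slutsky's theorem delivers both (\ref{eqn:convD_final}) and (\ref{eqn:convd_final2}).
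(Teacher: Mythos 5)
Your overall architecture is the right one and is, in substance, what the paper does: the paper feeds $A_n=\mathrm{diag}(\sqrt{n\Delta_n},\sqrt{n\Delta_n},\sqrt{n})$, $W(\theta)=\mathrm{diag}(B_1,B_2)$ and $Z=\nn_3(0,\mathrm{diag}(D_1,D_2))$ into the general estimating-equation theorem of Jacod and S{\o}rensen (which internally performs exactly the mean-value expansion you write down), and concentrates the work in Lemma~\ref{lemma:uconvP_final}: the vanishing of $\frac{1}{n\Delta_n^{3/2}}\sum_i\partial_\alpha g_\beta$, the convergence of $\frac{1}{n\Delta_n^{2}}\sum_i g_\beta^2$ to $D_2$, and the joint CLT for $\delta_n\frac{1}{n\Delta_n}\sum_i g$. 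You correctly identify essentially every substantive ingredient: which parts of Condition~\ref{assumption_final} kill the $(\beta,\alpha)$ Jacobian entry and the $\alpha$--$\beta$ cross-covariance, why $n\Delta_n^{2(\kappa_0-1)}\to 0$ is exactly the bias condition needed after the $\sqrt{n}$ rescaling, and why $B_1$ and $B_2$ inherit invertibility from Assumption~\ref{assumptions_for_estimators}.\ref{dthetalimitNS} once Condition~\ref{assumption_final} makes $B(\thetan,\thetan)$ block-triangular.

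There is, however, a concrete error in your normalizations which makes both of your displayed convergence claims false as stated. With $\bar S_n=\partial_\theta G_n=\frac{1}{n\Delta_n}\sum_i\partial_\theta g$ and $D_n=\mathrm{diag}(\sqrt{n\Delta_n},\sqrt{n\Delta_n},\sqrt{n})$, every entry of $D_n^{-1}\bar S_nD_n^{-1}$ and every component of $D_n^{-1}G_n(\thetan)$ tends to zero, not to $\mathrm{diag}(B_1,B_2)$ and $\nn_3(0,\mathrm{diag}(D_1,D_2))$. The correct factorization is $D_n(\hat\theta_n-\thetan)=-\bigl(D_n\bar S_nD_n^{-1}\bigr)^{-1}D_nG_n(\thetan)$, with the $D_n$ on the left rather than $D_n^{-1}$. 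The similarity transform $D_n\bar S_nD_n^{-1}$ leaves the diagonal blocks untouched, so the $(\beta,\beta)$ entry of the normalized Jacobian is still $\frac{1}{n\Delta_n}\sum_i\partial_\beta g_\beta\to B_{33}=B_2$ by Lemma~\ref{uconvP}; your claim that it is ``scaled by $1/n$'' is wrong -- $\frac1n\sum_i\partial_\beta g_\beta\to 0$, which would leave you with a singular limit matrix and no way to apply Slutsky -- and it is also incompatible with your (correct) assertion that the $(\beta,\alpha)$ entry in the same row carries an extra $\Delta_n^{-1/2}$. The point of the rate improvement is not that the $\beta$-derivative of the estimating function needs a new normalization (it does not); it is that $G_{n,\beta}(\thetan)$ itself satisfies the faster CLT $\sqrt{n}\,G_{n,\beta}(\thetan)=\frac{1}{\sqrt{n}\,\Delta_n}\sum_i g_\beta\to\nn(0,D_2)$, i.e.\ the relevant object is $D_nG_n(\thetan)$, not $D_n^{-1}G_n(\thetan)$. (The errors happen to cancel in the ratio, since $-\bigl(D_n^{-1}\bar S_nD_n^{-1}\bigr)^{-1}D_n^{-1}G_n$ equals the same random vector, but a proof cannot rest on two false intermediate limits.) Once the $D_n$'s are placed correctly the rest of your argument goes through; you should then also justify uniformity of the Jacobian convergence on a compact neighbourhood of $\thetan$ (so that evaluation at the mean-value points is harmless), which is what the tightness estimates of Lemmas~\ref{for_uni_P_alpha4} and~\ref{for_uni_P_alpha4b} provide.
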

The proof of Theorem~\ref{maintheo_final} is given in Section~ \ref{section_proofofmaintheorem_jump}. When $G_n(\theta)$ is not a martingale estimating function, the condition $n\Delta_n^{2({\kappa_0}-1)} \to 0$ is necessary for (\ref{eqn:convD_final}) and \eqref{eqn:convd_final2}. Here, $\Delta_n$ must converge to zero a bit faster than in Theorem \ref{maintheo}. The reason is that, under the conditions ensuring rate optimality, the third coordinate of the estimating function must be normalized differently from the first two coordinates. The third coordinate of the difference between the normalized estimating function and the normalized martingale is of order $( n\Delta_n^{2({\kappa_0}-1)})^{1/2}$, while the other two are of the same order as in Theorem \ref{maintheo}. \medskip

Making use of Remark~\ref{re:BC_expr}, it is evident that efficiency of the estimators according to our conjecture is ensured by the following additional Condition~\ref{cond:eff_final}, which presupposes Assumption~\ref{assumptions_on_c}.
\begin{condition}
For all $\theta \in \Theta$ there exists an invertible $2\times 2$ matrix $K_\theta^{(1)}$ and a constant $K_\theta^{(2)} \neq 0$, such that
\begin{align*}
&\partial_y g_\alpha (0,x,x,\theta) = K_\theta^{(1)} \frac{\partial_\alpha a(x,\alpha)^\star}{b^2(x,\beta)} \,,\quad \partial^2_y g_\beta(0,x,x,
                 \theta) = K_\theta^{(2)} \frac{\partial_\beta b^2 (x, \beta)}{b^4(x, \beta)}\,,\\
  \intertext{and}
&g_\alpha(0,x+w,x,\theta) = K_\theta^{(1)} \frac{\partial_\alpha \varphi(x,w,\alpha)^\star}{\varphi(x,w,\alpha)}
\end{align*}
for $\eta_x$-almost all $w\in \ww(x)$, and all $x\in\xx$.\dqed
\label{cond:eff_final}
\end{condition} 
\begin{corollary}
Suppose that the assumptions of Theorem~\ref{maintheo_final}, as well as Assumption
\ref{assumptions_on_c} and Condition~\ref{cond:eff_final} hold. Then, any consistent $G_n$-estimator $\hat{\theta}_n$ has asymptotic variance $\ii^{-1}(\thetan)$ given by (\ref{eqn:Fisherinf}). \dqed
\label{cor:final}
\end{corollary}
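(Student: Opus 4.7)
By Theorem~\ref{maintheo_final} the estimator is asymptotically normal with covariance $V(\thetan) = \mathrm{diag}\!\left(B_1^{-1}D_1(B_1^\star)^{-1},\, B_2^{-2}D_2\right)$, so the task reduces to the purely algebraic identity $V(\thetan) = \ii^{-1}(\thetan)$ with $\ii(\thetan) = \mathrm{diag}(\ii_1(\thetan),\ii_2(\thetan))$ as in~(\ref{eqn:Fisherinf}). Thanks to the block-diagonal structure, the diffusion and drift-jump blocks may be handled independently.

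For the diffusion block the computation is immediate. Substituting $\partial_y^2 g_\beta(0,x,x,\theta) = K_\theta^{(2)}\,\partial_\beta b^2(x,\beta)/b^4(x,\beta)$ from Condition~\ref{cond:eff_final} into the expressions for $B_2(\thetan,\thetan)$ and $D_2(\thetan,\thetan)$ yields $B_2(\thetan,\thetan) = -K_{\thetan}^{(2)}\,\ii_2(\thetan)$ and $D_2(\thetan,\thetan) = (K_{\thetan}^{(2)})^{2}\,\ii_2(\thetan)$, so the scalar factor cancels in $B_2^{-2}D_2 = \ii_2^{-1}(\thetan)$.

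The drift-jump block is the main obstacle: $B_1(\thetan,\thetan)$ contains two separate jump integrals, one against $\partial_\alpha c\cdot q\,\tilde\nu$ and one against $\partial_\alpha q\,\tilde\nu$, which must be fused into a single integral against $\partial_\alpha\varphi\,\eta_x$ before the second equation of Condition~\ref{cond:eff_final} can be applied. The identity I would exploit is obtained by differentiating with respect to $\alpha$ the change-of-variables relation
\[
\int_\RR g_\alpha(0,x+c(x,z,\alpha),x,\theta)\,q(z,\alpha)\,\tilde{\nu}(dz) = \int_{\ww(x)} g_\alpha(0,x+w,x,\theta)\,\varphi(x,w,\alpha)\,\eta_x(dw),
\]
which holds identically in $\alpha$ by the construction of $\varphi$ in both cases of Assumption~\ref{assumptions_on_c}. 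On the right-hand side the interchange of $\partial_\alpha$ and $\int\cdot\,\eta_x(dw)$ is supplied by the last display of Assumption~\ref{assumptions_on_c}; on the left one differentiates directly, picking up an extra $\partial_\alpha c$ contribution via the chain rule. The two contributions arising from differentiating $g_\alpha$ in its $\theta$-argument on either side are equal, again by the same change of variables now applied to $\partial_\alpha g_\alpha$, so they cancel. What survives is exactly the jump identity needed to merge the second and third integrals in $B_1$ into $-\int_\xx\!\int_{\ww(x)} g_\alpha(0,x+w,x,\theta)\,\partial_\alpha\varphi(x,w,\alpha)\,\eta_x(dw)\,\pi(dx)$.

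With the jump terms thus consolidated, substituting Condition~\ref{cond:eff_final} into the first (drift) term of $B_1$ and into the merged jump term gives, after factoring $K_{\thetan}^{(1)}$ out on the left, exactly $-K_{\thetan}^{(1)}\,\ii_1(\thetan)$ by the definition~(\ref{I1}) of $\ii_1$. For $D_1$, the jump integral is converted to one over $\ww(x)$ by plain change of variables, after which Condition~\ref{cond:eff_final} contributes a factor $K_{\thetan}^{(1)}$ on the left and $K_{\thetan}^{(1)\star}$ on the right of both the drift and jump terms, yielding $D_1(\thetan,\thetan) = K_{\thetan}^{(1)}\,\ii_1(\thetan)\,K_{\thetan}^{(1)\star}$. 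These two factors cancel in $B_1^{-1}D_1(B_1^\star)^{-1} = \ii_1^{-1}(\thetan)$, and combined with the diffusion block this gives $V(\thetan) = \ii^{-1}(\thetan)$, completing the plan.
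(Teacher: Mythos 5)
Your proposal is correct and follows essentially the same route as the paper, which simply invokes Remark~\ref{re:BC_expr} and Theorem~\ref{maintheo_final}: the jump-term consolidation you derive by differentiating the change-of-variables identity (with the $\partial_\alpha g_\alpha$ contributions cancelling, justified by the last display of Assumption~\ref{assumptions_on_c}) is exactly the content of that remark, and the remaining substitutions of Condition~\ref{cond:eff_final} giving $B_1=-K^{(1)}_{\thetan}\ii_1$, $D_1=K^{(1)}_{\thetan}\ii_1 K^{(1)\star}_{\thetan}$, $B_2=-K^{(2)}_{\thetan}\ii_2$, $D_2=(K^{(2)}_{\thetan})^2\ii_2$ are the intended computation.
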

Thus $\hat{\theta}_n$ is efficient, provided our conjecture is correct.
The additional condition for efficiency in Condition \ref{cond:eff_final}, compared to Condition~\ref{cond:eff_driftjump}, is identical to the one identified in \cite{efficient} for the diffusion parameter of a continuous diffusion, and in \cite{jakobsen2017} for the diffusion parameter of a continuous diffusion in the case of infill asymptotics. \medskip

Condition~\ref{assumption_final} requires that the $g_\beta$ coordinate of $g$ as well as several of its derivatives vanish at a number of points depending on the jump dynamics of the process. This reveals that for many SDE models, rate optimal and efficient estimation of the diffusion parameter is not feasible when using the smooth estimating functions considered in this paper. In Theorem~\ref{maintheo_final}, the non-degeneracy condition on $D_2(\theta, \theta)$ requires that $\partial^2_y g_\beta(0,x,x, \theta)$ does not vanish $\pi_\theta$-almost surely for any $\theta$. This easily conflicts with Condition~\ref{assumption_final}. For example, for models where the first equation of the condition amounts to the requirement that $g_\beta(0,y,x, \beta)=0$ for all $x,y \in \xx$, it is clear that the non-degeneracy condition cannot be satisfied. \medskip

Finally, it may be noted that for special cases of (\ref{eqn:SDEoptimal2}) with only a one-dimensional diffusion parameter and no drift-jump parameter, Condition~\ref{assumption_final} may be reduced to its first and second equations involving $g_\beta$ and $\partial_y g_\beta$ for $k=1,2$ and $k=0,1$ respectively, see \cite[Section 3.4.4]{phdthesis} for further details. \medskip

For continuous diffusions, conditions under which an approximate martingale estimating function is rate optimal and efficient are quite straightforward, and it is easy to find estimating functions which satisfy the conditions. This was concluded in \cite{efficient} for the current sampling scheme, and in \cite{jakobsen2017} for infill asymptotics. The present paper demonstrates that the situation is more complex in the presence of jumps. In conclusion, the findings in this paper indicate that a way to obtain a larger number of rate optimal and efficient estimators might be to relax the smoothness conditions, and to allow the estimating function to discriminate more explicitly between intervals with jumps and intervals without jumps.  \medskip

Let us illustrate this conclusion by a brief consideration of the maximum likelihood estimator. In the particular case where $c(x,z,\alpha) = z$, the approximation to the transition density function proposed in \cite{yu2007} can be applied. If we assume that the drift function depends on $\alpha_1$ and the jump mechanism on $\alpha_2$, we find the approximation
\[
p(\Delta,y,x,\theta) = \frac{L(y,x,\alpha_1,\beta)}{\sqrt{2\pi\Delta} b(y,\beta)}(1 + O(\Delta))  +\varphi(y-x,\alpha_2) + O(\Delta^2),
\]
where
\begin{eqnarray*}
\lefteqn{L(y,x,\alpha_1,\beta)} \\ && = \exp \left(-\frac{(f_1(y,\beta)-f_1(x,\beta))^2}{2\Delta} +  f_2(y,\alpha_1,\beta) - f_2(x,\alpha_1,\beta) -\frac12 \log \frac{b(y,\beta)}{b(x,\beta)} \right)
\end{eqnarray*}
with $f_1(x,\beta) = \int^x 1/b (z,\beta) \,dz$ and $f_2(x, \alpha_1,\beta) 
= \int^x a(z,\alpha_1)/b^2(z,\beta) \,dz$. Note that in the case considered here, $\varphi(x,w,\alpha_2)$ does not depend on the first argument. Furthermore, $\Delta^{-1/2} \exp (-(f_1(y,\beta)-f_1(x,\beta))^2/(2\Delta) )$ functions as a kind of smooth indicator function which ensures that when increments are large, weight is mainly put on $\varphi$ when $\Delta$ is small.
\medskip
 
We consider the normalized score function of the form (\ref{GnDef}) with
\[
  g(\Delta,y,x,\theta) = \left( \partial_{\alpha_1} \log p(\Delta,y,x,\theta),\partial_{\alpha_2} \log p(\Delta,y,x,\theta), \Delta \partial_{\beta} \log p(\Delta,y,x,\theta)  \right)^\star.
\]    
Under weak regularity conditions, the score function is a martingale, see e.g.\ \cite{oebnms}. By straightforward, but rather long, calculations it follows from the approximation proposed in \cite{yu2007} that the derivatives of $g(\Delta,y,x,\theta)$ satisfy the equalities in Condition \ref{assumption_final} for all $x$ and $y$ and the efficiency conditions in Condition \ref{cond:eff_final} -- under suitable regularity assumptions on the remainder terms. However, the score function does not satisfy the smoothness condition in Assumption \ref{assumptions_on_g_jump}.\ref{g_poly}. Specifically, the function $y \mapsto g(0,y, x,\theta)$ is not continuous at $y=x$. Therefore, for instance, $\partial_y g_\alpha (0,x,x,\theta)$ in Condition \ref{cond:eff_final} must be interpreted as $\lim_{\Delta \rightarrow 0}  \partial_y g_\alpha (\Delta,x,x,\theta)$. This discontinuity explains why $g$ can satisfy both Condition \ref{assumption_final} and Condition \ref{cond:eff_final}. More importantly, the proofs of the present paper do not apply to the maximum likelihood estimator. Whereas our proofs and smoothness conditions are natural for the usual approximate martingale estimating functions based on conditional moments, it is an interesting theoretical question whether the proofs can be modified to cover less smooth estimating functions like the score function.

\section{Proofs}\label{sec:jump_proofs}
Section~\ref{sec:main_lemma_jump} presents lemmas used in Section~\ref{section_proofofmaintheorem_jump}, together with results from \cite{jacod2018}, to prove Theorems~\ref{maintheo} and \ref{maintheo_final}. Proofs of the lemmas are given in Section~\ref{sec:main_lemma_proofs}.

\subsection{Main Lemmas}\label{sec:main_lemma_jump}
\begin{lemma} 
Consider the model given by (\ref{jumpSDE}). Suppose that Assumptions~\ref{assumptions_on_X_jump} and \ref{assumptions_on_g_jump} hold. If (\ref{jump_AMG}) holds with $R_\theta(t,x)$ not identically equal to $0$, suppose, additionally, that $n\Delta_n^{2{\kappa_0}-1} \to 0$ as $n\to \infty$. For $\theta \in \Theta$, let 
\begin{align*}
A(\theta, \thetan) &= \int_\xx \left( \ll( g(0, \theta) )(x,x) - \ll_{\theta}( g(0, \theta) (x,x) \right)\, \pi(dx)\\
B(\theta, \thetan) &= \int_\xx\left(  \ll(\partial_\theta g(0, \theta))(x,x) - \partial_\theta   \ll_\theta(g(0,\theta))(x,x) \right)\, \pi(dx)\\
C(\theta, \thetan) &= \int_\xx \ll(gg^\star (0,\theta))(x,x)\, \pi(dx)\,.
\end{align*}
Then,
\begin{enumerate}[label=(\roman{*}),ref=(\roman{*})]
\item \label{ucon1} the mappings $\theta \mapsto A(\theta, \thetan)$, $\theta \mapsto B(\theta, \thetan)$, and $\theta \mapsto C(\theta, \thetan)$ are continuous on $\Theta$, with $A(\thetan, \thetan)=0$ and $\partial_\theta A(\theta, \thetan) = B(\theta, \thetan)$.
\item \label{ucon3} for all compact, convex sets $K\subseteq \Theta$,
\begin{align}
\sup_{\theta \in K} \left\Vert \frac{1}{n\Delta_n}\sum_{i=1}^n g(\Delta_n, \xtr, \xtl, \theta)   - A(\theta,\thetan)\right\Vert &\overset{\pp}{\longrightarrow} 0\,,\nonumber \\ 
\sup_{\theta \in K} \left\Vert \frac{1}{n\Delta_n}\sum_{i=1}^n \partial_\theta g (\Delta_n, \xtr, \xtl, \theta) -B(\theta, \thetan)\right\Vert&\overset{\pp}{\longrightarrow} 0\,, \label{convmainP2}\\
\sup_{\theta \in K} \left\Vert \frac{1}{n\Delta_n}\sum_{i=1}^n gg^\star(\Delta_n, \xtr, \xtl, \theta) - C(\theta,\thetan)\right\Vert &\overset{\pp}{\longrightarrow} 0\,. \nonumber
\end{align}
\item \label{ucon2} it holds that
\begin{align*}
  \frac{1}{\sqrt{n\Delta_n}} \sum_{i=1}^n g(\Delta_n,\xtr,\xtl, \thetan)
  &\overset{\dd}{\longrightarrow} \nn_d(0,C(\thetan,\thetan))\,.
\end{align*}
\end{enumerate}\dqed
\label{uconvP}
\end{lemma}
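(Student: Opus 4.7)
The plan is to handle the three statements in sequence, with the backbone being the Taylor-type expansion of Lemma~\ref{lemma:expansion_jump}, the boundary identities of Lemma~\ref{lemma:conseq}, and the ergodic theorem of Lemma~\ref{ucp}; for part~\ref{ucon2} an additional martingale CLT from \cite{jacod2018} is invoked. Continuity of $A(\theta, \thetan)$, $B(\theta, \thetan)$, and $C(\theta, \thetan)$ in part~\ref{ucon1} will follow by dominated convergence, since the integrands are smooth in $\theta$ and polynomially bounded in $x$, while $\pi$ has moments of all orders by Assumption~\ref{assumptions_on_X_jump}.\ref{x6}. The vanishing $A(\thetan, \thetan) = 0$ is immediate because $\ll = \ll_\thetan$ when $\theta = \thetan$. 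For the derivative identity $\partial_\theta A(\theta, \thetan) = B(\theta, \thetan)$, I differentiate under the integral sign (again justified by dominated convergence with polynomial bounds) and use that $\ll$ acts in the $y$-variable with no $\theta$-dependence, so that $\partial_\theta \ll(g(0,\theta))(x,x) = \ll(\partial_\theta g(0,\theta))(x,x)$.

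For part~\ref{ucon3}, the central decomposition is
\[
g(\Delta_n, \xtr, \xtl, \theta) = \bigl(g(\Delta_n, \xtr, \xtl, \theta) - \EE[g \mid \xtl]\bigr) + \EE[g \mid \xtl].
\]
Combining the $t$-expansion \eqref{g2}, Lemma~\ref{lemma:expansion_jump} applied to $g(0,\cdot,x,\theta)$ and $g^{(1)}(\cdot,x,\theta)$, and Lemma~\ref{lemma:conseq} to kill the boundary terms, the compensator becomes
\[
\EE[g(\Delta_n, \xtr, \xtl, \theta) \mid \xtl] = \Delta_n \bigl[\ll(g(0,\theta))(\xtl,\xtl) - \ll_\theta(g(0,\theta))(\xtl,\xtl)\bigr] + \Delta_n^2 R_\theta(\Delta_n, \xtl),
\]
so that after dividing by $n\Delta_n$ and summing, Lemma~\ref{ucp} delivers the pointwise limit $A(\theta, \thetan)$; here the approximate-martingale remainder from \eqref{jump_AMG} contributes $O(\Delta_n^{\kappa_0-1})$ and vanishes without any extra rate assumption. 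The martingale-difference contribution has conditional variance $O(\Delta_n)$ per term, since $gg^\star(0,x,x,\theta) = 0$ and Lemma~\ref{lemma:expansion_jump} applied to $\|g\|^2$ gives $\EE[\|g\|^2 \mid \xtl] = O(\Delta_n)$, so its normalized sum vanishes by a standard $L^2$-bound. Pointwise convergence is then promoted to the uniform statement \eqref{convmainP2} on compact convex $K$ by establishing the analogous pointwise control for $\partial_\theta g$ and combining it with a standard stochastic-equicontinuity argument; the same template applied to $\partial_\theta g$ and to $gg^\star$ yields the remaining two lines of \eqref{convmainP2}.

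For part~\ref{ucon2}, I apply a martingale CLT from \cite{jacod2018} to the array $Y^n_i = (n\Delta_n)^{-1/2} g(\Delta_n, \xtr, \xtl, \thetan)$. Writing $Y^n_i = (Y^n_i - \EE[Y^n_i \mid \xtl]) + \EE[Y^n_i \mid \xtl]$, the compensator part is controlled using \eqref{jump_AMG} and bounded in norm by a constant times $\sqrt{n\Delta_n^{2\kappa_0-1}}$ times a time-average handled by Lemma~\ref{ucp}, hence vanishes under the hypothesis $n\Delta_n^{2\kappa_0-1} \to 0$. For the martingale-difference part, convergence of the conditional variance to $C(\thetan, \thetan)$ is precisely the $gg^\star$-statement of part~\ref{ucon3} at $\theta = \thetan$, while the conditional Lyapunov condition reduces to $\sum_i \EE[\|Y^n_i\|^{2+\delta} \mid \xtl] \to 0$, which follows by expanding $\|g\|^{2+\delta}$ via Lemma~\ref{lemma:expansion_jump}, using that $g(0,x,x,\thetan) = 0$. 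I expect the main obstacle to be the uniform-in-$\theta$ step in part~\ref{ucon3}: every Taylor expansion, conditional-expectation estimate, and $L^2$-bound must be kept uniform for $\theta$ in compact convex subsets of $\Theta$, so that the polynomial-growth constants $C_K$ absorb each error term cleanly; this demands careful bookkeeping when differentiating in $\theta$ and when passing derivatives through the integral defining $\ll$.
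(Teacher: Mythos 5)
Your treatment of part \ref{ucon1}, the pointwise limits in part \ref{ucon3}, and part \ref{ucon2} matches the paper's route: the compensator expansion you sketch is exactly Lemma~\ref{lemma:condexpec} (obtained from Lemma~\ref{lemma:expansion_jump}, the expansion \eqref{g2}, and Lemma~\ref{lemma:conseq}), the passage from conditional expectations to the sums is the Genon-Catalot--Jacod lemma, and the CLT argument (compensator of order $(n\Delta_n^{2\kappa_0-1})^{1/2}$, conditional variance from the $gg^\star$ limit, Lyapunov condition) is the paper's proof almost verbatim. One small caveat: for the Lyapunov condition take $\delta=2$, i.e.\ fourth moments, since Lemma~\ref{lemma:expansion_jump} needs smoothness in $y$ and $\Vert g\Vert^{2+\delta}$ is not smooth for general $\delta$.

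The genuine gap is the uniform-in-$\theta$ step of part \ref{ucon3}. You propose to promote pointwise convergence to uniform convergence by ``pointwise control for $\partial_\theta g$'' plus a standard stochastic-equicontinuity argument. The natural reading---bound $\sup_{\theta\in K}\Vert\partial_\theta A^n(\theta)\Vert$ termwise and use an Arzel\`a--Ascoli/finite-net argument---fails here. Since $\partial_\theta^k g(0,x,x,\theta)=0$ but $\partial_\theta^k g(0,y,x,\theta)\neq 0$ off the diagonal, the best termwise bound is $\sup_\theta\vert\partial_\theta^k g(\Delta_n,\xtr,\xtl,\theta)\vert\leq C(1+\vert\xtl\vert^C+\vert\xtr\vert^C)(\vert\xtr-\xtl\vert+\Delta_n)$, and $\frac{1}{n\Delta_n}\sum_{i=1}^n\vert\xtr-\xtl\vert$ is of order $\Delta_n^{-1/2}\to\infty$; passing to higher $\theta$-derivatives reproduces the same divergence, so the recursion never closes. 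The cancellation that saves the day is a martingale cancellation across $i$, not a termwise one: the paper decomposes the increment $f(\cdot,\theta)-f(\cdot,\theta')$ of each summand via It\^{o}'s formula for jump SDEs into a drift integral, a Brownian stochastic integral, and a compensated Poisson integral, applies Burkholder--Davis--Gundy (iterated for the jump part) and Jensen, and obtains $\EE\left(\vert H^n(\theta)-H(\theta,\thetan)-H^n(\theta')+H(\theta',\thetan)\vert^p\right)\leq C_{K,p}\Vert\theta-\theta'\Vert^p$ with $p=2^q>d$ (Lemmas~\ref{ito_ineq_jumps_sum} and \ref{for_uni_P}); uniform convergence then follows from a Kolmogorov-type tightness criterion (\cite[Chapter 14]{kallenberg1997}). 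This $L^p$ modulus-of-continuity bound is the technical core of the lemma and occupies most of Appendix~\ref{app_lemmas}; your proposal needs it (or an equivalent) and does not supply it.
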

\begin{remark}
  Consider the model given by (\ref{eqn:SDEoptimal2}), and let $B(\thetan,\thetan)$ and $C(\thetan,\thetan)$ be as defined in Lemma~\ref{uconvP}. Under Assumptions~\ref{assumptions_on_X_jump}, \ref{assumptions_on_g_jump}, and \ref{assumptions_on_c}, we may write
\begin{align*}
\begin{split}
B(\thetan, \thetan) 
&= - \int_\xx \left( \partial_y g(0,x,x, \thetan)\partial_\theta  a(x, \alphan) + \tfrac{1}{2}\partial^2_y g(0,x,x, \thetan)\partial_\theta b^2(x,\betan) \right)\, \pi(dx) \\
&\hspace{5mm} - \int_\xx \int_{\ww(x)} g(0,x+w, x, \thetan)\partial_\theta \varphi(x,w, \alphan)\, \eta_x(dw) \, \pi(dx)\,,
\end{split}\intertext{and}
\begin{split}
C(\thetan,\thetan) 
&=  \int_\xx b^2(x, \betan) \partial_y g(\partial_yg)^\star (0,x,x, \thetan)\, \pi(dx) \\
&\hspace{5mm} +  \int_\xx \int_{\ww(x)} gg^\star  (0,x+w,x, \thetan) \varphi(x,w, \alphan)\, \eta_x(dw)\, \pi(dx)\,.
\end{split}
\end{align*}\cqed
\label{re:BC_expr}
\end{remark}
\begin{lemma}
  Consider the model given by (\ref{eqn:SDEoptimal2}), with $A\subseteq \RR^2$ and $B\subseteq \RR$. Suppose that Assumptions~\ref{assumptions_on_X_jump}, \ref{assumptions_on_g_jump}, and Condition~\ref{assumption_final} hold. If (\ref{jump_AMG}) holds with $R_\theta(t,x)$ not identically equal to $0$, we suppose, moreover, that $n\Delta_n^{2({\kappa_0}-1)} \to 0$ as $n\to \infty$. Let $\delta_n =   \textrm{diag}(\sqrt{n\Delta_n}, \sqrt{n\Delta_n}, \sqrt{n})$, and define the block diagonal matrix $D(\thetan, \thetan) = \text{diag}(D_1(\thetan, \thetan), D_2(\thetan, \thetan))$, where $D_1$ and $D_2$ are defined in Theorem \ref{maintheo_final}.
Then, 
\begin{enumerate}[label=(\roman{*}),ref=(\roman{*})] 
\item \label{con3_2} for all compact, convex sets $K\subseteq \Theta$,
\begin{align}
\sup_{\theta\in K} \left\Vert \frac{1}{n\Delta_n^{3/2}} \sum_{i=1}^n \partial_\alpha g_\beta(\Delta_n,\xtr,\xtl, \theta)\right\Vert&\overset{\pp}{\longrightarrow} 0\,,\label{ucon_newlab}\\
\sup_{\theta \in K}\left \vert \frac{1}{n\Delta_n^2} \sum_{i=1}^n g_\beta^2(\Delta_n,\xtr,\xtl, \theta) -D_2(\theta, \thetan) \right\vert &\overset{\pp}{\longrightarrow} 0\nonumber\,.
\end{align}
\item \label{con3_1} it holds that
\begin{align*}
\delta_n \frac{1}{n\Delta_n}\sum_{i=1}^n g(\Delta_n,\xtr,\xtl, \thetan)&\overset{\dd}{\longrightarrow} \nn_3\left(0, D(\thetan, \thetan)\right)\,.
\end{align*}
\end{enumerate}\dqed
\label{lemma:uconvP_final}
\end{lemma}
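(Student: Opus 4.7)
My plan is to treat both statements by the same template: decompose each normalised sum into a predictable part plus a martingale-increment remainder, expand the predictable part via Lemma~\ref{lemma:expansion_jump} applied at $\lambda=\thetan$, and use Condition~\ref{assumption_final} together with Lemma~\ref{lemma:conseq} to force enough leading terms of the generator expansion to vanish. The two inputs together give $g_\beta(0,x,x,\theta)=0$ and hence $\partial_\alpha g_\beta(0,x,x,\theta)=0$; they also give that $\partial_y g_\beta(0,y,x,\theta)$ and $g_\beta(0,y,x,\theta)$ vanish at every $y$ in the successive jump sets ${\cal M}_k(x,\tilde\alpha)$, and that $\partial^2_y\partial_\alpha g_\beta(0,y,x,\theta)$ and $\partial_\alpha g_\beta^{(1)}(y,x,\theta)$ vanish similarly on ${\cal M}_0\cup{\cal M}_1$ and on ${\cal M}_1$, respectively. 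The precise set of $k$'s covered by the condition is exactly what is needed so that $\ll$ and $\ll^2$, applied to various $g_\beta$-related quantities, vanish at the diagonal $(y,x)=(\xtl,\xtl)$.

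For part~\ref{con3_2}, I would first expand $\partial_\alpha g_\beta(\Delta_n,\cdot)$ in $\Delta_n$ via Assumption~\ref{assumptions_on_g_jump}~(iii) and then apply Lemma~\ref{lemma:expansion_jump} with $k=1$ to reach
\begin{align*}
\EE(\partial_\alpha g_\beta(\Delta_n,\xtr,\xtl,\theta)\mid\xtl) &= \partial_\alpha g_\beta(0,\xtl,\xtl,\theta)\\
&\quad + \Delta_n\bigl[\ll(\partial_\alpha g_\beta(0,\theta))(\xtl,\xtl)+\partial_\alpha g_\beta^{(1)}(\xtl,\xtl,\theta)\bigr] + \Delta_n^2 R_\theta(\Delta_n,\xtl).
\end{align*}
The leading summand is zero, and a direct computation of the two terms in the bracket (using $\partial_\alpha g_\beta^{(1)}(x,x,\theta)=-\partial_\alpha\ll_\theta(g_\beta(0,\theta))(x,x)$ from Lemma~\ref{lemma:conseq}) shows that the drift, diffusion, and jump contributions are all killed by the vanishing identities above. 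Hence the conditional expectation is $\Delta_n^2 R_\theta(\Delta_n,\xtl)$, and its normalisation by $(n\Delta_n^{3/2})^{-1}$ tends to zero in probability by Lemma~\ref{ucp}. The martingale-increment remainder is controlled by $\EE((\partial_\alpha g_\beta)^2\mid\xtl)$, which by the same argument is of size $\Delta_n^2$, so its predictable quadratic variation vanishes after normalisation. Uniformity in $\theta$ on compacts is then obtained by a Kessler-type tightness argument using polynomial growth of $\theta$-derivatives. For $\sum g_\beta^2$ the same mechanism gives $g_\beta^2(0,x,x,\theta)=0$ and $\ll(g_\beta^2(0,\theta))(x,x)=0$, so Lemma~\ref{lemma:expansion_jump} with $k=2$ yields
\[
\EE(g_\beta^2(\Delta_n,\xtr,\xtl,\theta)\mid\xtl)=\tfrac{1}{2}\Delta_n^2\,\ll^2(g_\beta^2(0,\theta))(\xtl,\xtl)+\Delta_n^3 R_\theta(\Delta_n,\xtl).
\]
A careful evaluation of $\ll^2$ on the diagonal, systematically discarding terms involving $g_\beta$ or $\partial_y g_\beta$ on any ${\cal M}_k$, $k\le 3$, leaves only $b^4(x,\beta)(\partial^2_y g_\beta(0,x,x,\theta))^2$, whose $\pi_\theta$-integral is $2D_2(\theta,\thetan)$; Lemma~\ref{ucp} converts this into the uniform limit, and a fourth-moment bound handles the martingale-increment remainder.

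For part~\ref{con3_1}, I would decompose
\begin{align*}
\delta_n\,\frac{1}{n\Delta_n}\sum_{i=1}^n g(\Delta_n,\xtr,\xtl,\thetan) &= \delta_n\,\frac{1}{n\Delta_n}\sum_{i=1}^n\bigl(g-\EE(g\mid\xtl)\bigr)\\
&\quad + \delta_n\,\frac{1}{n\Delta_n}\sum_{i=1}^n\EE(g\mid\xtl).
\end{align*}
By Definition~\ref{def:AMEF_jump}, the second (bias) term is of order $(n\Delta_n^{2\kappa_0-1})^{1/2}$ in the $\alpha$-coordinates and $(n\Delta_n^{2(\kappa_0-1)})^{1/2}$ in the $\beta$-coordinate, both vanishing by hypothesis. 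To the centred martingale sum I would then apply a martingale central limit theorem as in \cite{jacod2018}. The conditional-covariance limit splits into blocks: the $(1,1)$-block $(n\Delta_n)^{-1}\sum\EE(g_\alpha g_\alpha^\star\mid\xtl)$ converges to $D_1(\thetan,\thetan)$ by a standard $g_\alpha$ expansion along the lines of Lemma~\ref{uconvP}; the $(3,3)$-entry $(n\Delta_n^2)^{-1}\sum\EE(g_\beta^2\mid\xtl)$ converges to $D_2(\thetan,\thetan)$ by the second half of part~\ref{con3_2}; and the off-diagonal $(1,3)$- and $(2,3)$-blocks, normalised by $(n\Delta_n^{3/2})^{-1}$, vanish because Condition~\ref{assumption_final} also forces $g_\alpha g_\beta(0,x,x,\theta)=0$ and $\ll(g_\alpha g_\beta(0,\theta))(x,x)=0$, so that $\EE(g_\alpha g_\beta\mid\xtl)$ is of order $\Delta_n^2$. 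A conditional Lindeberg condition then follows from fourth-moment bounds of order $\Delta_n^2$ and $\Delta_n^4$ for the $g_\alpha$ and $g_\beta$ components, respectively. The chief technical difficulty will be the explicit book-keeping of these generator computations on the diagonal, in particular $\ll^2(g_\beta^2(0,\theta))(x,x)$ and $\ll(g_\alpha g_\beta(0,\theta))(x,x)$, where drift, diffusion, and jump contributions are intertwined via the product rule and where each $k=0,1,\ldots,4$ in Condition~\ref{assumption_final} kills a different term; demonstrating that after all cancellations one recovers exactly the $b^4(\partial^2_y g_\beta)^2$-type expression, with no residual jump contribution corrupting the limit $D_2$, is where the bulk of the effort concentrates.
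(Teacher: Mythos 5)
Your overall architecture coincides with the paper's: expand conditional moments via Lemma~\ref{lemma:expansion_jump} combined with the $t$-expansion in Assumption~\ref{assumptions_on_g_jump}, let Condition~\ref{assumption_final} annihilate the low-order generator terms, pass from conditional expectations to the sums, upgrade to uniformity by a moment bound, and prove part~\ref{con3_1} by centring, a martingale CLT and the Cram\'er--Wold device; your treatment of $\partial_\alpha g_\beta$, of the off-diagonal covariance blocks, and of the bias term is correct. However, your claimed limit of $\frac{1}{n\Delta_n^2}\sum_{i} g_\beta^2(\Delta_n,\xtr,\xtl,\theta)$ is wrong for $\theta\neq\thetan$. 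The ergodic theorem applied to data generated under $\thetan$ produces integrals against $\pi=\pi_{\thetan}$, never against $\pi_\theta$. More substantively, the coefficient of $\Delta_n^2$ in $\EE(g_\beta^2(\Delta_n,\cdot,\cdot,\theta)\mid\xtl)$ is not just $\tfrac12\ll^2(g_\beta^2(0,\theta))(x,x)$: expanding $g_\beta(\Delta_n,\cdot)=g_\beta(0,\cdot)+\Delta_n g_\beta^{(1)}+\cdots$ first, you also pick up $(g_\beta^{(1)}(x,x,\theta))^2$ and $2\ll\bigl(g_\beta(0,\theta)g_\beta^{(1)}(\theta)\bigr)(x,x)$. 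Under Condition~\ref{assumption_final} one has $g_\beta^{(1)}(x,x,\theta)=-\tfrac12 b^2(x,\beta)\partial_y^2g_\beta(0,x,x,\theta)$ and $\tfrac12\ll^2(g_\beta^2(0,\theta))(x,x)=\tfrac34 b^4(x,\betan)\bigl(\partial_y^2 g_\beta(0,x,x,\theta)\bigr)^2$, and the three contributions combine into $\bigl(\tfrac12 b^4(x,\betan)+\tfrac14(b^2(x,\betan)-b^2(x,\beta))^2\bigr)\bigl(\partial_y^2g_\beta(0,x,x,\theta)\bigr)^2$ rather than $\tfrac12 b^4(x,\beta)\bigl(\cdots\bigr)^2$. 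The two expressions agree only at $\theta=\thetan$, so part~\ref{con3_1} is unharmed, but the uniform limit asserted in part~\ref{con3_2} is a different function of $\theta$ from the one you propose, which is exactly why the paper redefines $D_2(\theta,\thetan)$ at the start of its proof.

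The second gap is the uniformity step, which you dispatch as a routine ``Kessler-type tightness argument using polynomial growth of $\theta$-derivatives.'' The generic increment bound of that type (Lemma~\ref{for_uni_P}) yields $\EE\bigl(|\zeta_n(\theta)-\zeta_n(\theta')|^p\bigr)\leq C_{K,p}\Vert\theta-\theta'\Vert^p$ only when $\zeta_n$ is normalised by $n\Delta_n$; for the normalisations $n\Delta_n^{3/2}$ and $n\Delta_n^2$ appearing in part~\ref{con3_2} it falls short by factors $\Delta_n^{p/2}$ and $\Delta_n^{p}$ respectively. Extracting those extra powers is not a consequence of polynomial growth: it requires iterating the It\^o/Burkholder decomposition (Lemmas~\ref{ito_ineq_jumps_sum} and~\ref{ito_ineq_jumps}) and invoking the vanishing of $f$, $\partial_t f$, $\partial_y f$, $\partial_y^2 f$ (and, for the $g_\beta^2$ sum, of higher derivatives) on the successive jump sets ${\cal M}_k(x,\tilde\alpha)$ inside the remainder estimates --- the content of Lemmas~\ref{for_uni_P_alpha4} and~\ref{for_uni_P_alpha4b}, which is where most of the technical work of this lemma actually lies. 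Without that strengthening the chaining argument does not close.
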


\subsection{Proofs of Main Theorems}
\label{section_proofofmaintheorem_jump}
\begin{proof}[\textbf{Proof of Theorem~\ref{maintheo}}]
  By Assumption~\ref{assumptions_for_estimators} and Lemma~\ref{uconvP}, $G_n(\theta)$ satisfies \cite[Condition $2.2$]{jacod2018} with $G(\theta) = A(\theta,\thetan)$ for all compact, convex sets $K\subseteq \Theta$ with $\theta_0 \in \text{int}\, K$, as well as \cite[Condition $2.10$]{jacod2018} with $A_n = \sqrt{n\Delta_n}I_d$, $Z = \nn_d(0,C(\theta_0, \thetan))$, and $W(\theta) = B(\theta,\thetan)$.
  \medskip

  By \cite[Theorem $2.5$]{jacod2018}, there exists a consistent $G_n$-estimator $\hat{\theta}_n$. 
  If $\tilde{\theta}_n$ is any $G_n$-estimator which satisfies that $\PP( \tilde{\theta}_n \in K) \to 1$ as $n\to \infty$, then $\tilde{\theta}_n$ is consistent by \cite[Theorem $2.7.(1)$]{jacod2018}, and  Theorem~\ref{maintheo}.\ref{main2} is a consequence of the uniqueness result in \cite[Theorem $2.5$]{jacod2018}. Theorem~\ref{maintheo}.\ref{main3} follows from \cite[Theorem $2.11$]{jacod2018}, while Theorem~\ref{maintheo}.\ref{main4} follows by properties of convergence in probability, and the continuous mapping theorem.
\end{proof}
\begin{proof}[\textbf{Proof of Theorem~\ref{maintheo_final}}]
Let $B(\theta, \thetan)$ and $D(\thetan, \thetan)$ be as given in Lemmas \ref{uconvP} and \ref{lemma:uconvP_final}, respectively, and define the block diagonal matrix
  \begin{align*}
    B_0(\theta, \thetan) &= \text{diag}((B_{jk}(\theta, \thetan))_{j,k \in \{1,2\}}, B_{33}(\theta, \thetan))\,.
  \end{align*}
  Note that $B_0(\thetan, \thetan)$ equals $\text{diag}(B_1(\thetan, \thetan), B_2(\thetan, \thetan))$, and that $D_1(\thetan, \thetan)$ is equal to $(C_{jk}(\thetan, \thetan))_{j,k \in \{1,2\}}$ of Lemma~\ref{uconvP}.
  \medskip

  As noted in the proof of Theorem~\ref{maintheo}, $G_n(\theta)$ satisfies \cite[Condition $2.2$]{jacod2018} with $G(\theta) = A(\theta,\thetan)$ for all compact, convex sets $K\subseteq \Theta$ with $\theta_0 \in \text{int}\, K$. Additionally, by Lemmas~\ref{uconvP} and \ref{lemma:uconvP_final}, \cite[Condition $2.10$]{jacod2018} is satisfied with $A_n = \textrm{diag}(\sqrt{n\Delta_n}, \sqrt{n\Delta_n}, \sqrt{n})$, $Z = \nn_3(0,D(\theta_0, \thetan))$, and $W(\theta) = B_0(\theta,\thetan)$, so \eqref{eqn:convD_final} follows from \cite[Theorem $2.11$]{jacod2018}. The proof is completed by application of properties of convergence in probability and the continuous mapping theorem. 
\end{proof}

\subsection{Proofs of Main Lemmas}\label{sec:main_lemma_proofs}
In the following, we use the notation $\EE^n_{i-1}\left( \,\cdot\, \right) = \EE( \,\cdot\,\mid \xtl)$. A martingale difference central limit theorem \cite[Corollary $3.1$]{hall1980} is used several times without reference. 
\begin{proof}[\textbf{Proof of Lemma~\ref{uconvP}}]
The smoothness and polynomial growth assumptions on the integrands of $A(\theta,\thetan)$, $B(\theta,\thetan)$, and $C(\theta,\thetan)$, as well as Assumption \ref{assumptions_on_X_jump}.\ref{x6}, ensure that the results in Lemma  \ref{uconvP}.\ref{ucon1} hold. \medskip

To prove Lemma~\ref{uconvP}.\ref{ucon3}, use Lemmas \ref{lemma:condexpec} and \ref{ucp} to see that for  $j,k =1,\ldots,d$,
\begin{align}
\begin{split} 
&\hspace{-5mm} \frac{1}{n\Delta_n} \sum_{i=1}^n \EE^n_{i-1}\left( g_j(\Delta_n,\xtr,\xtl, \theta)\right)  \\
&= \frac{1}{n} \sum_{i=1}^n \left( \ll( g_j(0, \theta) )(\xtl,\xtl) - \ll_\theta( g_j(0, \theta) )(\xtl, \xtl)\right) \\
&\hspace{5mm} + \Delta_n \frac{1}{n} \sum_{i=1}^n R(\Delta_n, \xtl, \theta)\nonumber\\
&\overset{\pp}{\longrightarrow} A_j(\theta, \thetan)\,,\\
\end{split}\\[0.5em]
\begin{split}
&\hspace{-5mm} \frac{1}{n\Delta_n} \sum_{i=1}^n \EE^n_{i-1}\left( \partial_{\thetak} g_j(\Delta_n,\xtr, \xtl, \theta)\right) \\
&= \frac{1}{n} \sum_{i=1}^n\left( \ll(\partial_{\thetak} g_j(0,\theta))(\xtl,\xtl) - \partial_{\thetak} \ll_\theta(g_j(0, \theta))(\xtl,\xtl)\right) \\
&\hspace{5mm} + \Delta_n\frac{1}{n} \sum_{i=1}^n R(\Delta_n,\xtl, \theta) \\
&\overset{\pp}{\longrightarrow} B_{jk}(\theta, \thetan)\,,\nonumber\\
\end{split}\\[0.5em]
\begin{split}
&\hspace{-5mm} \frac{1}{n\Delta_n} \sum_{i=1}^n \EE^n_{i-1}\left( g_jg_k(\Delta_n,\xtr,\xtl, \theta) \right) \\
&= \frac{1}{n} \sum_{i=1}^n \ll\left(g_jg_k(0,\theta)\right)(\xtl,\xtl) + \Delta_n\frac{1}{n}  \sum_{i=1}^n R(\Delta_n,\xtl, \theta) \\
&\overset{\pp}{\longrightarrow} C_{jk}(\theta, \thetan)\,,\label{EconC}\\
\end{split}
\end{align}
and
\begin{align*}
\begin{split}
\frac{1}{(n\Delta_n)^2}\sum_{i=1}^n \EE^n_{i-1}\left( g_j^2(\Delta_n, \xtr, \xtl, \theta)\right)  = \frac{1}{n\Delta_n} \frac{1}{n}\sum_{i=1}^nR(\Delta_n, \xtl, \theta) 
&\overset{\pp}{\longrightarrow} 0\,,\nonumber
\end{split}\\[0.5em]
\begin{split}
\frac{1}{(n\Delta_n)^2} \sum_{i=1}^n \EE^n_{i-1}\left( (\partial_{\thetak} g_j)^2 (\Delta_n,\xtr, \xtl, \theta)\right) =\frac{1}{n\Delta} \frac{1}{n} \sum_{i=1}^nR(\Delta_n, \xtl, \theta) &\overset{\pp}{\longrightarrow} 0\,,\nonumber
\end{split}\\[0.5em]
\begin{split}
\frac{1}{(n\Delta_n)^2} \sum_{i=1}^n \EE^n_{i-1}\left( g_j^2g_k^2(\Delta_n,\xtr,\xtl,     \theta) \right)= \frac{1}{n\Delta_n} \frac{1}{n}\sum_{i=1}^n R(\Delta_n, \xtl, \theta) &\overset{\pp}{\longrightarrow} 0\,.
\end{split}
\end{align*}
Consequently, by \cite[Lemma 9]{genoncatalot1993}, it holds that pointwise for $\theta \in \Theta$,
\begin{align*}
A^n_j(\theta) \coloneqq \frac{1}{n\Delta_n}\sum_{i=1}^n g_j(\Delta_n, \xtr, \xtl, \theta)  &\overset{\pp}{\longrightarrow} A_j(\theta, \thetan)\,,\nonumber\\
B^n_{jk}(\theta) \coloneqq \frac{1}{n\Delta_n} \sum_{i=1}^n \partial_{\thetak} g_j(\Delta_n,\xtr,
\xtl, \theta) &\overset{\pp}{\longrightarrow} B_{jk}(\theta,\thetan)\,,\nonumber\\
C^n_{jk}(\theta) \coloneqq \frac{1}{n\Delta_n}\sum_{i=1}^n g_jg_k(\Delta_n, \xtr, \xtl, \theta) &\overset{\pp}{\longrightarrow} C_{jk}(\theta,\thetan)\,.
\end{align*}
Let a compact, convex set $K\subseteq \Theta$ be given. The functions $g_j$, $\partial_\thetak g_j$, and $g_jg_k$ satisfy the conditions of Lemma~\ref{for_uni_P}, which may be used, together with Jensen's inequality, the mean value theorem, and the Cauchy-Schwarz inequality, to conclude the existence of constants $p>d$ and $C_{K,p}>0$ such that for all $\theta, \theta' \in K$,
\begin{align*}
  \EE\left( \left|H^n(\theta) - H(\theta, \thetan) - H^n(\theta') +  H(\theta', \thetan)\right|^p\right)
  & \leq C_{K,p} \Vert\theta-\theta'\Vert^p
\end{align*}
for $(H^n, H)$ equal to $(A_j^n, A_j)$, $(B_{jk}^n, B_{jk})$, and $(C_{jk}^n, C_{jk})$, $j,k \in \{1, 2,\ldots, d\}$. Now, using results on convergence in probability from \cite[Chapter 14]{kallenberg1997}, it holds that 
\[
\sup_{\theta \in K}\left\vert H^n(\theta) - H(\theta, \thetan)\right\vert \overset{\pp}{\longrightarrow} 0
\]
as $n\to \infty$, and the results in Lemma \ref{uconvP}.\ref{ucon3} follow. \medskip

In order to prove Lemma~\ref{uconvP}.\ref{ucon2}, suppose first that the estimating function is a martingale estimating function, i.e.\ that $R_\theta(t,x) \equiv 0$ in (\ref{jump_AMG}). Let $v\in \RR^d$ be a fixed vector, and consider
\begin{align}
M_{n,i} &= \frac{1}{\sqrt{n\Delta_n}} \sum_{j=1}^i v^\star g(\Delta_n,\xtrj,\xtlj, \thetan)\,,
\label{M1}
\end{align}
which constitutes a real-valued, zero\--mean, square\--integrable martingale array with martingale differences $D_{n,i} =(n\Delta_n)^{-1/2} v^\star g(\Delta_n,\xtr,\xtl, \thetan)$. By (\ref{EconC}), it holds that
\begin{align*}
\frac{1}{n\Delta_n} \sum_{i=1}^n \EE^n_{i-1}\left( (v^\star g)^2(\Delta_n,\xtr,\xtl, \thetan) \right) &\overset{\pp}{\longrightarrow} v^\star  C(\thetan,\thetan) v\,.
\end{align*}
Furthermore, the conditional Lyapunov condition
\begin{align}
\frac{1}{(n\Delta_n)^2} \sum_{i=1}^n \EE^n_{i-1}\left( (v^\star g)^4(\Delta_n,\xtr,\xtl, \thetan) \right) &\overset{\pp}{\longrightarrow} 0
\label{condlya1_again}
\end{align}
holds, implying the Lindeberg condition. The convergence in (\ref{condlya1_again}) may be verified by multiplying out the parentheses in the left-hand side of the expression and using Lemmas~\ref{lemma:condexpec} and \ref{ucp}. It follows that  
\begin{align*}
\frac{1}{\sqrt{n\Delta_n}} \sum_{i=1}^n v^\star  g(\Delta_n,\xtr,\xtl, \thetan)&\overset{\dd}{\longrightarrow} \nn\left(0,v^\star  C(\thetan,\thetan) v\right)\,,
\end{align*}
and, by the Cram\'{e}r-Wold device, that Lemma~\ref{uconvP}.\ref{ucon2} holds for martingale estimating functions. \medskip 

If the estimating function is not a martingale estimating function, i.e.\ if (\ref{jump_AMG}) holds with $R_\theta(\Delta_n,\xtl)$ not identically  $0$, it is assumed that $n\Delta_n^{2{\kappa_0}-1} \to 0$ as $n\to \infty$ for some ${\kappa_0} \geq 2$. Let
\begin{align}
\tilde{g}(\Delta_n,\xtr,\xtl, \thetan) &= g(\Delta_n,\xtr,\xtl, \thetan) - \EE^n_{i-1}\left(g(\Delta_n,\xtr,\xtl,\thetan)\right)\,.
\label{tildeg}
\end{align}
Since
\begin{align*}
  \frac{1}{\sqrt{n\Delta_n}} \sum_{i=1}^n \EE^n_{i-1}\left(g(\Delta_n,\xtr,\xtl, \thetan)\right) = \sqrt{n} \Delta_n^{{\kappa_0}-1/2} \frac{1}{n}\sum_{i=1}^n R(\Delta_n,\xtl)
  &\overset{\pp}{\longrightarrow} 0
\end{align*}
by (\ref{jump_AMG}) and Lemma \ref{ucp}, it remains to show that
\begin{align}
\frac{1}{\sqrt{n\Delta_n}} \sum_{i=1}^n \tilde{g}(\Delta_n,\xtr,\xtl, \thetan)&\overset{\dd}{\longrightarrow} \nn_d(0,C(\thetan, \thetan))\,.
\label{nonMGconvD}
\end{align}
Consider (\ref{M1}) with $g$ replaced by $\tilde{g}$.
By (\ref{jump_AMG}) and (\ref{EconC}), 
\begin{align*}
&\hspace{-5mm} \frac{1}{n\Delta_n}\sum_{i=1}^n \EE^n_{i-1}\left((v^\star \tilde{g})^2(\Delta_n, \xtr, \xtl, \thetan)\right)\\
&= v^\star \left( \frac{1}{n\Delta_n}\sum_{i=1}^n \EE^n_{i-1}\left(gg^\star (\Delta_n, \xtr, \xtl, \thetan)\right) \right) v \\
&\hspace{5mm} - v^\star  \left( \frac{1}{n\Delta_n}\sum_{i=1}^n  \EE^n_{i-1}\left(g(\Delta_n, \xtr, \xtl, \thetan) \right) \EE^n_{i-1}\left(g(\Delta_n, \xtr, \xtl, \thetan)\right)^\star \right)v\\
&\overset{\pp}{\longrightarrow} v^\star  C(\thetan,\thetan)v\,.
\end{align*}
Furthermore, the conditional Lyapunov condition (\ref{condlya1_again}) with $g$ replaced by $\tilde g$ holds as well, and may be verified in the same manner as (\ref{condlya1_again}). Thus, (\ref{nonMGconvD}) follows, proving Lemma~\ref{uconvP}.\ref{ucon2} also for approximate martingale estimating functions.
\end{proof}
\begin{proof}[\textbf{Proof of Lemma~\ref{lemma:uconvP_final}}]
Let 
\begin{align*}
D_2(\theta, \thetan) &= \int_\xx \tfrac{1}{2}\left( b^4(x, \betan) + \tfrac{1}{2}\left( b^2(x, \betan) - b^2(x,\beta)\right)^2\right)\left(\partial^2_y g_3(0,x,x,\theta)\right)^2\, \pi(dx)\,.
\end{align*}
First, use Lemmas~\ref{lemma:expectations_rate}, \ref{lemma:expectations2_rate}, and \ref{ucp} to see that 
\begin{align}
\begin{split}\label{Eg32}
&\hspace{-5mm} \frac{1}{n\Delta_n^2} \sum_{i=1}^n \EE^n_{i-1}\left( g_3^2(\Delta_n,\xtr,\xtl, \theta)\right)\\
&= \frac{1}{n} \sum_{i=1}^n \tfrac{1}{2} b^4(\xtl, \betan)  \left( \partial^2_y g_3(0,\xtl,\xtl, \theta)\right)^2 \\
&\hspace{5mm} +\frac{1}{n} \sum_{i=1}^n \tfrac{1}{4} \left(b^2(\xtl, \betan)-b^2 (\xtl, \beta)\right)^2   \left(\partial^2_y g_3(0,\xtl,\xtl, \theta)\right)^2 \\
&\hspace{5mm} + \Delta_n\frac{1}{n}\sum_{i=1}^n R(\Delta_n,\xtl, \theta)\\
&\overset{\pp}{\longrightarrow} D_2(\theta, \thetan)\,,
\end{split}
\end{align}
\begin{align}
\frac{1}{n^2\Delta_n^4} \sum_{i=1}^n \EE^n_{i-1}\left( g_3^4(\Delta_n,\xtr,\xtl, \theta)\right) = \frac{1}{n\Delta_n}\frac{1}{n} \sum_{i=1}^n R(\Delta_n,\xtl, \theta)
&\overset{\pp}{\longrightarrow} 0\,,\nonumber
\end{align}
and that for $j=1,2$
\begin{align}
\label{Egjg3}
\frac{1}{n\Delta_n^{3/2}} \sum_{i=1}^n
\EE^n_{i-1}\left( g_jg_3(\Delta_n,\xtr,\xtl, \theta)\right)
= \Delta_n^{1/2} \frac{1}{n} \sum_{i=1}^n R(\Delta_n, \xtr, \theta)
  &\overset{\pp}{\longrightarrow} 0\,, \\
\frac{1}{n\Delta_n^{3/2}} \sum_{i=1}^n
  \EE^n_{i-1}\left( \partial_{\alpha_j} g_3(\Delta_n,\xtr,\xtl, 
\theta)\right) = \Delta_n^{1/2}\frac{1}{n} \sum_{i=1}^n
R(\Delta_n,\xtl, \theta)
&\overset{\pp}{\longrightarrow} 0\,,\nonumber \\
\frac{1}{n^2\Delta_n^{3}} \sum_{i=1}^n \EE^n_{i-1}\left( \left(
\partial_{\alpha_j} g_3(\Delta_n,\xtr,\xtl,
\theta)\right)^2\right) = \frac{1}{n}\frac{1}{n} \sum_{i=1}^n
R(\Delta_n,\xtl, \theta)
&\overset{\pp}{\longrightarrow} 0\,. \nonumber
\end{align}
Thus,
\begin{align}
 \begin{split}
\frac{1}{n\Delta_n^2} \sum_{i=1}^n  g_3^2(\Delta_n,\xtr,\xtl, \theta)
 -D_2(\theta, \thetan) &\overset{\pp}{\longrightarrow} 0\,,
 \\
\frac{1}{n\Delta_n^{3/2}} \sum_{i=1}^n \partial_{\alpha_j} g_3(\Delta_n,\xtr,\xtl,
\theta)&\overset{\pp}{\longrightarrow} 0
\end{split}
         \label{conv_new}
\end{align}
pointwise for $\theta \in \Theta$, by \cite[Lemma
9]{genoncatalot1993}. The functions $\partial_{\alpha_j} g_3$ and $g_3^2$, respectively, satisfy the conditions on $f$ in Lemmas~\ref{for_uni_P_alpha4} and \ref{for_uni_P_alpha4b}. Consequently, Lemma~\ref{lemma:uconvP_final}.\ref{con3_2} follows from (\ref{conv_new}), Jensen's inequality, the mean value theorem, the Cauchy-Schwarz inequality, as well as results on convergence in probability from \cite[Chapter 14]{kallenberg1997}. \medskip

In order to prove Lemma~\ref{lemma:uconvP_final}.\ref{con3_1}, first observe that
\begin{align*}
&\hspace{0mm} \frac{1}{(n\Delta_n)^2}\sum_{i=1}^n \delta_n \EE^n_{i-1}\left( gg^\star (\Delta_n,\xtr, \xtl,
  \thetan)\right) \delta_n\\
&= \EE^n_{i-1}\begin{pmatrix} \displaystyle \frac{1}{n\Delta_n}  \sum_{i=1}^n   g_\alpha   g_\alpha^\star  (\Delta_n,\xtr, \xtl,
  \thetan) & \displaystyle \frac{1}{n\Delta_n^{3/2}} \sum_{i=1}^n  g_\alpha   g_\beta  (\Delta_n,\xtr, \xtl,
  \thetan)\\
\displaystyle \frac{1}{n\Delta_n^{3/2}}   \sum_{i=1}^n   g_\beta g_\alpha^\star  (\Delta_n,\xtr, \xtl,
  \thetan) & \displaystyle \frac{1}{n\Delta_n^2} \sum_{i=1}^n   g_\beta^2  (\Delta_n,\xtr, \xtl,
  \thetan)
\end{pmatrix}   \,,
\end{align*}
so combining (\ref{EconC}) and Remark~\ref{re:jump_gen_presentation} for the submatrix concerning $g_\alpha g_\alpha^\star $, and (\ref{Eg32}) and (\ref{Egjg3}) for the remaining coordinates, it follows that
\begin{align}
\frac{1}{(n\Delta_n)^2}\sum_{i=1}^n \delta_n\EE^n_{i-1}\left( gg^\star (\Delta_n,\xtr, \xtl,
  \thetan)\right) \delta_n
  &\overset{\pp}{\longrightarrow} D(\thetan, \thetan)\,.
\label{ECon}
\end{align}
Suppose now that $R_\theta(t,x) \equiv 0$ in (\ref{jump_AMG}). Let
$v\in \RR^3$ be fixed, and consider
\begin{align}
M_{n,i} &= \frac{1}{n\Delta_n} \sum_{j=1}^i v^\star \delta_n g(\Delta_n,\xtrj,\xtlj,
          \thetan)
          \label{M2}
\end{align}
which defines a real-valued, zero-mean, square-integrable martingale array with differences $D_{n,i}
= (n\Delta_n)^{-1} v^\star \delta_ng(\Delta_n,\xtr,\xtl,
\thetan)$. By (\ref{ECon}), it holds that
\begin{align*}
\sum_{i=1}^n  \EE^n_{i-1}\left(
  \left((n\Delta_n)^{-1}v^\star \delta_ng (\Delta_n,\xtr,\xtl,
\thetan)\right)^2\right)
&\overset{\pp}{\longrightarrow} v^\star  D(\thetan, \thetan) v\,.
\end{align*}
Furthermore, the conditional Lyapunov condition
\begin{align}
\sum_{i=1}^n \EE^n_{i-1}\left( \left((n\Delta_n)^{-1}v^\star \delta_n
    g(\Delta_n,\xtr,\xtl, \thetan)\right)^4 \right)
&\overset{\pp}{\longrightarrow} 0
\label{condlya1_jump}
\end{align}
holds, implying also the Lindeberg condition. The Lyapunov condition may be verified by
  multiplying out the parentheses on the left-hand side of (\ref{condlya1_jump}), and using
  (\ref{Eg4_jump}), and Lemmas~\ref{lemma:expectations2_rate} and \ref{ucp}. It follows then that 
\begin{align*}
\frac{1}{n\Delta_n}\sum_{i=1}^n v^\star \delta_n g(\Delta_n,\xtr,\xtl,
\thetan) &\overset{\dd}{\longrightarrow}\nn\left(0,v^\star  D(\thetan,\thetan) v\right)\,,
\end{align*}
thus proving Lemma~\ref{lemma:uconvP_final}.\ref{con3_1} when $R_\theta(t,x) \equiv 0$ in (\ref{jump_AMG}).
\medskip

When $R_\theta(t,x)$ is not identically equal to $0$ in (\ref{jump_AMG}),
it is assumed that $n\Delta_n^{2({\kappa_0}-1)}
\to 0$ as $n\to \infty$ for some ${\kappa_0} \geq 2$. In this case,
define $\tilde g$ as in (\ref{tildeg}). It holds that
\begin{align*}
\frac{1}{n\Delta_n}\sum_{i=1}^n \delta_n\EE_{i-1}^n \left( g(\Delta_n,\xtr,\xtl,
  \thetan)\right) = \delta_n\Delta_n^{{\kappa_0}-1}\frac{1}{n} \sum_{i=1}^n
                             R(\Delta_n,\xtl)&\overset{\pp}{\longrightarrow} 0\,,
\end{align*}
so it remains to show that
\begin{align}
\frac{1}{n\Delta_n}\sum_{i=1}^n \delta_n\tilde{g}(\Delta_n,\xtr,\xtl,
\thetan)&\overset{\dd}{\longrightarrow} \nn_3(0,D(\thetan, \thetan))\,.
\label{nonMGconvD_final}
\end{align}
Consider
(\ref{M2}) with $\tilde g$ in place of $g$. First, see that
\begin{align*}
&\hspace{-5mm} \sum_{i=1}^n
\EE^n_{i-1}\left(\left((n\Delta_n)^{-1}  v^\star \delta_n \tilde{g}(\Delta_n, \xtr, \xtl, \thetan)\right)^2\right)\\
&= v^\star \left(\frac{1}{(n\Delta_n)^2}\sum_{i=1}^n \delta_n 
\EE^n_{i-1}\left(gg^\star (\Delta_n, \xtr, \xtl, \thetan)\right)
  \delta_n\right)v\\
&\hspace{5mm} - v^\star \delta_n\Delta_n^{{\kappa_0}-1} \left(\frac{1}{n^2}\sum_{i=1}^n 
R(\Delta_n,\xtl)\right)\delta_n\Delta_n^{{\kappa_0}-1} v\\
&\overset{\pp}{\longrightarrow} v^\star  D(\thetan, \thetan)v\,.
\end{align*}
Also, the conditional Lyapunov condition (\ref{condlya1_jump}) holds
with $\tilde g$ in place of $g$. This is  seen by multiplying out the
  parentheses in the conditional expectation, and using Lemmas~\ref{lemma:condexpec},
\ref{lemma:expectations_rate}, \ref{lemma:expectations2_rate}, and Lemma~\ref{ucp}. Now (\ref{nonMGconvD_final}) follows, completing the proof of Lemma~\ref{lemma:uconvP_final}.\ref{con3_1}.
\end{proof}

\appendix

\section{Auxiliary Results}\label{app_lemmas}
This appendix contains technical results pertaining
to the proofs in Section
\ref{sec:main_lemma_proofs}. When not otherwise
mentioned, the general model given by (\ref{jumpSDE}) is assumed.
Appendix \ref{sec:jump:tech:exp} contains inequalities involving expectations, most of them used to prove uniform convergence
in probability, while Appendix \ref{exconmo} concerns expansions
of conditional moments in terms of the infinitesimal generator (\ref{infinitesimalgenerator2}).

\subsection{Inequalities}\label{sec:jump:tech:exp}
\begin{assumption}
The
function $f(t,y,x,\theta)$, $f: (0,\Delta_0)_{\varepsilon_0} \times \xx^2 \times
    \Theta \to \RR$, and its mixed partial derivatives $\partial_t^i
    \partial_y^j f$, $i=0,1$, $j=0,1,2$, exist, are
    continuous, and
  are of polynomial growth in $x$ and $y$, uniformly for $t \in
  (0,\Delta_0)_{\varepsilon_0}$ and $\theta$ in compact, convex sets. Furthermore, $f(0,x,x,\theta)=0$ for all $x\in\xx$ and $\theta\in\Theta$.
\dqed
\label{assumptions_on_f_new}
\end{assumption}
\begin{definition}
 For $q\in \NN$, let $\mathbf{z}_q=(z_1,\ldots,z_q)^\star  \in
\RR^q$ with the convention $\mathbf{z}_0=()$. Let
$m\in\NN_0$ be given. Suppose that Assumption~\ref{assumptions_on_X_jump} holds, and that the function $(t,y,x,\theta) \mapsto f(t,y,x, \mathbf{z}_m,\theta)$
satisfies Assumption
\ref{assumptions_on_f_new}. Define $\aaa_1$, $\aaa_2$, and $\aaa_3$ by their actions on
$f$, which
result in the functions
\begin{align*}
\begin{split}
\aaa_1  f&:\quad (t,y,x, \mathbf{z}_m,\theta)  \mapsto \partial_t
f(t,y,x, \mathbf{z}_m, \theta) + \ll f(t,y,x, \mathbf{z}_m, \theta)\\
\aaa_2 f&: \quad(t,y,x, \mathbf{z}_m,\theta) \mapsto b(y, \thetan) \partial_y f(t,y,x, \mathbf{z}_m, \theta)\\
\aaa_3 f&:\quad (t,y,x,
\mathbf{z}_{m+1},\theta) \mapsto f(t,y+c(y,z_{m+1}, \thetan), x, \mathbf{z}_m, \theta) - f(t,y,x, \mathbf{z}_m, \theta)\,.
\end{split}
\end{align*}
\dqed
\end{definition}
\begin{remark}
When well-defined for $f(t,y,x, \mathbf{z}_m,
\theta)$, the notation $f_j \coloneqq\aaa_j f$, $f_{jk}\coloneqq\aaa_k\aaa_j
f$, and $f_{j3k} \coloneqq \aaa_k\aaa_3\aaa_j f$ is used for
$j,k=1,2,3$. We put $\bm{h}(u,s,  \mathbf{z}_m,
  \theta) \coloneqq h(u-s,X_u,X_s, \mathbf{z}_m, \theta)$, as well as $\bm{h}(u-,s,  \mathbf{z}_m,
  \theta) \coloneqq h(u-s,X_{u-},X_s, \mathbf{z}_m, \theta)$, and $D\bm{h}(\,\cdot\,, \theta,\theta') \coloneqq \bm{h}(\,\cdot\,,
\theta)-\bm{h}(\,\cdot\,, \theta')$ for functions $h (t,y,x,\mathbf{z}_m,
  \theta)$. \cqed
\label{note:f_for_Ito}
\end{remark}
\begin{lemma}
Suppose that Assumption~\ref{assumptions_on_X_jump} holds, and that $f(t,y,x, \theta)$ satisfies
Assumption~\ref{assumptions_on_f_new}. Let $p = 2^q$ for
some $q \in \NN$, and recall the notation from Remark
\ref{note:f_for_Ito}. 
Then, there exist constants
$C_p>0$  such that for all $\theta,\theta' \in \Theta$ and $n\in\NN$,
\begin{align*}
\begin{split}
&\hspace{-5mm} \EE\left( \left| \sum_{i=1}^n D\bm{f}(\tplus, \tminus,\theta,\theta') \right|^p\right)\\
&\leq  (n\Delta_n)^{p-1} C_p\sum_{i=1}^n\int_\tminus^\tplus
  \EE\left( D\bm{f_1}(u,\tminus,\theta,\theta')^p\right) \, du\\
&\hspace{5mm} + (n\Delta_n)^{p/2-1} C_p \sum_{i=1}^n
 \int_\tminus^\tplus  \EE\left(
D\bm{f_2}(u,\tminus,\theta,\theta')^{p}\right)\, du\\
&\hspace{5mm} + \sum_{l=1}^q (n\Delta_n)^{2^{q-l}-1}
 C_p \sum_{i=1}^n \int_\tminus^\tplus \int_\RR  \EE\left(
    D\bm{f_3}(u,\tminus,z, \theta,\theta')^{p}\right)
  \nu_\thetan(dz)\, du\,.
\end{split}
\end{align*}
\dqed
\label{ito_ineq_jumps_sum}
\end{lemma}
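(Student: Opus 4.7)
My approach would be to apply It\^o's formula for jump-diffusions to $s \mapsto f(s - \tminus, X_s, X_\tminus, \theta)$ on $[\tminus, \tplus]$, exploiting the initial condition $f(0,x,x,\theta)=0$ to kill the boundary term. This produces the semimartingale decomposition
\begin{align*}
\bm{f}(\tplus, \tminus, \theta)
&= \int_\tminus^\tplus \bm{f_1}(u, \tminus, \theta)\, du
 + \int_\tminus^\tplus \bm{f_2}(u-, \tminus, \theta)\, dW_u \\
&\quad + \int_\tminus^\tplus\!\!\int_\RR \bm{f_3}(u-, \tminus, z, \theta)\,\tilde N^{\thetan}(du, dz),
\end{align*}
since $\aaa_1 f = \partial_t f + \ll f$ is exactly the It\^o drift (the full jump-diffusion generator enters because the jump compensator has been absorbed into it), $\aaa_2 f = b\partial_y f$ is the Brownian integrand, and $\aaa_3 f$ is the jump-increment integrated against the compensated measure $\tilde N^{\thetan} = N^{\thetan} - \mu_{\thetan}$. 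Subtracting the same identity at $\theta'$ and summing over $i$, the left-hand side of the lemma becomes the sum of a drift integral of $D\bm{f_1}$, a continuous martingale involving $D\bm{f_2}$, and a purely-discontinuous martingale involving $D\bm{f_3}$, all over $[0, n\Delta_n]$.

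Using $|x+y+z|^p \leq 3^{p-1}(|x|^p+|y|^p+|z|^p)$, the $p$-th moment splits into three pieces. For the drift piece, Jensen's inequality against $du$ on $[0,n\Delta_n]$ produces the prefactor $(n\Delta_n)^{p-1}$ and turns the $p$-th power of the time integral into a single time integral of $\EE(D\bm{f_1}^p)$, which is finite by the polynomial-growth hypotheses of Assumption~\ref{assumptions_on_f_new} combined with Assumption~\ref{assumptions_on_X_jump}.\ref{boundedmoments_jump}. For the Brownian piece, one application of the Burkholder-Davis-Gundy inequality followed by Jensen with exponent $p/2$ gives the factor $(n\Delta_n)^{p/2-1}$ and an integrand $\EE(D\bm{f_2}^p)$. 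Re-expressing the single integrals as sums over the partition $[\tminus,\tplus]$ yields the first two terms on the right-hand side exactly as stated.

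The main obstacle is the compensated-Poisson term, which must produce the sum over $l=1,\dots,q$. The plan is to iterate BDG $q$ times, exploiting the dyadic choice $p=2^q$. A first application of BDG bounds the $p$-th moment of the jump martingale by $C_p$ times the $(p/2)$-th moment of its quadratic variation $\int\!\!\int (D\bm{f_3})^2\,N^{\thetan}(du,dz)$. Writing $N^{\thetan}=\tilde N^{\thetan}+\mu_{\thetan}$ splits this quadratic variation into a compensator part and a new compensated Poisson integral of $(D\bm{f_3})^2$; applying BDG again to the latter reduces the exponent from $2^{q-1}$ to $2^{q-2}$ at the cost of spawning a term involving $(D\bm{f_3})^{4}$ integrated against $N^{\thetan}$. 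After $q$ iterations the exponent reaches $1$ and the recursion terminates. At iteration level $l$ one is left with a time-space integral of $(D\bm{f_3})^{2^l}$ against $\nu_{\thetan}(dz)\,du$ raised to the power $2^{q-l}$; Jensen with exponent $2^{q-l}$ applied to this integral over $[0,n\Delta_n]\times\RR$ (using the finiteness of $\nu_{\thetan}(\RR)$ from Assumption~\ref{assumptions_on_X_jump}) yields the factor $(n\Delta_n)^{2^{q-l}-1}$ and brings the inner power up to $p=2^q$. Re-expressing each resulting integral as a sum over the subintervals $[\tminus,\tplus]$ and collecting the $q$ levels gives the stated sum over $l$. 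The delicate point is the bookkeeping of the iterated BDG constants and verifying that each intermediate moment is finite; finiteness follows once again from the polynomial-growth assumptions on $f$ and its derivatives together with $\int|z|^m\varphi_K(z)\,\tilde\nu(dz)<\infty$ in Assumption~\ref{assumptions_on_X_jump}.\ref{x3}.
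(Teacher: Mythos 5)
Your proposal is correct and follows essentially the same route as the paper: Itô's formula with the jump compensator absorbed into the generator, Jensen for the drift term, BDG plus Jensen for the Brownian term, and the iterated BDG scheme on the dyadic exponents $p=2^q$ (splitting the quadratic variation of the Poisson martingale via $N^{\thetan}=(N^{\thetan}-\mu_{\thetan})+\mu_{\thetan}$ at each level) for the jump term, which is exactly the paper's $M^{(k)}$/$S^{(k)}$ recursion terminating because the final compensated integral has zero mean. The only step you omit is the cosmetic final replacement of $X_{u-}$ by $X_u$ in the integrands, justified by the finite activity of the jumps.
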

\begin{proof}[\textbf{Proof of Lemma~\ref{ito_ineq_jumps_sum}}]
  By It\^{o}'s formula for SDEs with jumps,
  \begin{align}
&\hspace{-5mm} \EE\left( \left| \sum_{i=1}^n D\bm{f}(\tplus, \tminus,\theta,\theta') \right|^p\right)\nonumber\\
&\leq C_p \,\EE\left( \left|\sum_{i=1}^n \int_\tminus^\tplus D\bm{f_1}(u-,\tminus,
  \theta,\theta')\, du\right|^p\right) \label{ito_sum_1}\\
&\hspace{5mm} + C_p \,\EE\left( \left|\sum_{i=1}^n \int_\tminus^\tplus
D\bm{f_2}(u-,\tminus, \theta, \theta')\, dW_u\right|^p\right)\label{ito_sum_2}\\
&\hspace{5mm} + C_p \,\EE\left( \left|\sum_{i=1}^n \int_\tminus^\tplus \int_\RR
D\bm{f_3}(u-,\tminus,z, \theta, \theta')\, (N^\thetan-\mu_\thetan)(du,dz)\right|^p\right)\label{ito_sum_3}
  \end{align}
for suitable constants $C_p>0$. Starting with (\ref{ito_sum_1}), using Jensen's inequality twice,
\begin{align}
&\hspace{-5mm} \EE\left( \left|\sum_{i=1}^n \int_\tminus^\tplus D\bm{f_1}(u-,\tminus,\theta,\theta')\, du\right|^p\right)\nonumber\\
&= (n\Delta_n)^p\, \EE\left( \left|\frac{1}{n}\sum_{i=1}^n \frac{1}{\Delta_n}\int_\tminus^\tplus D\bm{f_1}(u-,\tminus, \theta,\theta')\, du\right|^p\right)\nonumber\\
&\leq (n\Delta_n)^{p-1} \sum_{i=1}^n\int_\tminus^\tplus
  \EE\left(  D\bm{f_1}(u-,\tminus,  \theta,\theta')^p\right) \, du\,.\label{ito_sum_1_new}
\end{align}
Now, consider (\ref{ito_sum_2}). In the same manner as before, using also the Burk\-hol\-der-Da\-vis-Gun\-dy inequality,
\begin{align}
&\hspace{-5mm} \EE\left( \left|\sum_{i=1}^n \int_\tminus^\tplus
D\bm{f_2}(u-,\tminus,\theta, \theta')\,
  dW_u\right|^p\right)\nonumber\\
&=(n\Delta_n)^p \,\EE\left( \left|\frac{1}{n} \sum_{i=1}^n \frac{1}{\Delta_n}\int_\tminus^\tplus
  D\bm{f_2}(u-,\tminus,\theta, \theta')\,
  dW_u\right|^p\right)\nonumber\\
&\leq (n\Delta_n)^p C_p \,\EE\left( \left|\frac{1}{n^2}\sum_{i=1}^n \frac{1}{\Delta_n^2} \int_\tminus^\tplus 
D\bm{f_2}(u-,\tminus, \theta,\theta')^2\,
  du\right|^{p/2}\right)\nonumber\\
&\leq (n\Delta_n)^{p/2-1} C_p \sum_{i=1}^n
 \int_\tminus^\tplus  \EE\left(
D\bm{f_2}(u-,\tminus, \theta,\theta')^{p}\right)\, du\,,
\label{ito_sum_2_new}
\end{align}
for some constant $C_p > 0$. Finally, in the case of (\ref{ito_sum_3}), let
$\M^{(k)} = (M_v^{(k)})_{v\geq 0}$ and $\S^{(k)} = (S_v^{(k)})_{v\geq 0}$ be given by
\begin{align}
  \begin{split}
M_v^{(k)} &= \int_0^v \int_\RR \sum_{i=1}^n\mathbf{1}_{(\tminus,\tplus]}(u)
            D\bm{f_3}(u-,\tminus,z, \theta,\theta')^k (N^\thetan-\mu_\thetan)(du,dz) \\
S_v^{(k)} &= \int_0^v \int_\RR \sum_{i=1}^n
\mathbf{1}_{(\tminus,\tplus]}(u) D\bm{f_3}(u-,\tminus,z,
  \theta,\theta')^k \, \nu_\thetan(dz)\, du
\end{split}
\label{MSdef}
\end{align}
for $k\in \NN$, and note that the quadratic variation of $\M^{(k)}$ may be written as
\begin{align*}
\int_0^v \int_\RR \sum_{i=1}^n\mathbf{1}_{(\tminus,\tplus]}(u) D\bm{f_3}(u-,\tminus,z, \theta,\theta')^{2k} \, N^\thetan(du,dz) 
&= M_v^{(2k)} + S_v^{(2k)}\,.
\end{align*}
By the Burkholder-Davis-Gundy inequality, for any $m\geq
1$, there exist constants $C_m>0$ such that
\begin{align*}
\EE\left( |M_v^{(k)}|^m\right) &\leq C_m  \,\EE\left( \left(M_v^{(2k)}\right)^{m/2}\right) + C_m \,\EE\left( \left(S_v^{(2k)}\right)^{m/2}\right)\,.
\end{align*}
In particular, inserting $2^j$ in place of $k$ and $2^{q-j}$ in
place of $m$ for $j \in \{0,1,\ldots,q-1\}$,
\begin{align*}
\EE\left( \left(M_v^{(2^j)}\right)^{2^{q-j}}\right) 
&\leq C_p \, \EE\left(
  \left(M_v^{(2^{j+1})}\right)^{2^{q-(j+1)}}\right) + C_p\,
\EE\left(
  \left(S_v^{(2^{j+1})}\right)^{2^{q-(j+1)}}\right)\,.
\end{align*}
This inequality may be used iteratively to obtain
\begin{align*}
\EE\left( \left(M_v^{(1)}\right)^p\right)
&\leq 
C_p \sum_{l=1}^q \EE\left(
  \left(S_v^{(2^l)}\right)^{2^{q-l}}\right)\,,
\end{align*}
where we used that $\EE( M_v^{(p)} ) =0$ by properties
of the Poisson integral.
Inserting from (\ref{MSdef}), this may also be written as
\begin{align}
  \begin{split}
&\hspace{-5mm} \EE\left( \left|\sum_{i=1}^n \int_\tminus^\tplus \int_\RR
  D\bm{f_3}(u-,\tminus,z, \theta, \theta')\,
  (N^\thetan-\mu_\thetan)(du,dz)\right|^p\right)\\
&\leq C_p \sum_{l=1}^q \EE\left(\left( \sum_{i=1}^n \int_\tminus^\tplus \int_\RR
D\bm{f_3}(u-,\tminus,z, \theta, \theta')^{2^l}\,
 \nu_\thetan(dz)\,du\right)^{2^{q-l}}\right)\,.
\end{split}
\label{new_ineq}
\end{align}
Recalling that $\nu_\theta$ has density $\xi(\theta)p(\,\cdot\,,\theta)$ with
respect to $\tilde{\nu}$, where $p(\,\cdot\,,\theta)$ is a probability
density, Jensen's inequality is used twice to write
\begin{align*}
&\hspace{-5mm} \EE\left(
  \left(\sum_{i=1}^n \int_\tminus^\tplus \int_\RR  D\bm{f_3}(u-,\tminus,z, \theta,\theta')^{2^l} \, \nu_\thetan(dz)\,du\right)^{2^{q-l}}\right)\nonumber\\
&= (\xi(\thetan)n\Delta_n)^{2^{q-l}} \\
&\hspace{5mm} \times \EE\left(
  \left(\frac{1}{n} \sum_{i=1}^n \frac{1}{\Delta_n}\int_\tminus^\tplus \int_\RR 
    D\bm{f_3}(u-,\tminus,z, \theta,\theta')^{2^l} p(z,\thetan)\,
    \tilde{\nu}(dz)\, du\right)^{2^{q-l}}\right) \nonumber\\
&\leq  (\xi(\thetan)n\Delta_n)^{2^{q-l}-1}
  \sum_{i=1}^n \int_\tminus^\tplus \int_\RR  \EE\left(
    D\bm{f_3}(u-,\tminus,z, \theta,\theta')^{p}\right)
  \nu_\thetan(dz)\, du\nonumber\\
&= (n\Delta_n)^{2^{q-l}-1}C_p
  \sum_{i=1}^n \int_\tminus^\tplus \int_\RR  \EE\left(
    D\bm{f_3}(u-,\tminus,z, \theta,\theta')^{p}\right)
  \,\nu_\thetan(dz)\, du\,.
\end{align*}
Inserting this into (\ref{new_ineq}), we obtain
\begin{align}
&\hspace{5mm} \EE\left( \left|\sum_{i=1}^n \int_\tminus^\tplus \int_\RR
 D\bm{f_3}(u-,\tminus,z, \theta, \theta')\,
  (N^\thetan-\mu_\thetan)(du,dz)\right|^p\right)\nonumber\\
&\leq \sum_{l=1}^q (n\Delta_n)^{2^{q-l}-1}C_p 
  \sum_{i=1}^n \int_\tminus^\tplus \int_\RR  \EE\left(
D\bm{f_3}(u-,\tminus,z, \theta,\theta')^{p}\right)
  \nu_\thetan(dz)\, du\,.
\label{ito_sum_3_new}
\end{align}
The proof is completed by replacing (\ref{ito_sum_1}), (\ref{ito_sum_2}), (\ref{ito_sum_3}) with (\ref{ito_sum_1_new}), (\ref{ito_sum_2_new}), 
(\ref{ito_sum_3_new}), and using that $\X$ has finite
activity jumps to replace $X_{u-}$ with $X_u$ in the integrals.
\end{proof}
\begin{lemma}
Let $m\in \NN_0$, $p = 2^q$ for
some $q \in \NN$, and recall the notation of Remark
\ref{note:f_for_Ito}. Suppose that
Assumption~\ref{assumptions_on_X_jump} holds. Assume that $(t,y,x, \theta) 
\mapsto f(t,y,x,\mathbf{z}_m, \theta)$ satisfies
Assumption~\ref{assumptions_on_f_new}.
Then, there exist constants
$C_p>0$  such that
\begin{align*} 
\begin{split}
&\hspace{-5mm} \EE\left( D\bm{f}(t,s,\mathbf{z}_m,
  \theta,\theta')^p\right)\\
&\leq(t-s)^{p-1}  C_p\int_s^t \EE\left( 
  D\bm{f_1}(u,s,\mathbf{z}_m, \theta,\theta')^p\right)\, du\\
&\hspace{5mm} + (t-s)^{p/2-1}  C_p\int_s^t \EE\left( 
  D\bm{f_2}(u,s,\mathbf{z}_m, \theta,\theta')^p\right)\, du \\
&\hspace{5mm} +\left( \sum_{l=1}^{q}
  (t-s)^{2^{q-l}-1}\right)  C_p\int_s^t \int_\RR
  \EE\left(D\bm{f_3}(u,s,\mathbf{z}_m,z,
  \theta,\theta')^p\right)\, \nu_\thetan(dz)\, du
\end{split}
\end{align*}
for all $\theta,\theta' \in \Theta$, $0\leq s < t \leq
s+\Delta_0$.  \dqed
\label{ito_ineq_jumps}
\end{lemma}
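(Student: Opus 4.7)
The plan is to follow essentially the same recipe as in the proof of Lemma~\ref{ito_ineq_jumps_sum}, but applied to a single interval $[s,t]$ rather than to a telescoped sum of $n$ intervals. The key simplification is that there is no outer sum, so the combinatorial factors $(n\Delta_n)^{\,\cdot}$ in the previous lemma are replaced by powers of the single interval length $(t-s)$.

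First, I would note that by Assumption~\ref{assumptions_on_f_new} we have $f(0,X_s,X_s,\mathbf{z}_m,\theta)=f(0,X_s,X_s,\mathbf{z}_m,\theta')=0$, so $D\bm{f}(s,s,\mathbf{z}_m,\theta,\theta')=0$. Then It\^o's formula for SDEs with jumps applied to $(u,X_u)\mapsto D\bm{f}(u,s,\mathbf{z}_m,\theta,\theta')$ yields
\[
D\bm{f}(t,s,\mathbf{z}_m,\theta,\theta')
= \int_s^t D\bm{f_1}(u-,s,\mathbf{z}_m,\theta,\theta')\,du
+ \int_s^t D\bm{f_2}(u-,s,\mathbf{z}_m,\theta,\theta')\,dW_u
+ \int_s^t\!\!\int_\RR D\bm{f_3}(u-,s,\mathbf{z}_m,z,\theta,\theta')\,(N^{\thetan}-\mu_{\thetan})(du,dz).
\]
Raising to the $p$-th power (with $p=2^q$) and using $|a+b+c|^p\leq C_p(|a|^p+|b|^p+|c|^p)$, the problem reduces to bounding each of the three integral terms separately.

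For the drift integral, Jensen's inequality directly gives $(t-s)^{p-1}C_p\int_s^t \EE(D\bm{f_1}^p)\,du$. For the Wiener integral, the Burkholder-Davis-Gundy inequality followed by Jensen's inequality on the quadratic variation yields $(t-s)^{p/2-1}C_p\int_s^t \EE(D\bm{f_2}^p)\,du$. In both cases, since the jumps of $\X$ are of finite activity, the $u-$ in the integrand can be replaced by $u$ inside the $du$-integral.

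The main obstacle is the compensated Poisson integral, where a single application of BDG does not suffice because its quadratic variation is itself an integral against $N^\thetan$ (not $\nu_\thetan\,du$). I would therefore iterate BDG exactly as in the proof of Lemma~\ref{ito_ineq_jumps_sum}: at each step, the quadratic variation of the compensated jump martingale $M^{(k)}_v$ splits into a compensated part $M^{(2k)}_v$ and a compensator part $S^{(2k)}_v$; applying BDG again to $M^{(2k)}_v$ and iterating $q$ times bounds $\EE(|M^{(1)}_{t}|^p)$ by $C_p\sum_{l=1}^q \EE((S^{(2^l)}_t)^{2^{q-l}})$. Finally, using that $\nu_\thetan$ has density $\xi(\thetan)p(\cdot,\thetan)$ with respect to $\tilde{\nu}$, where $p(\cdot,\thetan)$ is a probability density, two applications of Jensen's inequality (one in $z$ and one in $u$, and absorbing $\xi(\thetan)$ into $C_p$) give
\[
\EE\bigl((S^{(2^l)}_t)^{2^{q-l}}\bigr)\leq (t-s)^{2^{q-l}-1}C_p\int_s^t\!\!\int_\RR \EE\bigl(D\bm{f_3}(u,s,\mathbf{z}_m,z,\theta,\theta')^p\bigr)\,\nu_\thetan(dz)\,du,
\]
which, summed over $l=1,\ldots,q$, yields the claimed bound on the Poisson term. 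Combining the three bounds completes the proof.
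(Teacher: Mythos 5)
Your proposal is correct and is essentially the paper's own argument: the paper proves this lemma by observing that the proof of Lemma~\ref{ito_ineq_jumps_sum} with $f$ carrying the extra variable $\mathbf{z}_m$, specialized to $n=1$, $t_i^n=t$, $t_{i-1}^n=s$ (so $\Delta_n=t-s$), gives exactly the stated bound. Your step-by-step re-derivation (It\^o decomposition from $D\bm{f}(s,s,\mathbf{z}_m,\theta,\theta')=0$, Jensen for the drift term, Burkholder--Davis--Gundy plus Jensen for the Wiener term, and the iterated BDG/compensator argument for the Poisson term) matches that specialization precisely.
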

Letting $f$ depend on an extra
variable $\mathbf{z}_m$ in the proof of Lemma
\ref{ito_ineq_jumps_sum}, and putting $n=1$, $t_i^n = t$ and
$t_{i-1}^n = s$ (so that $\Delta_n = t-s$) proves Lemma~\ref{ito_ineq_jumps}.
\begin{lemma}
Let $p > d$ of the form $p=2^q$ for some $q \in \NN$ be given. Suppose that Assumption
\ref{assumptions_on_X_jump} holds, and that $f(t,y,x, \theta) \in
\cc^\infty_\text{pol}((0,\Delta_0)_{\varepsilon_0} \times \xx^2 \times
\Theta)$
with $f(0,x,x, \theta) =0$ for all $x\in \xx$ and
$\theta \in \Theta$. Let
\begin{align*}
\zeta_n(\theta) &= \frac{1}{n\Delta_n} \sum_{i=1}^n f(\Delta_n, \xtr, \xtl, \theta)\,.
\end{align*}
Then, for each compact, convex set $K\subseteq \Theta$, there exists $C_{K,p}>0$ such that 
\begin{align*}
\EE\left( |\zeta_n(\theta)-\zeta_n(\theta')|^p\right)
&\leq C_{K,p} \Vert \theta-\theta' \Vert^{p}
\end{align*}
for all $\theta, \theta' \in K$ and $n \in \NN$.\dqed
\label{for_uni_P}
\end{lemma}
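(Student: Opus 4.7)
The plan is to apply Lemma~\ref{ito_ineq_jumps_sum} to $f$, and then use the mean value theorem in $\theta$ to extract the factor $\|\theta-\theta'\|^p$ from each of the resulting integrals. Since $f\in\cc^\infty_\text{pol}$ with $f(0,x,x,\theta)=0$, Assumption~\ref{assumptions_on_f_new} holds, so Lemma~\ref{ito_ineq_jumps_sum} bounds $\EE\bigl(|n\Delta_n(\zeta_n(\theta)-\zeta_n(\theta'))|^p\bigr) = \EE\bigl(|\sum_{i=1}^n D\bm{f}(\tplus,\tminus,\theta,\theta')|^p\bigr)$ by a sum of three terms involving $\EE(D\bm{f_j}^p)$ for $j=1,2$ and, for $j=3$, an integral of $\EE(D\bm{f_3}^p)$ against $\nu_\thetan$.

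Next I would control the $D\bm{f_j}$-terms uniformly. Because the operators $\aaa_j$ act only on $(t,y,z)$, they commute with $\partial_\theta$, and the mean value theorem in $\theta$ (using convexity of $K$) gives
\[
|D\bm{f_j}(u,\tminus,\,\cdot\,,\theta,\theta')| \leq \|\theta-\theta'\|\sup_{\tilde\theta\in K}\|\aaa_j\partial_\theta f(u-\tminus,X_u,\xtl,\,\cdot\,,\tilde\theta)\|\,.
\]
Since the class $\cc^\infty_\text{pol}$ is preserved by $\partial_\theta$ and by the operators $\aaa_j$, the functions $\aaa_j\partial_\theta f$ are of polynomial growth in $x,y$ (and, for $j=3$, also in $z$), uniformly for $\tilde\theta\in K$. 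Combining with Assumption~\ref{assumptions_on_X_jump}.\ref{boundedmoments_jump} for moments of $\X$ and Assumption~\ref{assumptions_on_X_jump}.\ref{x3} for moments of $\nu_\thetan$ yields, uniformly in $u\in[\tminus,\tplus]$ and $n$, that $\EE(D\bm{f_j}(u,\tminus,\theta,\theta')^p) \leq C_{K,p}\|\theta-\theta'\|^p$ for $j=1,2$, and $\int_\RR\EE(D\bm{f_3}(u,\tminus,z,\theta,\theta')^p)\,\nu_\thetan(dz) \leq C_{K,p}\|\theta-\theta'\|^p$.

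Substituting these bounds into Lemma~\ref{ito_ineq_jumps_sum}, the $du$-integrations contribute a factor $\Delta_n$ and the $i$-sums contribute $n$, so
\[
\EE\Bigl(\bigl|{\textstyle\sum_{i=1}^n} D\bm{f}(\tplus,\tminus,\theta,\theta')\bigr|^p\Bigr)
\leq C_{K,p}\|\theta-\theta'\|^p\Bigl[(n\Delta_n)^p + (n\Delta_n)^{p/2} + {\textstyle\sum_{l=1}^q}(n\Delta_n)^{2^{q-l}}\Bigr].
\]
Dividing through by $(n\Delta_n)^p$ converts the bracket to $1 + (n\Delta_n)^{-p/2} + \sum_{l=1}^q (n\Delta_n)^{2^{q-l}-p}$; since $2^{q-l}-p\leq -1<0$ for $l\geq 1$ and $n\Delta_n\to\infty$, the supremum $\sup_n(n\Delta_n)^{-1}$ is finite, so the bracket is uniformly bounded in $n$ and the desired Kolmogorov-type estimate follows. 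The main obstacle, though routine, is the careful bookkeeping of the polynomial-growth bounds for $\aaa_3\partial_\theta f$: the growth in $z$ produced by the shift $y\mapsto y+c(y,z,\thetan)$ has to be absorbed by the moments of $\nu_\thetan$, which is precisely what Assumption~\ref{assumptions_on_X_jump}.\ref{x3} guarantees.
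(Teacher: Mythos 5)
Your proposal is correct and follows essentially the same route as the paper: apply Lemma~\ref{ito_ineq_jumps_sum}, use the mean value theorem (plus Cauchy--Schwarz and the polynomial-growth/moment assumptions) to bound each $D\bm{f_j}$-term by $C_{K,p}\Vert\theta-\theta'\Vert^p$, and absorb the resulting powers of $n\Delta_n$ using $n\Delta_n\to\infty$. The only cosmetic difference is that you spell out the commutation of $\aaa_j$ with $\partial_\theta$ and the absorption of the $z$-growth by the moments of $\nu_{\theta_0}$, which the paper leaves implicit.
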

\begin{proof}[\textbf{Proof of Lemma~\ref{for_uni_P}}]
Recall the notation from Remark~\ref{note:f_for_Ito}. Let $K\subseteq \Theta$ compact and convex be given. Write
\begin{align}
\EE\left( |\zeta_n(\theta)-\zeta_n(\theta')|^p\right) &=
                                                                     (n\Delta_n)^{-p}
                                                                     \,\EE\left(\left|\sum_{i=1}^n
                                                                     D
                                                                     \bm{f}(\tplus,
                                                        \tminus, \theta, \theta')\right|^p\right)\,.
\label{general_desired_result_new}
\end{align}
By Lemma~\ref{ito_ineq_jumps_sum}, there exist constants
$C_p>0$  such that for all $\theta,\theta' \in K$ and $n\in\NN$, 
\begin{align}
\begin{split}
&\hspace{-5mm} \EE\left( \left| \sum_{i=1}^n D\bm{f}(\tplus,\tminus,\theta,\theta') \right|^p\right)\label{general_first_step_new}\\
&\leq  (n\Delta_n)^{p-1} C_p\sum_{i=1}^n\int_\tminus^\tplus
  \EE\left( D\bm{f_1}(u,\tminus,\theta,\theta')^p\right) \, du\\
&\hspace{5mm} + (n\Delta_n)^{p/2-1} C_p \sum_{i=1}^n
 \int_\tminus^\tplus  \EE\left(
D\bm{f_2}(u,\tminus, \theta,\theta')^{p}\right)\, du\\
&\hspace{5mm} + \sum_{l=1}^q (n\Delta_n)^{2^{q-l}-1}
 C_p \sum_{i=1}^n \int_\tminus^\tplus \int_\RR  \EE\left(
    D\bm{f_3}(u,\tminus,z, \theta,\theta')^{p}\right)
  \nu_\thetan(dz)\, du\,.
\end{split}
\end{align}
The mean value
theorem and the Cauchy-Schwarz inequality may be used  to show that there exist constants $C_{K,p}>0$
such that for $j=1,2$,
\begin{align}
\int_\tminus^\tplus \EE\left( D\bm{f_j}(u,\tminus, \theta,
  \theta')^p\right) \, du
&\leq C_{K,p} \,\Delta_n\,\Vert
                      \theta-\theta'\Vert^p 
\label{first_general_bound}\\
  \int_\tminus^\tplus \int_\RR \EE\left( D\bm{f_3}(u,\tminus, z, \theta,
  \theta')^p\right) \,\nu_\thetan(dz) \, du
&\leq C_{K,p} \,\Delta_n\,\Vert
                      \theta-\theta'\Vert^p \,.
\label{second_general_bound}
\end{align}
Inserting (\ref{first_general_bound}) and (\ref{second_general_bound})
into (\ref{general_first_step_new}) yields the existence of
$C_{K,p}>0$ such that
\begin{align}
&\hspace{-5mm} \EE\left( \left| \sum_{i=1}^n D\bm{f}(\tplus,\tminus,\theta,\theta') \right|^p\right)\nonumber\\
&\leq C_{K,p}\left(  (n\Delta_n)^{p} +
  (n\Delta_n)^{p/2} + \sum_{l=1}^q
  (n\Delta_n)^{2^{q-l}} \right) \, \Vert
  \theta-\theta'\Vert^p \nonumber\\
&\leq  C_{K,p}\,(n\Delta_n)^{p}\, \Vert
  \theta-\theta'\Vert^p\,,\label{first_use_of_ii_new}
\end{align}
since $n\Delta_n \to \infty$ as $n\to \infty$. Inserting
(\ref{first_use_of_ii_new}) into
(\ref{general_desired_result_new}) completes the proof.
\end{proof}
\begin{lemma}
Consider the model given by (\ref{eqn:SDEoptimal2}), with $A\subseteq \RR^2$ and $B\subseteq \RR$. Suppose that Assumption
\ref{assumptions_on_X_jump} holds, and that $f(t,y,x, \theta) \in
\cc^\infty_\text{pol}((0,\Delta_0)_{\varepsilon_0} \times \xx^2 \times
\Theta)$. Furthermore, assume that
\begin{align*}
f(0,y,x, \theta) &=0\,, \quad y \in {\cal M}_k(x,\tilde \alpha), \quad k=0,1,2\\
\partial_t f(0,y,x, \theta) &=0\,,\quad y \in {\cal M}_k(x,\tilde \alpha), \quad 
                                                           k=0,1\\
\partial_y f(0,y,x, \theta) &=0\,,\quad y \in {\cal M}_k(x,\tilde \alpha), \quad 
                                                           k=0,1\\
\partial_y^2 f(0,y,x, \theta) &=0\,,\quad y \in {\cal M}_k(x,\tilde
                                \alpha), \quad  k=0,1
\end{align*}
for all $\tilde \alpha \in A$, $\theta \in \Theta$, and $x\in \xx$,
where ${\cal M}_k(y,\alpha)$ is as defined in Section \ref{onedimdiff}.
Let 
\begin{align*}
\zeta_n(\theta) &= \frac{1}{n\Delta_n^{3/2}} \sum_{i=1}^n f(\Delta_n, \xtr, \xtl, \theta)\,.
\end{align*}
Then, for any compact, convex $K\subseteq \Theta$, there exists
$C_{K}>0$, so for all $\theta, \theta' \in K$, $n \in \NN$,
\begin{align*}
\EE\left( |\zeta_n(\theta)-\zeta_n(\theta')|^4\right) &\leq C_{K} \Vert \theta-\theta' \Vert^4\,.
\end{align*}
\dqed
\label{for_uni_P_alpha4}
\end{lemma}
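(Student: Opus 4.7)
The proof adapts the technique used for Lemma~\ref{for_uni_P}, but the sharper normalisation $(n\Delta_n^{3/2})^{-4}$ in place of $(n\Delta_n)^{-4}$ demands that the per-interval integrals gain an extra factor of $\Delta_n^2$ over the bounds in the earlier lemma. The additional vanishing hypotheses on $f$, $\partial_t f$, $\partial_y f$, and $\partial_y^2 f$ at points reachable by one or two pure jumps are designed precisely to deliver this improvement.

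The first step is to apply Lemma~\ref{ito_ineq_jumps_sum} with $p = 4$ and $q = 2$ to bound $\EE|\sum_{i=1}^n D\bm{f}(\tplus, \tminus, \theta, \theta')|^4$ by a weighted sum of integrals $\sum_i \int_\tminus^\tplus \EE(D\bm{f_j}^4)\,du$ (and the jump-integral analogue for $j=3$), with $f_j = \aaa_j f$. The second step is to check, from the definitions of $\aaa_1,\aaa_2,\aaa_3$ and the hypotheses, that $f_j(0,y,x,\theta) = 0$ for all $y \in \mathcal{M}_0(x,\alphan) \cup \mathcal{M}_1(x,\alphan)$ and $j=1,2,3$: for $\aaa_2 f$ this follows from $\partial_y f = 0$ on $\mathcal{M}_0 \cup \mathcal{M}_1$; for $\aaa_3 f$ from $f = 0$ on $\mathcal{M}_0 \cup \mathcal{M}_1 \cup \mathcal{M}_2$; and for $\aaa_1 f$ one combines the vanishing of $\partial_t f$ and $\partial_y^2 f$ on $\mathcal{M}_0 \cup \mathcal{M}_1$ with $f = 0$ on $\mathcal{M}_2$ to annihilate every summand of $\partial_t f + \ll f$. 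The same vanishing then holds for $\partial_\theta f_j$ by differentiation.

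The third and central step controls $\EE(D\bm{f_j}^4)$. By the mean value theorem, $|D\bm{f_j}| \leq \|\theta-\theta'\|\cdot|\partial_\theta f_j(u-\tminus,X_u,X_\tminus,\xi)|$, so it suffices to bound $\EE(\partial_\theta f_j(u-\tminus,X_u,X_\tminus,\xi)^4)$. Since $\partial_\theta f_j(0,X_\tminus,X_\tminus,\xi) = 0$, Lemma~\ref{ito_ineq_jumps} applied with $s=\tminus$, $t=u$ decomposes the fourth moment into drift, Brownian, and jump parts, giving the polynomial contributions $(u-\tminus)^4$, $(u-\tminus)^2$, and $(u-\tminus)^2 + (u-\tminus)$, respectively. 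The dangerous $(u-\tminus)^1$ term from the $l=2$ jump contribution is rescued by a second application of Lemma~\ref{ito_ineq_jumps} to $\aaa_3 \partial_\theta f_j$ itself: at $v=\tminus$ one has $\aaa_3 \partial_\theta f_j(0,x,x,z,\xi) = \partial_\theta f_j(0,x+c(x,z,\alphan),x,\xi) - \partial_\theta f_j(0,x,x,\xi) = 0$ because both points lie in $\mathcal{M}_0 \cup \mathcal{M}_1$, so $\EE|\aaa_3 \partial_\theta f_j(v-\tminus,\cdots)|^4 \leq C(v-\tminus)$ uniformly in $z$, and hence $\int_\tminus^u \int \EE|\aaa_3 \partial_\theta f_j|^4 \nu_{\thetan}(dz)\,dv \leq C(u-\tminus)^2$. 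Combining, $\EE(\partial_\theta f_j(u-\tminus,\cdots)^4) \leq C_K (u-\tminus)^2$, so $\int_\tminus^\tplus \EE(D\bm{f_j}^4)\,du \leq C_K \Delta_n^3 \|\theta-\theta'\|^4$, and analogously for the iterated jump integrand in $f_3$.

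Substituting these improved per-interval bounds into Lemma~\ref{ito_ineq_jumps_sum}, and using $n\Delta_n \to \infty$ to absorb the three lower-order terms into the one coming from $f_1$, yields $\EE|\sum_i D\bm{f}|^4 \leq C_K n^4 \Delta_n^6 \|\theta-\theta'\|^4 = C_K (n\Delta_n^{3/2})^4 \|\theta-\theta'\|^4$, from which the claim follows after division by $(n\Delta_n^{3/2})^4$. The main obstacle is the careful bookkeeping: verifying at each level which composition of $\aaa_1, \aaa_2, \aaa_3$ applied to $f$ or to $\partial_\theta f_j$ preserves vanishing on each of the sets $\mathcal{M}_k(x,\alphan)$, and confirming that the precise list of hypotheses --- $f, \partial_t f, \partial_y f, \partial_y^2 f$ vanishing at $k=0,1$ together with $f$ vanishing additionally at $k=2$ --- is exactly enough to kill $\aaa_1 f$ on $\mathcal{M}_0 \cup \mathcal{M}_1$ and to permit the second-level Itô application on $\aaa_3 \partial_\theta f_j$ required to obtain the full factor $\Delta_n^3$.
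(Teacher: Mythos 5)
Your proposal follows essentially the same route as the paper: the top-level bound from Lemma~\ref{ito_ineq_jumps_sum} with $p=4$, one application of Lemma~\ref{ito_ineq_jumps} to each component followed by a second application to the jump component alone (to upgrade its $O(u-\tminus)$ contribution to $O((u-\tminus)^2)$), with the ${\cal M}_k$-vanishing hypotheses supplying exactly the boundary conditions $f_j(0,\cdot,x,\theta)=0$ on ${\cal M}_0\cup{\cal M}_1$ needed at each level, culminating in the bound $C_K (n\Delta_n^{3/2})^4\Vert\theta-\theta'\Vert^4$. The one repairable looseness is that you invoke the mean value theorem \emph{before} the It\^{o}/Burkholder analysis, so the intermediate point $\xi$ is random and path-dependent (use the integral form of Taylor's theorem plus Jensen, or note that the paper instead carries the differences $D\bm{f_{jk}}$ through all stochastic estimates and applies the mean value theorem and Cauchy--Schwarz only to the innermost expectations); likewise your bound on $\aaa_3\partial_\theta f_j$ is not uniform in $z$ but of polynomial growth in $z$, which suffices since $\nu_{\alphan}$ has all moments.
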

\begin{proof}[\textbf{Proof of Lemma~\ref{for_uni_P_alpha4}}]
Recall the notation of Remark
\ref{note:f_for_Ito}, and note that for $j=1,2,3$, $f$, $f_j$, and $f_{j3}$, as functions of $(t,y,x,\theta)$,
  satisfy Assumption~\ref{assumptions_on_f_new}. Write
\begin{align}
\EE\left( |\zeta_n(\theta)-\zeta_n(\theta')|^4\right) &=
                                                                     (n\Delta_n)^{-4}\Delta_n^{-2}
                                                                     \,\EE\left(\left|\sum_{i=1}^n
                                                                     D
                                                                     \bm{f}(\tplus,\tminus, \theta,
                                                                     \theta')\right|^4\right)\,.
\label{general_desired_result_alpha4}
\end{align}
By Lemma~\ref{ito_ineq_jumps_sum}, there exist constants
$C>0$  such that for all $\theta,\theta' \in \Theta$ and $n\in\NN$,
\begin{align}
\begin{split}
&\hspace{-5mm} \EE\left( \left| \sum_{i=1}^n D\bm{f}(\tplus, \tminus,\theta,\theta') \right|^4\right)\label{general_first_step_alpha4}\\
&\leq  (n\Delta_n)^{3} C\sum_{i=1}^n\int_\tminus^\tplus
  \EE\left( D\bm{f_1}(u, \tminus, \theta,\theta')^4\right) \, du \\
&\hspace{5mm} + n\Delta_n C \sum_{i=1}^n
 \int_\tminus^\tplus  \EE\left(
D\bm{f_2}(u,\tminus, \theta,\theta')^4\right)\, du \\
&\hspace{5mm} + \left( 1 + n\Delta_n\right)
 C \sum_{i=1}^n \int_\tminus^\tplus \int_\RR  \EE\left(
    D\bm{f_3}(u,\tminus,z_1, \theta,\theta')^{4}\right)
  \nu_\alphan(dz_1)\, du \,.
\end{split}
\end{align}
Furthermore,
applying Lemma
\ref{ito_ineq_jumps} twice, there exist constants $C>0$ such that
\begin{align}
\begin{split}
\label{first_part_alpha4}
&\hspace{-4mm} \EE\left( D\bm{f_j}(u ,\tminus,
  \theta,\theta')^4\right)  \\
&\leq C({u}-\tminus)^3\int_\tminus^{u}\EE\left( D\bm{f_{j1}}({v},\tminus,
  \theta,\theta')^4\right)\, d{v} \\
&\hspace{5mm} +  C({u}-\tminus) \int_\tminus^{u} \EE\left( D\bm{f_{j2}}({v},\tminus,
  \theta,\theta')^4\right)\, d{v}\\
&\hspace{5mm} + C\left( 1+ {u}-\tminus\right) \\
&\hspace{10mm} \times \left(\int_\tminus^{u} \int_\RR
  ({v}-\tminus)^3 \int_\tminus^{v} \EE\left(
  D\bm{f_{j31}}({w},\tminus,z_1, \theta,\theta')^4 \right)\,d{w}\,\nu_\alphan(dz_1)\, d{v}\right.\\
&\hspace{15mm} + \int_\tminus^{u} \int_\RR 
({v}-\tminus) \int_\tminus^{v} \EE\left(
  D\bm{f_{j32}}({w},\tminus,z_1, \theta,\theta')^4 \right)\,d{w}
\,\nu_\alphan(dz_1)\, d{v}\\
&\hspace{15mm} + \int_\tminus^{u} \int_\RR
  (1+{v}-\tminus) \\
&\hspace{20mm} \times \left.\int_\tminus^{v} \int_\RR
\EE\left( D\bm{f_{j33}}({w},\tminus,\mathbf{z}_2,
  \theta,\theta')^4\right) \nu_\alphan(dz_2)\, d{w}\,
 \,\nu_\alphan(dz_1)\, d{v}\right)
\end{split}
\end{align}
for $j=1,2$, and
\begin{align}
\begin{split}\label{second_part_alpha4}
&\hspace{-4mm} \EE\left( D\bm{f_3}({u},\tminus,z_1,
  \theta,\theta')^4\right)  \\
&\leq C({u}-\tminus)^3\int_\tminus^{u} \EE\left( D\bm{f_{31}}({v},\tminus,z_1,
  \theta,\theta')^4\right)\, d{v} \\
&\hspace{5mm} +  C({u}-\tminus) \int_\tminus^{u} \EE\left( D\bm{f_{32}}({v},\tminus,z_1,
  \theta,\theta')^4\right)\, d{v}\\
&\hspace{5mm} + C\left( 1+ {u}-\tminus\right) \\
&\hspace{10mm} \times \left( \int_\tminus^{u} \int_\RR
  ({v}-\tminus)^3 \int_\tminus^{v} \EE\left(
  D\bm{f_{331}}({w},\tminus,\mathbf{z}_2, \theta,\theta')^4 \right)\,d{w}\,\nu_\alphan(dz_2)\, d{v}\right.\\
&\hspace{15mm} + \int_\tminus^{u} \int_\RR 
({v}-\tminus) \int_\tminus^{v} \EE\left(
  D\bm{f_{332}}({w},\tminus,\mathbf{z}_2, \theta,\theta')^4 \right)\,d{w}
\,\nu_\alphan(z_2)\, d{v}\\
&\hspace{15mm}+ \int_\tminus^{u} \int_\RR
  (1+{v}-\tminus) \\
&\hspace{20mm}\times\left. \int_\tminus^{v}\int_\RR
\EE\left( D\bm{f_{333}}({w},\tminus,\mathbf{z}_3,
  \theta,\theta')^4\right)\nu_\alphan(dz_3)\, d{w}\,
 \,\nu_\alphan(dz_2)\, d{v}\right)\,.
\end{split}
\end{align}
Let a compact and convex subset $K\subseteq \Theta$ be given. Using
  the mean value theorem and the Cauchy-Schwarz inequality, it may be
  shown that there exist constants $C_K>0$ such that
for $i=1,\ldots,n$, and $j \in \{11, 12, 21, 22 \}$, $ k \in \{31, 32, 131, 132, 231,
  232\}$, and $l \in \{133, 233, 331, 332 \}$,
\begin{align}
\begin{split}\label{final_bound_for_alpha4}
  \EE\left( D\bm{f_j}({w},\tminus,
  \theta,\theta')^4\right)
&\leq C_{K}\,\Vert
  \theta-\theta'\Vert^4\\
\EE\left( D\bm{f_k}({w},\tminus,z_1,
  \theta,\theta')^4\right) 
&\leq C_{K}\, \Vert
  \theta-\theta'\Vert^4 \left( 1+ |z_1|^{C_K}\right)\\
 \EE\left( D\bm{f_l}({w},\tminus,\mathbf{z}_2,
  \theta,\theta')^4\right)
&\leq C_{K}\,\Vert
  \theta-\theta'\Vert^4 \left( 1+ |z_1|^{C_K}\right)\left( 1+ |z_2|^{C_K}\right)\\
  \EE\left( D\bm{f_{333}}({w},\tminus, \mathbf{z}_3,
  \theta,\theta')^4\right) 
&\leq C_{K}\,\Vert
  \theta-\theta'\Vert^4\left( 1+ |z_1|^{C_K}\right)\left( 1+ |z_2|^{C_K}\right)\left( 1+ |z_3|^{C_K}\right).
\end{split}
\end{align}
Inserting (\ref{final_bound_for_alpha4}) into
(\ref{first_part_alpha4}) and (\ref{second_part_alpha4}), it follows that for $j=1,2$,
\begin{align}
\begin{split}
\label{second_part_alpha4_bound}
\EE\left( D\bm{f_j}({u},\tminus,
  \theta,\theta')^4\right)  
&\leq C_K ({u}-\tminus)^2 \,\Vert
\theta-\theta'\Vert^4\\
\EE\left( D\bm{f_3}({u},\tminus, z_1,
  \theta,\theta')^4\right)  
&\leq C_K ({u}-\tminus)^2 \left( 1+ |z_1|^{C_K}\right)\,\Vert
\theta-\theta'\Vert^4\,.
\end{split}
\end{align}
Now, inserting (\ref{second_part_alpha4_bound})
into (\ref{general_first_step_alpha4}) yields the existence of
$C_{K}>0$ such that
\begin{align}
\EE\left( \left| \sum_{i=1}^n D\bm{f}(\tplus,\tminus,\theta,\theta')
  \right|^4\right)\label{general_first_step_alpha4_tool} &\leq C_K (n\Delta_n)^4\Delta_n^2 \,\Vert
  \theta-\theta'\Vert^4\,,
\end{align}
where we used that $n\Delta_n \to \infty$ as $n\to \infty$. Inserting
(\ref{general_first_step_alpha4_tool}) into (\ref{general_desired_result_alpha4}),
the desired result is obtained. 
\end{proof}
The proof of the following Lemma~\ref{for_uni_P_alpha4b} is very similar to the proof of Lemma
\ref{for_uni_P_alpha4}, but requires more applications of Lemma
\ref{ito_ineq_jumps} in order to achieve appropriate orders of
$\Delta_n$. We refer to \cite[Section 3.A.3]{phdthesis} for the
details.
\begin{lemma}
Consider the model given by (\ref{eqn:SDEoptimal2}), with $A\subseteq\RR^2$ and $B\subseteq \RR$. Suppose that Assumption
\ref{assumptions_on_X_jump} holds,
and that $f(t,y,x, \theta) \in
\cc^\infty_\text{pol}((0,\Delta_0)_{\varepsilon_0} \times \xx^2 \times
\Theta)$. Furthermore, assume that
\begin{align*}
f(0,y,x, \theta) &=0\,, \quad y \in {\cal M}_k(x,\tilde \alpha), \quad k=0,1,2,3,4\\
\partial_t f(0,y,x, \theta) &=0\,,\quad y \in {\cal M}_k(x,\tilde \alpha), \quad 
                                                           k=0,1,2,3\\
\partial_y f(0,y,x, \theta) &=0\,,\quad y \in {\cal M}_k(x,\tilde \alpha), \quad 
                                                           k=0,1,2,3\\
\partial_y^2 f(0,y,x, \theta) &=0\,,\quad y \in {\cal M}_k(x,\tilde \alpha), \quad 
                                                             k=0,1,2,3\\
\partial_y^3 f(0,y,x, \theta) &=0\,,\quad y \in {\cal M}_k(x,\tilde \alpha), \quad 
                                                             k=0,1\\
\partial_t\partial_y f(0,y,x, \theta) &=0\,,\quad y \in {\cal M}_k(x,\tilde \alpha), \quad 
                                                           k=0,1\,.
\end{align*}
for all $\tilde \alpha \in A$, $\theta \in \Theta$, and $x\in \xx$, 
where ${\cal M}_k(y,\alpha)$ is as defined in Section
\ref{onedimdiff}. Let
\begin{align*}
\zeta_n(\theta) &= \frac{1}{n\Delta_n^2} \sum_{i=1}^n f(\Delta_n, \xtr, \xtl, \theta)\,.
\end{align*}
Then, for any compact, convex set $K\subseteq \Theta$, there exists a constant $C_{K}>0$ such that 
\begin{align*}
\EE\left( |\zeta_n(\theta)-\zeta_n(\theta')|^4\right)
&\leq C_{K} \Vert \theta-\theta' \Vert^4
\end{align*}
for all $\theta, \theta' \in K$, and $n \in \NN$.\dqed
\label{for_uni_P_alpha4b}
\end{lemma}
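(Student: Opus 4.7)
The plan is to mirror the proof of Lemma~\ref{for_uni_P_alpha4} essentially step by step, the essential difference being that the stronger normalization $(n\Delta_n)^{-4}\Delta_n^{-4}$ requires us to squeeze out two additional powers of $\Delta_n$ from the inner expectation. The richer vanishing hypotheses on $f$ and its derivatives are precisely calibrated to make this possible via one more iteration of Lemma~\ref{ito_ineq_jumps}. Concretely, I would first write
\[
\EE\left( |\zeta_n(\theta)-\zeta_n(\theta')|^4\right) = (n\Delta_n)^{-4} \Delta_n^{-4}\, \EE\left(\left|\sum_{i=1}^n D\bm{f}(\tplus,\tminus,\theta,\theta')\right|^4\right),
\]
and apply Lemma~\ref{ito_ineq_jumps_sum} with $p=4$, $q=2$, reducing matters to integrated fourth moments of the differences $D\bm{f_1}$, $D\bm{f_2}$, $D\bm{f_3}$ with prefactors $(n\Delta_n)^3$, $n\Delta_n$ and $(1+n\Delta_n)$.

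Next, I would iterate Lemma~\ref{ito_ineq_jumps} three times (rather than twice as in Lemma~\ref{for_uni_P_alpha4}) on each integrand, producing fourth moments of the compositions $D\bm{f_{j_1 j_2}}$, $D\bm{f_{j_1 j_2 j_3}}$ and $D\bm{f_{j_1 j_2 j_3 j_4}}$, where each $j_\ell \in \{1,2,3\}$. The additional vanishing hypotheses, namely $f(0,y,x,\theta)=0$ on $\mathcal{M}_k$ up to $k=4$, $\partial_t f$ and $\partial_y^j f$ ($j=1,2,3$) on $\mathcal{M}_k$ up to the stated $k$, and $\partial_t \partial_y f$ on $\mathcal{M}_k$ for $k=0,1$, are chosen so that every composition $\bm{f_{j_1\cdots j_s}}$ that still appears strictly inside the iteration (i.e., before depth three) vanishes at $(y,x)$ whenever $y \in \mathcal{M}_{k_s}(x,\alphan)$ for a suitable $k_s$. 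By the mean value theorem and the Cauchy--Schwarz inequality, each such vanishing gives an extra factor of $(u-\tminus)$ in the bound, analogous to (\ref{final_bound_for_alpha4}) but with one more level of $\Delta_n$ extracted; at the deepest level, the polynomial growth bound kicks in, producing estimates of the form $\EE( D\bm{f_{j_1\cdots j_4}}(w,\tminus,\mathbf{z}_r,\theta,\theta')^4) \leq C_K (1+|\mathbf{z}_r|^{C_K})\|\theta-\theta'\|^4$.

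Combining these and integrating, the inner expectation is bounded by $C_K(n\Delta_n)^4 \Delta_n^4\, \|\theta-\theta'\|^4$ (using $n\Delta_n \to \infty$ to absorb the lower-order terms coming from the sum over $l$ in Lemma~\ref{ito_ineq_jumps_sum}). Dividing by $(n\Delta_n)^4\Delta_n^4$ yields the claim.

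The main obstacle is the bookkeeping: one must verify that at every level of the nested iteration, the composition encountered has enough vanishing on the appropriate set $\mathcal{M}_k(x,\alphan)$ to yield an additional $(u-s)$ factor. The interplay among the three operators makes this nontrivial: $\aaa_2$ raises the order of the $y$-derivative (consuming one of the $\partial_y^j$ hypotheses), $\aaa_1$ brings in $\partial_t$ plus $\partial_y, \partial_y^2$ and a shifted jump contribution (consuming $\partial_t$ and $\partial_t\partial_y$ hypotheses, and advancing one step through $\mathcal{M}_k \to \mathcal{M}_{k+1}$), while $\aaa_3$ purely advances the jump index. The vanishing list in the statement is exactly what permits all the compositions of depth at most three (four integrations total, counting the outer Lemma~\ref{ito_ineq_jumps_sum} step) to remain in the zero set on the diagonal, and the case analysis over multi-indices $(j_1,j_2,j_3)\in\{1,2,3\}^3$ is the tedious but entirely mechanical part of the argument, which we outsource to \cite[Section 3.A.3]{phdthesis}.
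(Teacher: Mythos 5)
Your overall strategy is the right one and matches what the paper does (the paper itself only sketches this proof, saying it is ``very similar to the proof of Lemma~\ref{for_uni_P_alpha4}, but requires more applications of Lemma~\ref{ito_ineq_jumps}'' and deferring the bookkeeping to the thesis, exactly as you do). However, your power counting is off by one iteration, and this is not a cosmetic slip: it is precisely the point where this lemma differs from Lemma~\ref{for_uni_P_alpha4}. With the normalisation $(n\Delta_n)^{-4}\Delta_n^{-4}$ you must bound the inner expectation by $C_K(n\Delta_n)^4\Delta_n^4\Vert\theta-\theta'\Vert^4$. The dangerous branch is the pure-jump one: in Lemma~\ref{ito_ineq_jumps_sum} the $D\bm{f_3}$ term carries the prefactor $\sum_l (n\Delta_n)^{2^{q-l}-1}$, whose worst summand is of order $1$, and each subsequent application of Lemma~\ref{ito_ineq_jumps} along the $\aaa_3$ branch gains only a single factor of $(u-\tminus)$ (the prefactor there is $1+(t-s)$, not $(t-s)^3$). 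Three iterations beyond the outer step therefore give at best $\EE\bigl(D\bm{f_3}(u,\tminus,z,\theta,\theta')^4\bigr)\leq C(u-\tminus)^3\Vert\theta-\theta'\Vert^4$, hence a contribution of order $\sum_{i}\int_\tminus^\tplus (u-\tminus)^3\,du = n\Delta_n^4=(n\Delta_n)\Delta_n^3$. The ratio of this to the target $(n\Delta_n)^4\Delta_n^4$ is $(n^3\Delta_n^4)^{-1}$, which is \emph{not} bounded under the standing assumptions $\Delta_n\to0$, $n\Delta_n\to\infty$ (take $\Delta_n=n^{-0.9}$). So the step ``Combining these and integrating, the inner expectation is bounded by $C_K(n\Delta_n)^4\Delta_n^4\Vert\theta-\theta'\Vert^4$'' fails for your depth-four scheme.

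The repair is to iterate Lemma~\ref{ito_ineq_jumps} \emph{four} times beyond the outer step along the jump-heavy branches, reaching compositions such as $\bm{f_{33333}}$ at depth five, which are the ones bounded by constants via the mean value theorem, Cauchy--Schwarz, and polynomial growth. This is exactly what the hypothesis list is calibrated for: applying Lemma~\ref{ito_ineq_jumps} to $f_{3333}$ requires $f_{3333}(0,x,x,\mathbf{z}_4,\theta)=0$, i.e.\ $f(0,y,x,\theta)=0$ for $y\in\mathcal{M}_k(x,\tilde\alpha)$ up to $k=4$ --- the one hypothesis your depth-four scheme would never use. Likewise, your assertion that only compositions ``before depth three'' need to lie in the zero set is too weak: depth-three and depth-four compositions along the $\aaa_3$-dominated branches (and mixed ones such as $f_{j2}$, which is why $\partial_y^3 f$ and $\partial_t\partial_y f$ must vanish for $k=0,1$) must also vanish on the diagonal to permit the further applications of Lemma~\ref{ito_ineq_jumps}. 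Branches beginning with $\aaa_1$ or $\aaa_2$ terminate earlier because their prefactors $(t-s)^{3}$ and $(t-s)$ supply extra powers for free, which is why the required vanishing depth differs across the derivative types in the hypothesis list. With this correction the argument goes through as you describe.
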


\subsection{Expansion of Conditional Moments}\label{exconmo}
\begin{remark}
Note that under Assumptions~\ref{assumptions_on_X_jump} and
\ref{assumptions_on_g_jump},
\begin{align*}
\begin{split}
& \hspace{-5mm} \partial_\theta \ll_\theta(g(0,\theta))(x,x) \\
&= \ll_\theta(\partial_{\theta} g(0,\theta))(x,x) + \partial_y
g(0,x,x, \theta)\partial_\theta  a(x, \theta) + \tfrac{1}{2}\partial^2_y g(0,x,x, \theta)\partial_\theta b^2(x,
\theta)\\
&\hspace{5mm} + \int_\RR \partial_y g(0,x+c(x,z,
\theta), x, \theta) \partial_\theta c(x,z, \theta)\, \nu_\theta(dz) \\
&\hspace{5mm} + \int_\RR g(0,x+c(x,z,
\theta), x, \theta)\partial_\theta q(z,\theta)\, \tilde{\nu}(dz)
\end{split}\\[0.5em]
\begin{split}
&\hspace{-5mm} \ll_\lambda(gg^\star (0,\theta))(x,x) \\
&= b^2(x, \lambda) \partial_y g(\partial_y g)^\star (0,x,x,
  \theta) + \int_\RR gg^\star  (0,x+c(x,z, \lambda),x, \theta) \, \nu_\lambda(dz)
\end{split}
\end{align*}
for all $x\in\xx$ and $\lambda, \theta \in
\Theta$, by (\ref{infinitesimalgenerator2}) and Lemma
\ref{lemma:conseq}. \cqed
\label{re:jump_gen_presentation}
\end{remark}
Furthermore, note that under
Assumption~\ref{assumptions_on_X_jump}, it holds that
\begin{align}
\EE\left( R_{\lambda}(\Delta, X^\lambda_{t+\Delta}, X^\lambda_t,
      \theta) \mid X^\lambda_{t}\right) &=R_{\lambda}(\Delta, X^\lambda_{t}, \theta)
\label{eqn:expectation_remainder_jump}
\end{align}
for $0\leq t
\leq t+\Delta$ with $\Delta \leq \Delta_0$, and $\lambda
\in \Theta$. This follows from a version of \cite[Proposition 3.1]{shimizu2006}, which may also be shown to hold in the current framework.
\begin{lemma}
Suppose that Assumptions~\ref{assumptions_on_X_jump} and
\ref{assumptions_on_g_jump} hold. Then,
\begin{align}
\begin{split}\nonumber
& \hspace{-5mm} \EE^n_{i-1}\left( g(\Delta_n,\xtr, \xtl, \theta) \right)
\\
&= \Delta_n\left( \ll( g(0, \theta) )(\xtl,
  \xtl) - \ll_{\theta}( g(0, \theta) )(\xtl,
  \xtl)\right) \\
&\hspace{5mm}+ \Delta_n^2 R(\Delta_n, \xtl,
\theta) \,,
\end{split}\\[0.5em]
\begin{split}\nonumber
& \hspace{-5mm} \EE^n_{i-1}\left( \partial_{\theta} g(\Delta_n,\xtr, \xtl, \theta)\right) \\
&=\Delta_n \left( \ll(\partial_{\theta}
g(0,\theta))(\xtl,\xtl) - \partial_{\theta} \ll_\theta(g(0,
\theta))(\xtl,\xtl) \right) \\
&\hspace{5mm} + \Delta_n^2 R(\Delta_n,\xtl, \theta)\,,
\end{split}\\[0.5em]
\begin{split}
&\hspace{-5mm} \EE^n_{i-1}\left( gg^\star (\Delta_n,\xtr,\xtl, \theta)
\right) 
\nonumber\\
&= \Delta_n \ll(gg^\star (0,\theta))(\xtl,\xtl)
+ \Delta_n^2 R(\Delta_n,\xtl, \theta)\,,
\end{split}
\end{align}
and, for $j,k,l,m=1,\ldots,d$,
\begin{align}
\EE^n_{i-1}\left( \left( \partial_{\theta} g\right)^2(\Delta_n,\xtr, \xtl, \theta)
\right) &= \Delta_n R(\Delta_n, \xtl, \theta) 
\nonumber\\
\EE^n_{i-1}\left( g_jg_kg_l(\Delta_n, \xtr, \xtl, \theta)
\right) &=
\Delta_nR(\Delta_n, \xtl, \theta)\nonumber\\
\EE^n_{i-1}\left( g_jg_kg_lg_m (\Delta_n,\xtr,\xtl,
    \theta) \right) &= \Delta_n R(\Delta_n, \xtl,
\theta)\,.\label{Eg4_jump}
\end{align}
\dqed
\label{lemma:condexpec}
\end{lemma}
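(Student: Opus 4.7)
The strategy is to combine the $t$-expansion (\ref{g2}) of $g$ around $t=0$ with the spatial expansion of conditional expectations provided by Lemma~\ref{lemma:expansion_jump}, and to use Lemma~\ref{lemma:conseq} to kill the diagonal contributions coming from $g(0,x,x,\theta)=0$. Generic remainders, both from (\ref{g2}) and from Lemma~\ref{lemma:expansion_jump}, will be absorbed into a single $R(\Delta_n,\xtl,\theta)$ via (\ref{eqn:expectation_remainder_jump}) together with Assumption~\ref{assumptions_on_X_jump}.\ref{boundedmoments_jump}, which guarantees that polynomial growth in $\xtr$ passes through conditional expectation and survives as polynomial growth in $\xtl$.

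For the first identity I write $g(\Delta_n,\xtr,\xtl,\theta)=g(0,\xtr,\xtl,\theta)+\Delta_n g^{(1)}(\xtr,\xtl,\theta)+\Delta_n^2 R(\Delta_n,\xtr,\xtl,\theta)$ via (\ref{g2}) with $k=0$, apply $\EE^n_{i-1}$ term by term, and expand each piece using Lemma~\ref{lemma:expansion_jump}. Applied with $k=1$ to the first term this produces $g(0,\xtl,\xtl,\theta)+\Delta_n\ll(g(0,\theta))(\xtl,\xtl)+\Delta_n^2 R$, and the diagonal value vanishes by Lemma~\ref{lemma:conseq}. Applied with $k=0$ to the second term it produces $\Delta_n g^{(1)}(\xtl,\xtl,\theta)+\Delta_n^2 R$, and Lemma~\ref{lemma:conseq} identifies $g^{(1)}(\xtl,\xtl,\theta)=-\ll_\theta(g(0,\theta))(\xtl,\xtl)$. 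Summing the two contributions yields the advertised expression. The second identity is obtained by the same procedure applied to $\partial_\theta g$, using (\ref{g2}) with $k=1$; differentiating the identities of Lemma~\ref{lemma:conseq} with respect to $\theta$ (permissible by the polynomial growth hypotheses) gives $\partial_\theta g(0,x,x,\theta)=0$ and $\partial_\theta g^{(1)}(x,x,\theta)=-\partial_\theta\ll_\theta(g(0,\theta))(x,x)$, which provides the required substitutions.

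For the third identity I set $F(t,y,x,\theta)=g(t,y,x,\theta)g(t,y,x,\theta)^\star$, which inherits the $\cc^\infty_\text{pol}$ smoothness from $g$. By Lemma~\ref{lemma:conseq} both $F(0,x,x,\theta)=0$ and $\partial_t F(0,x,x,\theta)=g^{(1)}(x,x,\theta)g(0,x,x,\theta)^\star+g(0,x,x,\theta)g^{(1)}(x,x,\theta)^\star=0$. Hence in the $t$-expansion of $F(\Delta_n,\xtr,\xtl,\theta)$, only the spatial expansion (with $k=1$) of $F(0,\xtr,\xtl,\theta)$ contributes to leading order, producing $\Delta_n\ll(gg^\star(0,\theta))(\xtl,\xtl)+\Delta_n^2 R$, while all other pieces fall into the $\Delta_n^2 R$ remainder. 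The three $\Delta_n R$-bounds at the end are simpler still: each integrand $(\partial_\theta g)^2$, $g_jg_kg_l$, $g_jg_kg_lg_m$ vanishes at $(0,x,x,\theta)$, because $g(0,x,x,\theta)=0$ (and therefore also $\partial_\theta g(0,x,x,\theta)=0$). Expanding each such integrand in $t$ shows its value at $(\Delta_n,\xtl,\xtl,\theta)$ is of order $\Delta_n^2$, $\Delta_n^3$, and $\Delta_n^4$ respectively, and a single application of Lemma~\ref{lemma:expansion_jump} with $k=0$ delivers the $\Delta_n R(\Delta_n,\xtl,\theta)$ bound in each case.

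The main obstacle is purely bookkeeping: verifying that every remainder produced by (\ref{g2}) and by Lemma~\ref{lemma:expansion_jump} retains the polynomial-growth structure in $\xtr$, and that after conditioning on $\xtl$ it collapses into a function of polynomial growth in $\xtl$ uniformly for $\theta$ in compact, convex sets. This is guaranteed by Assumption~\ref{assumptions_on_g_jump}, the moment bounds in Assumption~\ref{assumptions_on_X_jump}.\ref{boundedmoments_jump}, and the remainder identity (\ref{eqn:expectation_remainder_jump}); no analytic input beyond these is required.
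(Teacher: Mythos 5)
Your proposal is correct and follows essentially the same route as the paper: expand in $t$ via (\ref{g2}), apply Lemma~\ref{lemma:expansion_jump} term by term (with $k=1$ for the $t=0$ term and $k=0$ for the rest), absorb remainders via (\ref{eqn:expectation_remainder_jump}), and use Lemma~\ref{lemma:conseq} (and its $\theta$-derivative) to eliminate the diagonal contributions. The treatment of the last three identities, where only the vanishing of the integrand at $(0,x,x,\theta)$ is needed, likewise matches the paper's "similar, more simple manner."
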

\begin{proof}[\textbf{Proof of Lemma~\ref{lemma:condexpec}}]
Using (\ref{g2}),
(\ref{eqn:expectation_remainder_jump}), Remark
\ref{re:jump_gen_presentation}, and Lemmas~\ref{lemma:expansion_jump} and \ref{lemma:conseq} coordinate-wise, write
\begin{align*}
\begin{split}
& \hspace{-5mm} \EE^n_{i-1}\left( g(\Delta_n, \xtr,\xtl, \theta) \right) \\
& = \EE^n_{i-1}\left( g(0,\xtr,\xtl,\theta)\right) +
\Delta_n \EE^n_{i-1}( g^{(1)}(\xtr, \xtl,
      \theta)) \\
& \hspace{5mm} + \Delta_n^2 \EE^n_{i-1}\left(R(\Delta_n,\xtr, \xtl,
      \theta) \right) \\
& =g(0,\xtl,\xtl, \theta) + \Delta_n
\ll( g(0, \theta) )(\xtl,\xtl)
+ \Delta_n^2 R(\Delta_n, \xtl, \theta)
\\
 &\hspace{5mm} + \Delta_n\left( g^{(1)} (\xtl, \xtl,
      \theta) + \Delta_n R(\Delta_n, \xtl, \theta)
  \right) + \Delta_n^2 R(\Delta_n, \xtl,
      \theta) \\
&= \Delta_n\left( \ll( g(0, \theta) )(\xtl,
  \xtl) - \ll_{\theta}( g(0, \theta) )(\xtl,
  \xtl)\right) \\
&\hspace{5mm}+ \Delta_n^2 R(\Delta_n, \xtl,
\theta) 
\,,
\end{split}\\[0.5\baselineskip]
\begin{split}
& \hspace{-5mm} \EE^n_{i-1}\left( \partial_{\theta} g(\Delta_n,\xtr, \xtl, \theta)
  \right) \\
&= \EE^n_{i-1}\left( \partial_{\theta}
g(0,\xtr,\xtl, \theta) \right) +
\Delta_n \EE^n_{i-1} ( \partial_{\theta}g^{(1)}(\xtr,\xtl,
  \theta) ) \\
&\hspace{5mm} +
\Delta_n^2 \EE^n_{i-1}\left( R(\Delta_n, \xtr, \xtl, \theta) \right)\\
&=\partial_{\theta}
g(0,\xtl,\xtl, \theta) + \Delta_n \ll(\partial_{\theta}
g(0,\theta))(\xtl,\xtl) + \Delta_n^2 R(\Delta_n, \xtl, \theta) \\
&\hspace{5mm} + \Delta_n\left( \partial_{\theta}g^{(1)}(\xtl,\xtl,
  \theta) + \Delta_nR(\Delta_n,\xtl, \theta)\right) \\
&=\Delta_n \left( \ll(\partial_{\theta}
g(0,\theta))(\xtl,\xtl) - \partial_{\theta} \ll_\theta(g(0,
\theta))(\xtl,\xtl)\right)\\
&\hspace{5mm} + \Delta_n^2 R(\Delta_n,\xtl, \theta)\,,
\end{split}\\[0.5\baselineskip]
\begin{split}
&\hspace{-5mm} \EE^n_{i-1}\left( gg^\star (\Delta_n,\xtr,\xtl, \theta) \right)\\
&= \EE^n_{i-1}\left( gg^\star (0,\xtr,\xtl, \theta)
  \right)  \\
&\hspace{5mm} + \Delta_n \EE^n_{i-1}\left( 
  g^{(1)}g^\star (0,\xtr,\xtl, \theta) +
  g(g^{(1)})^\star (\xtr,\xtl, \theta)  \right) \\
&\hspace{5mm} + \Delta^2_n \EE^n_{i-1}\left( R(\Delta_n,\xtr,\xtl,
  \theta) \right) \\
&= gg^\star (0,\xtl,\xtl, \theta) +
\Delta_n \ll(gg^\star (0,\theta))(\xtl,\xtl) \\
&\hspace{5mm} + \Delta_n\left( 
  g^{(1)} g^\star (0,\xtl,\xtl, \theta) +
  g(g^{(1)})^\star (\xtl,\xtl, \theta)
\right) \\
&\hspace{5mm} + \Delta_n^2 R(\Delta_n,\xtl, \theta) \\
&= \Delta_n \ll(gg^\star (0,\theta))(\xtl,\xtl)
+ \Delta_n^2 R(\Delta_n,\xtl, \theta) \,.
\end{split}
\end{align*}
The three remaining equalities follow in a similar, more simple manner.
\end{proof}
Lemmas~\ref{lemma:generators_rate} and \ref{lemma:generator3_rate} provide expressions involving the infinitesimal
generator (\ref{infinitesimalgenerator2}). These lemmas may be used to prove
the subsequent lemmas on the expansion of conditional moments. The generalised Leibnitz formula for the $m$th derivative
of a product is useful for verifying these formulae. For proofs, see \cite[Section 3.A.2]{phdthesis}.
\begin{lemma}
Consider the model given by (\ref{eqn:SDEoptimal2}), with
  $A\subseteq \RR^2$ and $B\subseteq \RR$. Suppose that Assumptions~\ref{assumptions_on_X_jump} and
\ref{assumptions_on_g_jump}, and Condition~\ref{assumption_final} hold. 
Then, for $j=1,2,3$, the following holds for all $x\in \xx$
and $\theta \in \Theta$.
\begin{align*}
\ll \left(g_{j}g_3(0, \theta)\right)(x,x) &= 0
\end{align*}
and, furthermore,
\begin{align*}
\begin{split}
&\hspace{-5mm} \ll^2\left( g_{j}g_{3}(0,\theta)\right)(x,x) \\
&= \tfrac{3}{2}b^2(x, \betan) \left( 2a(x, \alphan) + \partial_y b^2(x,
  \betan)\right)
  \partial_y g_{j} \partial_y^2 g_{3}(0,x,x, \theta)\\
&\hspace{5mm} +\tfrac{1}{2} b^4(x, \betan)\left(2\partial_y g_{j} \partial_y^3 g_{3}  + 3\partial_y^2
  g_{j} \partial_y^2 g_{3} 
\right)(0,x,x, \theta)  \\
&\hspace{5mm} + \int_\RR \tfrac{1}{2} \left( b^2(
x+c(x,z,\alphan),
\betan)+ b^2(x, \betan) \left( 1
  + \partial_y c(x,z, \alphan)\right)^2 \right) \\
&\hspace{15mm} \times g_{j}\partial^2_y g_{3}(0,x+c(x,z,\alphan),x,
\theta)\, \nu_\alphan(dz) \,,
\end{split}\\[0.5em]
\begin{split}
& \hspace{-5mm} g_{j}^{(1)}(x,x, \theta)\\
&= -a(x, \alpha)\partial_y g_{j}(0,x,x, \theta) - \tfrac{1}{2}b^2(x,
  \beta)\partial^2_y g_{j}(0,x,x, \theta) \\
&\hspace{5mm} - \int_\RR g_{j}(0,x+c(x,z,\alpha),x, \theta)\,
  \nu_\alpha(dz)\,,
\end{split}\\[0.5em]
  \begin{split}
&\hspace{-5mm} \ll (
  g_{j}(0,\theta)g_{3}^{(1)}(\theta) )(x,x) \\
&=-\tfrac{1}{2} a(x, \alphan) b^2(x,
  \beta)\partial_y g_{j}\partial^2_y
g_{3}(0,x,x, \theta)\\
&\hspace{5mm}-\tfrac{1}{4}b^2(x,
  \beta) b^2(x,
\betan) \partial^2_y g_{j} \partial^2_y g_{3}(0,x,x, \theta)\\
&\hspace{5mm} +b^2(x,
\betan) \partial_y g_{j}(0,x,x, \theta)\partial_y
  g_{3}^{(1)}(x,x, \theta)\\
&\hspace{5mm}+ \int_\RR g_{j}(0,x+c(x,z,\alphan),x, \theta)
  g_{3}^{(1)}(x+c(x,z,\alphan), x, \theta) \,\nu_\alphan(dz)\,,
\end{split}\\[0.5em]
  \begin{split}
&\hspace{-5mm} \ll ( g_{j}^{(1)}(\theta)g_{3}(0,\theta) )(x,x) \\
&= -\tfrac{1}{2} a(x, \alpha)b^2(x,
\betan) \partial_y g_{j}\partial^2_y g_{3}(0,x,x, \theta) \\
&\hspace{5mm}-\tfrac{1}{4} b^2(x, \beta) b^2(x,
\betan) \partial^2_yg_{j}\partial^2_y g_{3}(0,x,x, \theta) \\
&\hspace{5mm} -\tfrac{1}{2} b^2(x,
\betan) \left(\int_\RR g_{j}(0,x+c(x,z,\alpha),x, \theta)\, \nu_\alpha(dz)\right) \partial^2_y g_{3}(0,x,x,
\theta)\,.
\end{split}
  \end{align*}
\dqed
\label{lemma:generators_rate}
\end{lemma}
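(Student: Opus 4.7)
The plan is to prove each identity by direct computation, using the product (carré du champ) formula for the infinitesimal generator:
\begin{align*}
\ll(fg)(y,x) &= f(y)\ll g(y,x) + g(y)\ll f(y,x) + b^2(y,\betan)\partial_y f(y,x)\partial_y g(y,x) \\
&\hspace{5mm} + \int_\RR \bigl(f(y+c(y,z,\alphan),x) - f(y,x)\bigr)\bigl(g(y+c(y,z,\alphan),x) - g(y,x)\bigr)\,\nu_\alphan(dz),
\end{align*}
which follows from the definition \eqref{infinitesimalgenerator2} and the Leibniz rule, combined systematically with Lemma~\ref{lemma:conseq} (giving $g_j(0,x,x,\theta)=0$ for all $j$) and the vanishing properties in Condition~\ref{assumption_final}.

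For the first identity, I apply the formula above with $f=g_j(0,\theta)$, $g=g_3(0,\theta)$ and evaluate at $y=x$. The two pointwise terms vanish because $g_j(0,x,x,\theta)=g_3(0,x,x,\theta)=0$; the gradient term vanishes because $\partial_y g_3(0,x,x,\theta)=0$ (Condition~\ref{assumption_final}, $k=0$); and the jump integrand vanishes because $x+c(x,z,\alphan)\in{\cal M}_1(x,\alphan)$, so $g_3(0,x+c(x,z,\alphan),x,\theta)=0$ by the first line of Condition~\ref{assumption_final}. Thus $\ll(g_jg_3(0,\theta))(x,x)=0$.

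For the second identity, I set $F(y,x)=\ll(g_jg_3(0,\theta))(y,x)$, expressed as a function of $y$ via the product formula, and compute
\[
\ll^2(g_jg_3)(x,x) = a(x,\alphan)\partial_y F(x,x) + \tfrac12 b^2(x,\betan)\partial_y^2 F(x,x) + \int_\RR\bigl(F(x+c(x,z,\alphan),x) - F(x,x)\bigr)\nu_\alphan(dz).
\]
Each derivative is expanded by the Leibniz rule applied to the four summands of $F$. The bookkeeping is controlled by a single principle: a term survives only if every factor involving $g_3$ and its derivatives is of a type and evaluated at a point not killed by Condition~\ref{assumption_final}. Specifically, at $y=x\in{\cal M}_0$ the factors $g_3(0,x,x,\theta)$ and $\partial_y g_3(0,x,x,\theta)$ vanish, forcing the surviving terms to retain $\partial_y^2 g_3$ or $\partial_y^3 g_3$ at $y=x$; and at $y=x+c(x,z,\alphan)\in{\cal M}_1$, the factors $g_3(0,\cdot,x,\theta)$ and $\partial_y g_3(0,\cdot,x,\theta)$ also vanish, so the jump term collapses to the one displayed. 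Careful accounting gives the three surviving contributions in the stated order: the drift-like terms with $a(x,\alphan)$ and $\partial_y b^2(x,\betan)$ arising from cross-derivatives of $b^2(y,\betan)\partial_y g_j\partial_y g_3$, the pure second-order diffusion term $\tfrac12 b^4 (2\partial_y g_j\partial_y^3 g_3+3\partial_y^2 g_j\partial_y^2 g_3)$, and the jump integral.

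For the third identity, $g_j^{(1)}(x,x,\theta) = -\ll_\theta(g(0,\theta))_j(x,x)$ is exactly the $j$th coordinate of Lemma~\ref{lemma:conseq}; expanding $\ll_\theta$ by \eqref{infinitesimalgenerator2} yields the displayed expression. For the fourth and fifth identities, I again apply the product formula, now with $f,g$ being $(g_j(0,\theta),g_3^{(1)}(\theta))$ and $(g_j^{(1)}(\theta),g_3(0,\theta))$ respectively. The terms $g_j^{(1)}\ll g_3(0,\theta)$ and $g_3^{(1)}\ll g_j(0,\theta)$ at $y=x$ collapse because their second factor equals (up to sign) $-g^{(1)}(x,x,\theta)$ and this is multiplied by another component that vanishes on ${\cal M}_0\cup{\cal M}_1$ by the same mechanism. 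Using the explicit formula for $g_3^{(1)}$ from the third identity to rewrite the jump integrands then gives precisely the stated expressions. The main obstacle throughout is combinatorial: tracking the derivative-order and evaluation-point of each factor of $g_3$ in the expansion of $\ll^2$ so as to apply the correct line of Condition~\ref{assumption_final}; once this is done, the explicit forms follow without further analytic input.
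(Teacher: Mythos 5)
Your proposal is correct and follows essentially the same route as the paper, which itself only sketches the argument (deferring details to the thesis) as a direct verification via the definition of the generator, the generalised Leibniz formula, Lemma~\ref{lemma:conseq}, and the vanishing properties in Condition~\ref{assumption_final}; your product-rule (carr\'e du champ) organisation of $\ll(fg)$ is just a systematic packaging of that computation, and the bookkeeping you describe does produce exactly the stated surviving terms (including the cancellation of the $\int g_j(0,x+c,x,\theta)\,\nu_{\alphan}(dz)$ contributions in the fourth identity). One small caution: since $\ll=\ll_{\thetan}$ while $g^{(1)}(x,x,\theta)=-\ll_\theta(g(0,\theta))(x,x)$, the factor $\ll(g_j(0,\theta))(x,x)$ appearing in the cross terms is \emph{not} $-g_j^{(1)}(x,x,\theta)$ for $\theta\neq\thetan$ and must be expanded directly from \eqref{infinitesimalgenerator2}, which is what the asymmetric placement of $\alphan,\betan$ versus $\alpha,\beta$ in the fourth and fifth identities reflects.
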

\begin{lemma} 
Consider the model given by (\ref{eqn:SDEoptimal2}), with
  $A\subseteq \RR^2$ and $B\subseteq \RR$. Suppose that Assumptions~\ref{assumptions_on_X_jump} and
\ref{assumptions_on_g_jump}, and
Condition~\ref{assumption_final} hold. Then, for $j,k,l
= 1,2,3$ and $m=1,2$, the following nine expressions are equal to $0$ for all $x\in
\xx$ and $\theta \in \Theta$: 
\begin{displaymath}
\begin{array}{lll}
\ll ( g_{j}g_{k}g_{l}g_{3}(0,
  \theta))(x,x) & \hspace{7mm} \ll( g_{j}g_{k}g_{3}(0,
  \theta))(x,x) \\\ll(
  g_{j}^{(1)}(\theta) g_{k}g_{3}^2(0,\theta))(x,x) & \hspace{7mm} \ll ( g_{j}g_{k}g_{3}(0,\theta)g_{3}^{(1)}(\theta))(x,x)  \\ 
\ll^2(g_{j}g_3^3(0,\theta))(x,x)  & \hspace{7mm}
\ll( \partial_\alpha g_\beta(0, \theta))(x,x) \\
\ll( \partial_{\alpha_m} g_\beta(0, \theta)\partial_{\alpha_m}
  g_\beta^{(1)}(\theta))(x,x)  &  \hspace{7mm} 
\partial_\alpha \ll_\theta( g_\beta(0,\theta))(x,x)  \\ 
\ll^m( (\partial_\alpha g_\beta)^2(0, \theta))(x,x) \,.  & 
\end{array}
\end{displaymath}
\dqed
\label{lemma:generator3_rate}
\end{lemma}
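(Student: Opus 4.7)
I would first exploit that Condition~\ref{assumption_final} asserts the vanishing of $g_\beta$ and certain of its $y$-derivatives on $\mathcal{M}_k(x,\tilde\alpha)$ \emph{for all} $\theta \in \Theta$. Differentiating those identities coordinate-wise in $\alpha$ yields the additional relations $\partial_\alpha g_\beta(0,y,x,\theta) = 0$ for $y \in \mathcal{M}_k(x,\tilde\alpha)$, $k = 1,2,3,4$, and $\partial_y \partial_\alpha g_\beta(0,y,x,\theta) = 0$ for $y \in \mathcal{M}_k(x,\tilde\alpha)$, $k = 0,1,2,3$. Combined with Lemma~\ref{lemma:conseq} (which gives $g(0,x,x,\theta) = 0$, hence also $\partial_\alpha g_\beta(0,x,x,\theta) = 0$ identically in $\theta$), this extends the zero sets to include $\mathcal{M}_0(x,\tilde\alpha) = \{x\}$ for $g_\beta$ and $\partial_\alpha g_\beta$ where appropriate.

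For the four products (1)--(4), each is $\ll = \ll_{\theta_0}$ acting on a product containing $g_3 = g_\beta$ as a factor, with three or four $g_\bullet$-factors in total. Using \eqref{infinitesimalgenerator2}, I would write each as the sum of an $a\partial_y$-term, a $\tfrac{1}{2}b^2\partial_y^2$-term, and a jump integral, all evaluated at $y = x$. By the Leibniz rule, each summand in $\partial_y^j$ ($j = 1, 2$) of the product retains at least one undifferentiated factor drawn from $\{g_j, g_k, g_l, g_3\}$, which vanishes at $(x,x)$ by Lemma~\ref{lemma:conseq}. The jump integral vanishes because the post-jump evaluation at $y = x + c(x,z,\alpha_0) \in \mathcal{M}_1(x,\alpha_0)$ yields $g_3 = 0$, while the pre-jump evaluation at $y = x$ contains a zero $g_\bullet$-factor. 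Expressions (6), (7), and (8) are reduced analogously using the differentiated Condition listed above. For (8) I would additionally use that the jump integrand $g_\beta(0,x+c(x,z,\alpha),x,\theta) - g_\beta(0,x,x,\theta)$ in $\ll_\theta(g_\beta(0,\theta))(x,x)$ is identically zero in $(\alpha, z)$ (since $x + c(x,z,\alpha) \in \mathcal{M}_1(x,\alpha)$), so its $\alpha$-derivative vanishes, and the surviving $\tfrac{1}{2}b^2 \partial_y^2 g_\beta$ term has $\alpha$-derivative zero by $\partial_y^2 \partial_\alpha g_\beta(0,x,x,\theta) = 0$.

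The main effort is in the two $\ll^2$-expressions (5) and (9), which I would handle by an order-of-vanishing argument. For (5), since $g_3$ and $\partial_y g_3$ both vanish at every $y \in \mathcal{M}_k(x,\alpha_0)$ for $k \le 3$, Leibniz shows that $y \mapsto g_j g_3^3(0,y,x,\theta)$ vanishes together with its first five $y$-derivatives at any such $y$. I would first establish $F(y,x) \coloneqq \ll(g_j g_3^3(0,\theta))(y,x) = 0$ for every $y \in \{x\} \cup \mathcal{M}_1(x,\alpha_0)$: the $a\partial_y + \tfrac{1}{2}b^2\partial_y^2$ part vanishes by the above, while the jump integral in $F$ evaluates $g_j g_3^3$ at pre-jump points in $\mathcal{M}_k$ and post-jump points in $\mathcal{M}_{k+1}$ with $k, k+1 \le 2$, all in the zero set of $g_3$. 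A second application of $\ll$ at $y = x$ then has jump integral $\int [F(x + c(x,z,\alpha_0),x) - F(x,x)] \, \nu_{\alpha_0}(dz) = 0$, and $\partial_y F(x,x) = \partial_y^2 F(x,x) = 0$ because $F(y,x) = O((y-x)^5)$ as $y \to x$: the non-integral part is $O((y-x)^5)$ from $g_j g_3^3 = O((y-x)^7)$, and the jump-integral part is $O((y-x)^6)$ because $g_j g_3^3$ vanishes to order $6$ at every point of $\mathcal{M}_1$, so that $h(y + c(y,z,\alpha_0))$ as a function of $y$ vanishes to order $6$ at $y = x$ by the chain rule. Expression (9) follows by the same argument once one notes that $\partial_\alpha g_\beta$ vanishes to order $3$ at $y = x$ (by the vanishing of $g_\beta$, $\partial_y g_\beta$, and $\partial_y^2 \partial_\alpha g_\beta$ at $y = x$) and also to order $3$ at every point of $\mathcal{M}_1$, so $(\partial_\alpha g_\beta)^2 = O((y-x)^6)$ at $y = x$ and vanishes to order $6$ on $\mathcal{M}_1$.

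The main obstacle is the combinatorial bookkeeping in (5) and (9): the many Leibniz-generated terms must each be matched against a specific vanishing in Condition~\ref{assumption_final}. That the argument closes is not accidental; the Condition's depths ($k = 4$ for $g_\beta$ and $k = 3$ for $\partial_y g_\beta$) are exactly those needed so that the doubly nested jumps reaching $\mathcal{M}_2$ and the fourth-order Taylor expansion at $y = x$ produced by $\ll^2$ both collapse to zero.
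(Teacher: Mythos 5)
The paper does not print a proof of Lemma~\ref{lemma:generator3_rate}: it defers to Section~3.A.2 of the first author's PhD thesis and only remarks that the generalised Leibniz formula is the key tool, so there is nothing in the paper to compare against line by line. Your argument is correct and is exactly the verification the paper intends: differentiating the identities of Condition~\ref{assumption_final} in $\theta$ (legitimate because they hold for all $\theta$ on sets $\mathcal{M}_k(x,\tilde\alpha)$ not depending on $\theta$), killing each Leibniz term of the single-generator expressions by an undifferentiated vanishing factor, and handling the two $\ll^2$ expressions by combining the order of vanishing at $y=x$ and on $\mathcal{M}_1$ with the fact that the nested jump integrals only reach $\mathcal{M}_2$, where $g_\beta$ still vanishes. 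The only inaccuracy is your closing remark that the depths $k=4$ for $g_\beta$ and $k=3$ for $\partial_y g_\beta$ are exactly what this lemma needs -- your own computation uses only depth $2$; the deeper levels are consumed elsewhere (e.g.\ in Lemma~\ref{for_uni_P_alpha4b} and Lemma~\ref{lemma:expectations2_rate}).
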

\begin{lemma}
Consider the model given by (\ref{eqn:SDEoptimal2}), with
  $A\subseteq \RR^2$ and $B\subseteq \RR$. Suppose that Assumptions~\ref{assumptions_on_X_jump} and
\ref{assumptions_on_g_jump}, and
Condition~\ref{assumption_final} hold. 
Then, for $j =
1,2,3$, the following holds for all $n\in \NN$, $i=1,\ldots,n$, and $\theta \in \Theta$.
\begin{align*}
\begin{split}
&\hspace{-5mm} \EE^n_{i-1}\left( g_{j}g_{3}(\Delta_n,\xtr,\xtl,
  \theta)\right)\\
&=\Delta_n^2\left( \tfrac{1}{2} \ll^2\left( g_{j}g_{3}\left(0,
      \theta\right)\right)(\xtl,\xtl) +
  g_{j}^{(1)}g_{3}^{(1)}(\xtl,\xtl, \theta)\right. \\
&\hspace{12mm} + \left.  \ll( g_{j}(0,
\theta)g_{3}^{(1)}(\theta))(\xtl,\xtl) + \ll( g_{j}^{(1)}(
\theta)g_{3}(0, \theta))(\xtl,\xtl) \right)\\
&\hspace{5mm} + \Delta_n^3 R(\Delta_n,\xtl, \theta)\,.
\end{split}
\end{align*}
\dqed
\label{lemma:expectations_rate}
\end{lemma}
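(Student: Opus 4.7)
The plan is to Taylor-expand the product $g_j g_3$ in the $t$ variable using \eqref{g2} applied to both factors, then apply Lemma~\ref{lemma:expansion_jump} term-by-term to compute the conditional expectation, and finally use Lemma~\ref{lemma:conseq} together with the vanishing identity from Lemma~\ref{lemma:generators_rate} to kill all contributions of order below $\Delta_n^2$.

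First I would multiply the two expansions from \eqref{g2} for $g_j$ and $g_3$ to obtain
\begin{align*}
g_j g_3(t,y,x,\theta) = h_0(y,x,\theta) + t\, h_1(y,x,\theta) + t^2 h_2(y,x,\theta) + t^3 R(t,y,x,\theta),
\end{align*}
with $h_0 = g_j(0,\theta) g_3(0,\theta)$, $h_1 = g_j^{(1)}(\theta) g_3(0,\theta) + g_j(0,\theta) g_3^{(1)}(\theta)$, and $h_2 = \tfrac{1}{2}\bigl( g_j^{(2)}(\theta) g_3(0,\theta) + 2 g_j^{(1)}(\theta) g_3^{(1)}(\theta) + g_j(0,\theta) g_3^{(2)}(\theta)\bigr)$. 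Each $h_k$ lies in $\cc^\infty_\text{pol}(\xx^2 \times \Theta)$ by Assumption~\ref{assumptions_on_g_jump}.\ref{g_poly}, so Lemma~\ref{lemma:expansion_jump} applies to each.

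Next, applying Lemma~\ref{lemma:expansion_jump} to $h_0$ with $k=2$, to $h_1$ with $k=1$, and to $h_2$ with $k=0$, all remainders together become of order $\Delta_n^3 R(\Delta_n, \xtl, \theta)$ (using \eqref{eqn:expectation_remainder_jump} to consolidate them into the stated form). Linearity of conditional expectation then gives
\begin{align*}
\EE^n_{i-1}(g_j g_3(\Delta_n,\xtr,\xtl,\theta))
&= h_0(\xtl,\xtl,\theta) + \Delta_n\, \ll(h_0)(\xtl,\xtl) + \tfrac{\Delta_n^2}{2} \ll^2(h_0)(\xtl,\xtl) \\
&\quad + \Delta_n h_1(\xtl,\xtl,\theta) + \Delta_n^2\, \ll(h_1)(\xtl,\xtl) \\
&\quad + \Delta_n^2 h_2(\xtl,\xtl,\theta) + \Delta_n^3 R(\Delta_n,\xtl,\theta).
\end{align*}
By Lemma~\ref{lemma:conseq}, $g_j(0,x,x,\theta) = g_3(0,x,x,\theta) = 0$, which forces $h_0(x,x,\theta) = h_1(x,x,\theta) = 0$ and collapses $h_2(x,x,\theta)$ to $g_j^{(1)} g_3^{(1)}(x,x,\theta)$. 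Lemma~\ref{lemma:generators_rate} moreover gives $\ll(h_0)(x,x) = \ll(g_j g_3(0,\theta))(x,x) = 0$. Substituting these identities into the display above yields exactly the formula claimed in the lemma.

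The main obstacle is purely one of bookkeeping: one must carefully choose the expansion order $k$ for each of the three pieces $h_0, h_1, h_2$ so that the combined remainder is exactly of order $\Delta_n^3$, and then verify that the four coefficient terms that do not automatically vanish on the diagonal $y=x$ coincide with the four terms in the statement. Once the two diagonal-vanishing facts (via Lemma~\ref{lemma:conseq} and Lemma~\ref{lemma:generators_rate}) are in hand, the remaining algebra is a direct collection of $\Delta_n^2$ coefficients.
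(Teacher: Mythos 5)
Your proposal is correct and follows essentially the same route as the paper: expand the product $g_jg_3$ in $t$ via \eqref{g2}, apply Lemma~\ref{lemma:expansion_jump} with $k=2,1,0$ to the $t^0$, $t^1$, $t^2$ coefficients respectively, and then use $g(0,x,x,\theta)=0$ from Lemma~\ref{lemma:conseq} together with $\ll(g_jg_3(0,\theta))(x,x)=0$ from Lemma~\ref{lemma:generators_rate} to eliminate all terms of order below $\Delta_n^2$. The only difference is presentational (you group the coefficients into $h_0,h_1,h_2$ before taking conditional expectations, whereas the paper expands term by term), and your bookkeeping of the surviving $\Delta_n^2$ terms matches the statement exactly.
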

\begin{proof}[\textbf{Proof of Lemma~\ref{lemma:expectations_rate}}]
Lemmas~\ref{lemma:expansion_jump}, \ref{lemma:conseq}, and
\ref{lemma:generators_rate} are used to obtain
\begin{align*}
\begin{split}
&\hspace{-5mm} \EE^n_{i-1}\left( g_{j}g_{3}(\Delta_n,\xtr,\xtl,
  \theta)\right)\\
&= \EE^n_{i-1}\left( g_{j}g_{3}(0,\xtr,\xtl, \theta) \right) + \Delta_n  \EE^n_{i-1}( g_{j}(0,\xtr,\xtl,
\theta)g_{3}^{(1)}(\xtr,\xtl, \theta) ) \\
&\hspace{5mm} + \Delta_n\EE^n_{i-1}( g_{j}^{(1)}(\xtr,\xtl,
\theta)g_{3}(0,\xtr,\xtl, \theta)) \\
&\hspace{5mm} + \tfrac{1}{2}\Delta_n^2 \EE^n_{i-1}(
  g_{j}(0,\xtr,\xtl, \theta)g_{3}^{(2)}(\xtr,\xtl, \theta)) \\
&\hspace{5mm} + \Delta_n^2 \EE^n_{i-1}(
  g_{j}^{(1)}g_{3}^{(1)}(\xtr,\xtl, \theta) ) \\
&\hspace{5mm} + \tfrac{1}{2}\Delta_n^2 \EE^n_{i-1}(
  g_{j}^{(2)}(\xtr,\xtl, \theta)g_{3}(0,\xtr,\xtl, \theta) ) \\
&\hspace{5mm} + \Delta_n^3 \EE^n_{i-1}\left( R(\Delta_n,\xtr,\xtl, \theta) \right) \end{split}\\
\begin{split}
&= g_{j}g_{3}(0,\xtl,\xtl,
  \theta) + \Delta_n \ll( g_{j}g_{3}\left(0,
      \theta\right))(\xtl,\xtl)\\
&\hspace{5mm} + \tfrac{1}{2}\Delta_n^2 \ll^2\left( g_{j}g_{3}\left(0,
      \theta\right)\right)(\xtl,\xtl) \\
&\hspace{5mm} +  \Delta_ng_{j}(0,\xtl,\xtl,
\theta)g_{3}^{(1)}(\xtl,\xtl, \theta) \\
&\hspace{5mm} + \Delta_n^2 \ll( g_{j}(0,
\theta)g_{3}^{(1)}(\theta))(\xtl,\xtl) \\
&\hspace{5mm} +\Delta_n g_{j}^{(1)}(\xtl,\xtl,
\theta)g_{3}(0,\xtl,\xtl, \theta) \\
&\hspace{5mm} + \Delta_n^2 \ll( g_{j}^{(1)}(
\theta)g_{3}(0, \theta))(\xtl,\xtl) \\
&\hspace{5mm} +\tfrac{1}{2}\Delta_n^2 
  g_{j}(0,\xtl,\xtl, \theta)g_{3}^{(2)}(\xtl,\xtl, \theta) + \Delta_n^2 
  g_{j}^{(1)}g_{3}^{(1)}(\xtl,\xtl, \theta) \\
&\hspace{5mm} + \tfrac{1}{2}\Delta_n^2 
 g_{j}^{(2)}(\xtl,\xtl, \theta)g_{3}(0,\xtl,\xtl, \theta) + \Delta_n^3 R(\Delta_n,\xtl, \theta)
\end{split}\\
\begin{split}
&= \tfrac{1}{2}\Delta_n^2 \ll^2\left( g_{j}g_{3}\left(0,
      \theta\right)\right)(\xtl,\xtl) + \Delta_n^2 \ll( g_{j}(0,
\theta)g_{3}^{(1)}(\theta))(\xtl,\xtl) \\
&\hspace{5mm} + \Delta_n^2 \ll( g_{j}^{(1)}(
\theta)g_{3}(0, \theta))(\xtl,\xtl)+ \Delta_n^2 
  g_{j}^{(1)}g_{3}^{(1)}(\xtl,\xtl, \theta) \\
&\hspace{5mm} + \Delta_n^3 R(\Delta_n,\xtl, \theta)\,.
\end{split}
\end{align*}
\end{proof}
The proof of Lemma~\ref{lemma:expectations2_rate} utilises Lemma
\ref{lemma:generator3_rate}, and is otherwise
similar to the proof of Lemma~\ref{lemma:expectations_rate}.
For details, see \cite[Section 3.A.4]{phdthesis}.
\begin{lemma}
Consider the model given by (\ref{eqn:SDEoptimal2}), with
  $A\subseteq \RR^2$ and $B\subseteq \RR$. Suppose that Assumptions~\ref{assumptions_on_X_jump} and
\ref{assumptions_on_g_jump}, and Condition
\ref{assumption_final} hold. Then, for $j,k,l =
1,2,3$, the following holds for all $n\in \NN$, $i=1,\ldots,n$, and $\theta \in \Theta$.
\begin{align*}
\EE^n_{i-1}\left( g_{j}g_{k}g_{3}(\Delta_n,\xtr,\xtl,
  \theta)\right)
&= \Delta_n^2 R(\Delta_n,\xtl,
  \theta)\\
\EE^n_{i-1}\left( g_{j}g_{k}g_{l}g_{3}(\Delta_n,\xtr,\xtl,
  \theta)\right)
&= \Delta_n^2 R(\Delta_n,\xtl,
  \theta)\\
\EE^n_{i-1}\left( g_{j}g_{3}^3(\Delta_n,\xtr,\xtl,
  \theta)\right)
&= \Delta_n^3 R(\Delta_n,\xtl,
  \theta)\\
\EE^n_{i-1}\left( \partial_\alpha g_\beta(\Delta_n,\xtr,\xtl,
  \theta)\right) &= \Delta_n^2 R(\Delta_n,\xtl, \theta)\\
\EE^n_{i-1}\left( \left(\partial_\alpha g_\beta(\Delta_n,\xtr,\xtl,
  \theta)\right)^2 \right)
&= \Delta_n^3 R(\Delta_n,\xtl, \theta)
\end{align*}
\dqed
\label{lemma:expectations2_rate}
\end{lemma}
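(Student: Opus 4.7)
\textbf{Proof proposal for Lemma~\ref{lemma:expectations2_rate}.} The strategy closely parallels the proof of Lemma~\ref{lemma:expectations_rate}: expand each factor of $g$ (or $\partial_\alpha g_\beta$) in the $t$-variable around $0$ using Assumption~\ref{assumptions_on_g_jump}, multiply out the products, and then take conditional expectations via Lemma~\ref{lemma:expansion_jump}, choosing the order $k$ in that lemma large enough that the remainder is of the desired order in $\Delta_n$. Lemma~\ref{lemma:conseq} (giving $g(0,x,x,\theta)=0$) kills the zeroth-order contribution in every product, and Lemma~\ref{lemma:generator3_rate} is then used to eliminate each potentially surviving leading term.

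Concretely, for the triple product $g_j g_k g_3$, I would expand $g_j(\Delta_n,\cdot)=g_j(0,\cdot)+\Delta_n g_j^{(1)}+\tfrac12\Delta_n^2 g_j^{(2)}+\Delta_n^3 R$ and similarly for $g_k,g_3$; when the product is multiplied out, the $\Delta_n^0$ term $g_j g_k g_3(0,\cdot)$ takes $\EE^n_{i-1}$ to $\Delta_n\ll(g_j g_k g_3(0,\theta))(\xtl,\xtl)+\Delta_n^2 R$ by Lemma~\ref{lemma:expansion_jump} with $k=1$, and this leading generator vanishes by Lemma~\ref{lemma:generator3_rate}. All further terms in the expansion already carry a factor $\Delta_n$ (at least) and lose an extra $\Delta_n$ only through Lemma~\ref{lemma:expansion_jump} with $k=0$, so they contribute $\Delta_n^2 R$. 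The quadruple product $g_j g_k g_l g_3$ is treated identically, using the vanishing of $\ll(g_j g_k g_l g_3(0,\theta))(x,x)$ from Lemma~\ref{lemma:generator3_rate}. For $g_j g_3^3$, one has to expand one order further: the relevant leading generator terms to be killed are $\ll(g_j g_3^3(0,\theta))$, $\ll(g_j^{(1)}(\theta) g_k g_3^2(0,\theta))$, $\ll(g_j g_k g_3(0,\theta) g_3^{(1)}(\theta))$, and $\ll^2(g_j g_3^3(0,\theta))$, all of which are given to be zero by Lemma~\ref{lemma:generator3_rate}; this brings the remainder down to $\Delta_n^3 R$.

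For the two $\partial_\alpha g_\beta$ identities the same template applies: expand $\partial_\alpha g_\beta(\Delta_n,\cdot)=\partial_\alpha g_\beta(0,\cdot)+\Delta_n \partial_\alpha g_\beta^{(1)}+\Delta_n^2 R$, take $\EE^n_{i-1}$ by Lemma~\ref{lemma:expansion_jump}, and use Lemma~\ref{lemma:conseq} to rewrite $\partial_\alpha g_\beta^{(1)}(x,x,\theta)=-\partial_\alpha \ll_\theta(g_\beta(0,\theta))(x,x)$. The surviving first-order terms are exactly $\ll(\partial_\alpha g_\beta(0,\theta))-\partial_\alpha\ll_\theta(g_\beta(0,\theta))$ on the diagonal, and for the square we additionally encounter $\ll(\partial_\alpha g_\beta(0,\theta)\,\partial_\alpha g_\beta^{(1)}(\theta))$ and $\ll^m((\partial_\alpha g_\beta)^2(0,\theta))$ for $m=1,2$; Lemma~\ref{lemma:generator3_rate} provides precisely the vanishing of each of these objects, leaving $\Delta_n^2 R$ in the linear case and $\Delta_n^3 R$ in the squared case (here one must apply Lemma~\ref{lemma:expansion_jump} with $k=2$ to reach the $\Delta_n^3$ remainder).

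The only genuine difficulty is bookkeeping: for each quantity one must (a) identify the correct number of terms in the Taylor expansion of $g$ in $t$ to retain, (b) choose the expansion order $k$ in Lemma~\ref{lemma:expansion_jump} large enough to drive the remainder beyond the target power of $\Delta_n$, and (c) match every potentially leading generator expression appearing after multiplying out with one of the zero quantities enumerated in Lemma~\ref{lemma:generator3_rate}. The polynomial-growth control on the remainders $R(\Delta_n,\xtl,\theta)$ is automatic from Assumptions~\ref{assumptions_on_X_jump} and \ref{assumptions_on_g_jump} together with the observation (\ref{eqn:expectation_remainder_jump}). Once the bookkeeping is carried out, the five identities follow routinely, exactly as in the detailed computation exhibited for $\EE^n_{i-1}(g_j g_3)$ in the proof of Lemma~\ref{lemma:expectations_rate}.
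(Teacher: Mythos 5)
Your proposal is correct and follows essentially the same route as the paper, which itself only sketches this proof: expand in $t$ via Assumption~\ref{assumptions_on_g_jump}, take conditional expectations with Lemma~\ref{lemma:expansion_jump} to the appropriate order, kill diagonal terms with Lemma~\ref{lemma:conseq}, and cancel every surviving leading generator term against the zero quantities listed in Lemma~\ref{lemma:generator3_rate}, exactly as in the proof of Lemma~\ref{lemma:expectations_rate}. The bookkeeping you describe (including which entries of Lemma~\ref{lemma:generator3_rate} are needed for each of the five identities) matches what the paper's deferred argument requires.
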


\section*{Acknowledgements}
We are grateful to the anonymous reviewers for their helpful and inspiring comments and suggestions which have improved this paper. Nina Munkholt Jakobsen was supported by the Danish Council for Independent Research -- Natural Science, through a grant to Susanne Ditlevsen. Michael S\o rensen was supported by the Center for Research in Econometric Analysis of Time Series funded by the Danish National Research Foundation. The research is part of the Dynamical Systems Interdisciplinary Network funded by the University of Copenhagen Programme of Excellence.
\bibliographystyle{apalike}
\bibliography{jumpdiff-1_v2} 

\begin{thebibliography}{}

\bibitem[Bandi and Nguyen, 2003]{bandi2003}
Bandi, F.~M. and Nguyen, T.~H. (2003).
\newblock On the functional estimation of jump-diffusion models.
\newblock {\em Journal of Econometrics}, 116(1-2):293--328.

\bibitem[Barndorff-Nielsen and {S\o rensen}, 1994]{oebnms}
Barndorff-Nielsen, O.~E. and {S\o rensen}, M. (1994).
\newblock A review of some aspects of asymptotic likelihood theory for
  stochastic processes.
\newblock {\em International Statistical Review}, 62:133--165.

\bibitem[Becheri et~al., 2016]{becheri2016}
Becheri, I.~G., Drost, F.~C., and Werker, B. J.~M. (2016).
\newblock Asymptotic inference for jump diffusions with state-dependent
  intensity.
\newblock {\em Scandinavian Journal of Statistics}, 43(2):520--542.

\bibitem[Bibby and S{\o}rensen, 1995]{bibby1995}
Bibby, B.~M. and S{\o}rensen, M. (1995).
\newblock Martingale estimation functions for discretely observed diffusion
  processes.
\newblock {\em Bernoulli}, 1(1-2):17--39.

\bibitem[Cl{\'e}ment and Gloter, 2015]{clement2015}
Cl{\'e}ment, E. and Gloter, A. (2015).
\newblock Local asymptotic mixed normality property for discretely observed
  stochastic differential equations driven by stable {L}{\'e}vy processes.
\newblock {\em Stochastic Processes and their Applications}, 125:2316--2352.

\bibitem[De~Jong et~al., 2001]{dejong2001}
De~Jong, F., Drost, F.~C., and Werker, B. J.~M. (2001).
\newblock A jump-diffusion model for exchange rates in a target zone.
\newblock {\em Statistica Neerlandica}, 55(3):270--300.

\bibitem[Ditlevsen and Greenwood, 2013]{ditlevsen2013}
Ditlevsen, S. and Greenwood, P. (2013).
\newblock The {Morris--Lecar} neuron model embeds a leaky integrate-and-fire
  model.
\newblock {\em Journal of Mathematical Biology}, 67(2):239--259.

\bibitem[Filipovi{\'c} et~al., 2013]{filipovic2013}
Filipovi{\'c}, D., Mayerhofer, E., and Schneider, P. (2013).
\newblock Density approximations for multivariate affine jump-diffusion
  processes.
\newblock {\em Journal of Econometrics}, 176(2):93--111.

\bibitem[Florens-Zmirou, 1989]{florenszmirou1989}
Florens-Zmirou, D. (1989).
\newblock Approximate discrete-time schemes for statistics of diffusion
  processes.
\newblock {\em Statistics: A Journal of Theoretical and Applied Statistics},
  20(4):547--557.

\bibitem[Genon-Catalot and Jacod, 1993]{genoncatalot1993}
Genon-Catalot, V. and Jacod, J. (1993).
\newblock On the estimation of the diffusion coefficient for multi-dimensional
  diffusion processes.
\newblock {\em Annales de l'institut Henri Poincar{\'e} (B) Probabilit{\'e}s et
  Statistiques}, 29(1):119--151.

\bibitem[Giesecke and Schwenkler, 2018]{giesecke2017}
Giesecke, K. and Schwenkler, G. (2018).
\newblock Simulated likelihood estimators for discretely observed
  jump-diffusions.
\newblock Preprint. Available at
  \href{https://drive.google.com/file/d/0B1BViF7xQA_wcEpnMEJoVTVzSWc/view}{http://www.gustavo-schwenkler.com/research}.

\bibitem[Giraudo and Sacerdote, 1997]{giraudo1997}
Giraudo, M.~T. and Sacerdote, L. (1997).
\newblock Jump-diffusion processes as models for neuronal activity.
\newblock {\em Biosystems}, 40(1-2):75--82.

\bibitem[Gloter et~al., 2018]{gloter2018}
Gloter, A., Loukianova, D., and Mai, H. (2018).
\newblock Jump filtering and efficient drift estimation for {L{\'e}vy}-driven
  {SDEs}.
\newblock {\em The Annals of Statistics}, 46(4):1445--1480.

\bibitem[Gobet, 2002]{gobet2002}
Gobet, E. (2002).
\newblock {LAN} property for ergodic diffusions with discrete observations.
\newblock {\em Annales de l'Institut Henri Poincare (B) Probability and
  Statistics}, 38(5):711--737.

\bibitem[Golden et~al., 2017]{golden2017}
Golden, M., Garc{\'\i}a-Portugu{\'e}s, E., S{\o}rensen, M., Mardia, K.~V.,
  Hamelryck, T., and Hein, J. (2017).
\newblock A generative angular model of protein structure evolution.
\newblock {\em Molecular Biology and Evolution}, 34(8):2085--2100.

\bibitem[{Gon{\c c}alves} et~al., 2017]{goncalves2017}
{Gon{\c c}alves}, F.~B., {{\L}atuszy{\'n}ski}, K.~G., and {Roberts}, G.~O.
  (2017).
\newblock Exact {Monte Carlo} likelihood-based inference for jump-diffusion
  processes.
\newblock Preprint. Available at:
  \href{https://arxiv.org/abs/1707.00332}{arXiv:1707.00332}.

\bibitem[Guay and Schwenkler, 2017]{guay2017}
Guay, F. and Schwenkler, G. (2017).
\newblock Efficient parameter estimation for multivariate jump-diffusions.
\newblock Paris December 2017 Finance Meeting EUROFIDAI - AFFI. Available at
  \href{https://ssrn.com/abstract=2926508}{https://ssrn.com/abstract=2926508}.

\bibitem[Hall and Heyde, 1980]{hall1980}
Hall, P. and Heyde, C.~C. (1980).
\newblock {\em Martingale Limit Theory and Its Application}.
\newblock Academic Press.

\bibitem[Hermann et~al., 2018]{hermann2018}
Hermann, S., Ickstadt, K., and M{\"u}ller, C.~H. (2018).
\newblock Bayesian prediction for a jump diffusion process -- with application
  to crack growth in fatigue experiments.
\newblock {\em Reliability Engineering \& System Safety}, 179:83 -- 96.

\bibitem[Jacobsen, 2001]{jacobsen2001}
Jacobsen, M. (2001).
\newblock Discretely observed diffusions: Classes of estimating functions and
  small {D}elta-optimality.
\newblock {\em Scandinavian Journal of Statistics}, 28(1):123--149.

\bibitem[Jacobsen, 2002]{jacobsen2002}
Jacobsen, M. (2002).
\newblock Optimality and small {D}elta-optimality of martingale estimating
  functions.
\newblock {\em Bernoulli}, 8(5):643--668.

\bibitem[Jacod and S{\o}rensen, 2018]{jacod2018}
Jacod, J. and S{\o}rensen, M. (2018).
\newblock A review of asymptotic theory of estimating functions.
\newblock {\em Statistical Inference for Stochastic Processes}, 21(2):415--434.

\bibitem[Jahn et~al., 2011]{jahn2011}
Jahn, P., Berg, R.~W., Hounsgaard, J., and Ditlevsen, S. (2011).
\newblock Motoneuron membrane potentials follow a time inhomogeneous jump
  diffusion process.
\newblock {\em Journal of Computational Neuroscience}, 31(3):563--579.

\bibitem[Jakobsen, 2015]{phdthesis}
Jakobsen, N.~M. (2015).
\newblock {\em Efficient Estimating Functions for Stochastic Differential
  Equations}.
\newblock PhD thesis, Faculty of Science, University of Copenhagen.
\newblock Available at:
  \href{http://web.math.ku.dk/noter/filer/phd15nmj.pdf}{http://web.math.ku.dk/noter/filer/phd15nmj.pdf}.

\bibitem[Jakobsen and S{\o}rensen, 2017]{jakobsen2017}
Jakobsen, N.~M. and S{\o}rensen, M. (2017).
\newblock Efficient estimation for diffusions sampled at high frequency over a
  fixed time interval.
\newblock {\em Bernoulli}, 23(3):1874--1910.

\bibitem[Kallenberg, 1997]{kallenberg1997}
Kallenberg, O. (1997).
\newblock {\em Foundations of Modern Probability}.
\newblock Springer series in statistics: Probability and its applications.
  Springer.

\bibitem[Kawai, 2013]{kawai2013-2}
Kawai, R. (2013).
\newblock Local asymptotic normality property for {O}rnstein-{U}hlenbeck
  processes with jumps under discrete sampling.
\newblock {\em Journal of Theoretical Probability}, 26(4):932--967.

\bibitem[Kawai and Masuda, 2013]{kawai2013}
Kawai, R. and Masuda, H. (2013).
\newblock Local asymptotic normality for normal inverse {G}aussian {L}{\'e}vy
  processes with high-frequency sampling.
\newblock {\em ESAIM: Probability and Statistics}, 17:13--32.

\bibitem[Kessler, 1997]{kessler1997}
Kessler, M. (1997).
\newblock Estimation of an ergodic diffusion from discrete observations.
\newblock {\em Scandinavian Journal of Statistics}, 24:211--229.

\bibitem[Kessler and {S\o rensen}, 1999]{kessler1999}
Kessler, M. and {S\o rensen}, M. (1999).
\newblock Estimating equations based on eigenfunctions for a discretely
  observed diffusion process.
\newblock {\em Bernoulli}, 5(2):299--314.

\bibitem[Kohatsu-Higa et~al., 2014]{kohatsuhiga2014}
Kohatsu-Higa, A., Nualart, E., and Tran, N.~K. (2014).
\newblock {LAN} property for a simple {L}{\'e}vy process.
\newblock {\em Comptes Rendus Mathematique}, 352(10):859--864.

\bibitem[Kohatsu-Higa et~al., 2017]{kohatsuhiga2017}
Kohatsu-Higa, A., Nualart, E., and Tran, N.~K. (2017).
\newblock {LAN} property for an ergodic diffusion with jumps.
\newblock {\em Statistics}, 51:419--454.

\bibitem[Kou, 2002]{kou2002}
Kou, S.~G. (2002).
\newblock A jump-diffusion model for option pricing.
\newblock {\em Management Science}, 48(8):1086--1101.

\bibitem[Lee and Mykland, 2008]{lee2008}
Lee, S.~S. and Mykland, P.~A. (2008).
\newblock Jumps in financial markets: A new nonparametric test and jump
  dynamics.
\newblock {\em Review of Financial Studies}, 21(6):2535--2563.

\bibitem[Li and Chen, 2016]{li2016}
Li, C. and Chen, D. (2016).
\newblock Estimating jump--diffusions using closed-form likelihood expansions.
\newblock {\em Journal of Econometrics}, 195(1):51--70.

\bibitem[Long et~al., 2017]{long2017}
Long, H., Ma, C., and Shimizu, Y. (2017).
\newblock Least squares estimators for stochastic differential equations driven
  by small {L{\'e}vy} noises.
\newblock {\em Stochastic Processes and their Applications}, 127(5):1475--1495.

\bibitem[Mai, 2014]{mai2014}
Mai, H. (2014).
\newblock Efficient maximum likelihood estimation for {L}{\'e}vy-driven
  {O}rnstein-{U}hlenbeck processes.
\newblock {\em Bernoulli}, 20(2):919--957.

\bibitem[Mancini, 2001]{mancini2001}
Mancini, C. (2001).
\newblock Disentangling the jumps of the diffusion in a geometric jumping
  {B}rownian motion.
\newblock {\em Giornale dell'Istituto Italiano degli Attuari}, 64:19--47.

\bibitem[Mancini, 2004]{mancini2004}
Mancini, C. (2004).
\newblock Estimation of the characteristics of the jumps of a general
  {P}oisson-diffusion model.
\newblock {\em Scandinavian Actuarial Journal}, 2004(1):42--52.

\bibitem[Mancini, 2009]{mancini2009}
Mancini, C. (2009).
\newblock Non-parametric threshold estimation for models with stochastic
  diffusion coefficient and jumps.
\newblock {\em Scandinavian Journal of Statistics}, 36(2):270--296.

\bibitem[Mancini and Ren{\`o}, 2011]{mancini2011}
Mancini, C. and Ren{\`o}, R. (2011).
\newblock Threshold estimation of {M}arkov models with jumps and interest rate
  modeling.
\newblock {\em Journal of Econometrics}, 160(1):77--92.

\bibitem[Masuda, 2007]{masuda2007}
Masuda, H. (2007).
\newblock Ergodicity and exponential beta-mixing bounds for multidimensional
  diffusions with jumps.
\newblock {\em Stochastic Processes and their Applications}, 117(1):35--56.

\bibitem[Masuda, 2008]{masuda2008}
Masuda, H. (2008).
\newblock On stability of diffusions with compound-{P}oisson jumps.
\newblock {\em Bulletin of Informatics and Cybernetics}, 40:61--74.

\bibitem[Masuda, 2011]{masuda2011}
Masuda, H. (2011).
\newblock Approximate quadratic estimating function for discretely observed
  {L}{{\'e}}vy driven {SDE}s with application to a noise normality test.
\newblock {\em RIMS K{\^o}ky{\^u}roku 1752}, pages 113--131.

\bibitem[Masuda, 2013]{masuda2013}
Masuda, H. (2013).
\newblock Convergence of {G}aussian quasi-likelihood random fields for ergodic
  {L}{{\'e}}vy driven {SDE} observed at high frequency.
\newblock {\em The Annals of Statistics}, 41(3):1593--1641.

\bibitem[Merton, 1976]{merton1976}
Merton, R.~C. (1976).
\newblock Option pricing when underlying stock returns are discontinuous.
\newblock {\em Journal of Financial Economics}, 3(1-2):125--144.

\bibitem[Musila and L{\'a}nsk{\'y}, 1991]{musila1991}
Musila, M. and L{\'a}nsk{\'y}, P. (1991).
\newblock Generalized {S}tein's model for anatomically complex neurons.
\newblock {\em Biosystems}, 25(3):179--191.

\bibitem[Ogihara and Yoshida, 2011]{ogihara2011}
Ogihara, T. and Yoshida, N. (2011).
\newblock Quasi-likelihood analysis for the stochastic differential equation
  with jumps.
\newblock {\em Statistical Inference for Stochastic Processes}, 14(3):189--229.

\bibitem[Patel and Kosko, 2008]{patel2008}
Patel, A. and Kosko, B. (2008).
\newblock Stochastic resonance in continuous and spiking neuron models with
  {L{\'e}vy} noise.
\newblock {\em Neural Networks, IEEE Transactions on}, 19(12):1993--2008.

\bibitem[Schmisser, 2014]{schmisser2014}
Schmisser, {\'E}. (2014).
\newblock Non-parametric adaptive estimation of the drift for a jump diffusion
  process.
\newblock {\em Stochastic Processes and their Applications}, 124(1):883--914.

\bibitem[Shimizu, 2006a]{shimizu2006-2}
Shimizu, Y. (2006a).
\newblock Density estimation of {L}{\'e}vy measures for discretely observed
  diffusion processes with jumps.
\newblock {\em Journal of The Japan Statistical Society}, 36(1):37--62.

\bibitem[Shimizu, 2006b]{shimizu2006-3}
Shimizu, Y. (2006b).
\newblock M-estimation for discretely observed ergodic diffusion processes with
  infinitely many jumps.
\newblock {\em Statistical Inference for Stochastic Processes}, 9(2):179--225.

\bibitem[Shimizu, 2008]{shimizu2008}
Shimizu, Y. (2008).
\newblock Some remarks on estimation of diffusion coefficients for
  jump-diffusions from finite samples.
\newblock {\em Bulletin of Informatics and Cybernetics}, 40:51--60.

\bibitem[Shimizu, 2009]{shimizu2009}
Shimizu, Y. (2009).
\newblock Functional estimation for {L}{\'e}vy measures of semimartingales with
  {P}oissonian jumps.
\newblock {\em Journal of Multivariate Analysis}, 100(6):1073--1092.

\bibitem[Shimizu and Yoshida, 2006]{shimizu2006}
Shimizu, Y. and Yoshida, N. (2006).
\newblock Estimation of parameters for diffusion processes with jumps from
  discrete observations.
\newblock {\em Statistical Inference for Stochastic Processes}, 9:227--277.

\bibitem[S{\o}rensen, 1991]{soerensen1991}
S{\o}rensen, M. (1991).
\newblock Likelihood methods for diffusions with jumps.
\newblock In Prabhu, N.~U. and Basawa, I.~V., editors, {\em Statistical
  Inference in Stochastic Processes}, pages 67--105. Marcel Dekker, New York.

\bibitem[S{\o}rensen, 2012]{soerensen2012}
S{\o}rensen, M. (2012).
\newblock Estimating functions for diffusion-type processes.
\newblock In Kessler, M., Lindner, A., and S{\o}rensen, M., editors, {\em
  Statistical Methods for Stochastic Differential Equations}, pages 1--107.
  Chapman \& Hall.

\bibitem[S{\o}rensen, 2017]{efficient}
S{\o}rensen, M. (2017).
\newblock Efficient estimation for ergodic diffusions sampled at high
  frequency.
\newblock Preprint. Available at:
  \href{http://web.math.ku.dk/~michael/efficient.pdf}{http://web.math.ku.dk/{\ttfamily\char126}michael/efficient.pdf}.

\bibitem[Stramer et~al., 2010]{stramer2010}
Stramer, O., Bognar, M., and Schneider, P. (2010).
\newblock Bayesian inference for discretely sampled {M}arkov processes with
  closed-form likelihood expansions.
\newblock {\em Journal of Financial Econometrics}, 8(4):450--480.

\bibitem[Tran, 2017]{tran2017}
Tran, N.~K. (2017).
\newblock {LAN} property for an ergodic {Ornstein--Uhlenbeck} process with
  {Poisson} jumps.
\newblock {\em Communications in Statistics - Theory and Methods},
  46(16):7942--7968.

\bibitem[Uchida, 2004]{uchida2004}
Uchida, M. (2004).
\newblock Estimation for discretely observed small diffusions based on
  approximate martingale estimating functions.
\newblock {\em Scandinavian Journal of Statistics}, 31(4):553--566.

\bibitem[Wang and Zhou, 2017]{wang2017}
Wang, H. and Zhou, L. (2017).
\newblock Bandwidth selection of nonparametric threshold estimator in
  jump--diffusion models.
\newblock {\em Computers \& Mathematics with Applications}, 73(2):211--219.

\bibitem[Yu, 2007]{yu2007}
Yu, J. (2007).
\newblock Closed-form likelihood approximation and estimation of
  jump-diffusions with an application to the realignment risk of the {C}hinese
  {Y}uan.
\newblock {\em Journal of Econometrics}, 141(2):1245--1280.

\bibitem[Zhou, 2017]{zhou2017}
Zhou, L. (2017).
\newblock Double-smoothed drift estimation of jump-diffusion model.
\newblock {\em Communications in Statistics - Theory and Methods},
  46(8):4137--4149.

\end{thebibliography}
\end{document}